	\newtheorem{theorem}{Theorem}
	\newtheorem{corollary}{Corollary}
	\newtheorem{lemma}{Lemma}
	\newtheorem{remark}[theorem]{Remark}
	\newtheorem{definition}[theorem]{Definition}
\newcommand{\cE}{{\mathcal E}}
\newcommand{\cX}{{\mathcal X}}
\newcommand{\cY}{{\mathcal Y}}
\newcommand{\bN}{{\mathbb N}}
	\newcommand{\mkv}{-\!\!\!\!\minuso\!\!\!\!-} 
\newcommand{\typ}[1]{T_\delta^n(#1)}
\newcommand{\toas}[1]{\xrightarrow[#1]{}}
\providecommand{\abs}[1]{\left|#1\right|}
\begin{document}
\sloppy
	 
\title{Secrecy Capacity Region of Some \\ Classes of Wiretap Broadcast Channels} 

\author{
  \IEEEauthorblockN{Meryem~Benammar and Pablo~Piantanida}\\
  
\thanks{
The material in this paper was submitted in part to the IEEE Information Theory Workshop, Tasmania, Australia,  2-5 November 2014 and the 49th Annual Allerton Conference on Communications, Control, and Computing 2014. This work was accomplished when Meryem~Benammar was with the Dept. of Telecommunications at CentraleSupelec.}
\thanks{Meryem~Benammar is with the Mathematical and Algorithmic Sciences Lab, France Research Center, Huawei Technologies Co., Ltd (e-mail: meryem.benammar@huawei.com). }
\thanks{Pablo~Piantanida is with the Laboratoire des Signaux et Syst\`emes (L2S UMR 8506) at CentraleSupelec-CNRS-Universit\'e Paris-Sud, France (e-mail:  pablo.piantanida@centralesupelec.fr).}
\thanks{Copyright (c) 2014 IEEE. Personal use of this material is permitted.  However, permission to use this material for any other purposes must be obtained from the IEEE by sending a request to pubs-permissions@ieee.org.}}

\maketitle 
\begin{abstract}
This work investigates the secrecy capacity of the Wiretap Broadcast Channel (WBC) with an external eavesdropper where a source wishes to communicate two private messages over a Broadcast Channel (BC) while keeping them secret from the eavesdropper. We derive a non-trivial outer bound on the secrecy capacity region of this channel which, in absence of security constraints, reduces to the best known outer bound to the capacity of the standard BC. An inner bound is also derived which follows the behavior of both the best known inner bound for the BC and the Wiretap Channel. These bounds are shown to be tight for the deterministic BC with a general eavesdropper, the semi-deterministic BC with a more-noisy eavesdropper and the Wiretap BC where users exhibit a less-noisiness order between them. Finally,  by rewriting our outer bound to encompass the characteristics of parallel channels, we also derive the secrecy capacity region of the product of two inversely less-noisy BCs with a more-noisy eavesdropper. We illustrate our results by studying the impact of security constraints on the capacity of the WBC with binary erasure (BEC) and binary symmetric (BSC) components.
\end{abstract}


\section{Introduction} 
Information theoretic secrecy was first introduced by Shannon in his seminal work~\cite{Shannon1949}. He investigates a communication system between a source, a \emph{legitimate} receiver and an \emph{eavesdropper} where the source and the legitimate receiver share a secret key. It is shown that, to achieve perfect secrecy, one has to let the key rate be at least as large as the message rate. This result motivated the work~\cite{Wyner1975} by Wyner  who introduced the notion of Wiretap Channel. In such a setting, a source wishes to transmit a message to a \emph{legitimate} receiver in the presence of an \emph{eavesdropper} but without resorting to a shared key. Besides communicating reliably to the legitimate receiver at a maximum rate, the source has to maximize the equivocation at the eavesdropper so that it cannot recover the message sent over the channel. In the case of perfect secrecy, the conditional probability of the message given the eavesdropper's observation has to be approximately uniform over the set of messages, i.e., there is no leakage of information  to the eavesdropper. The surprising result of Wyner's work~\cite{Wyner1975} is that the use of a secret key is no longer required to guarantee a positive equivocation rate or even perfect secrecy. Csisz\'ar \& K\"orner's~\cite{CsiszarConfidential} generalized this result --first derived with the assumption of a degraded eavesdropper--  to the general BC and where the source must also transmit a common message to both users. As a matter of fact, an analysis of the corresponding rate region regarding the necessity of two auxiliary random variables, namely,  rate splitting and channel prefixing, was carried out by Ozel \& Ulukus in~\cite{Ozel2011}. It was shown that under specific channel ordering the rate region requires only one or even none of these variables.

Several multi-terminal Wiretap networks were studied, e.g., the MAC Wiretap Channel has been investigated by 
Liang \&  Poor in~\cite{Liang2006} while physical layer security in broadcast networks was studied by Liang {\it et al.} in~\cite{Liang-Poor-Shamai2009} though, the capacity region is yet to be fully characterized.

	\subsection*{Related works}
The Wiretap Broadcast Channel (WBC) was first studied under two types of secrecy constraints. The Broadcast Channel (BC) with confidential messages where the encoder transmits two private messages, each to its respective user, while keeping both of them secret from the opposite user. In~\cite{LiuMaric2008}, inner and outer bounds on the secrecy capacity were derived. The secrecy capacity of the semi-deterministic BC with confidential messages is derived in~\cite{Zhao2009} while in~\cite{Kang2009} it is assumed that only one message has to be kept secret from the other user and the capacity of the semi-deterministic eavesdropper setting was characterized. As for the Gaussian MIMO BC  with confidential messages, it was considered in the works of Liu {\it et al.} in~\cite{4787620, 5550390} while the Gaussian MIMO multi-receiver wiretap channel was addressed by Ekrem \& Ulukus in~\cite{5730586} (see~\cite{ITS-book,6582704} and references therein).

 An alternate setting is the BC with an \emph{external eavesdropper} where the secrecy requirement consists in that all messages be kept secret from the eavesdropper which is different from both users. Following this setting, the capacity of some classes of ordered and product BCs were first investigated by Ekrem \& Ulukus in~\cite{EkremUlukus} \cite{EU13}, where the legitimate users' channels exhibit a degradedness order and the eavesdropper is \emph{more-noisy} than all legitimate users' channels. In a concurrent work by Bagherikaram {\it et al.} in~\cite{KhandaniGaussian}, the secrecy capacity was characterized for the case where the eavesdropper is \emph{degraded} towards the weakest user and also for its corresponding additive white Gaussian noise (AWGN) channel model. 

\subsection*{Main contributions}
In this work, we consider the Wiretap BC where the encoder transmits two private messages to two users while it wishes to keep them secret from an external eavesdropper. We derive both an outer bound and an inner bound on the secrecy capacity region of this setting. The outer bound is obtained through a careful single-letter derivation that addresses the main difficulty of our setting which relies on upper bounding techniques for three terminals' problems. It should be emphasized that both converse techniques for the standard BC and the Wiretap Channel require the use of Csisz\'ar \& K\"orner's sum-identity~\cite{CsiszarConfidential} which does not apply to more than two output sequences. Besides this well-known difficulty, our outer bound clearly copies the mathematical form and behavior of the best known outer bound for the BC without an eavesdropper~\cite{NairELGamal2006}. As for the inner bound, our techniques simply follow the notion of \emph{double binning}, \emph{superposition coding} and \emph{bit recombination}. It also generalizes the inner bound of \cite{EU13} in the case of secure messages only, and, in the absence of secrecy requirement, the obtained inner bound naturally reduces to Marton's inner bound for the BC with common message~\cite{MartonInnerBound79}.
 	 
By developing an equivalent but non-straightforward representation of the outer bound, we show that it matches the inner bound for several novel classes of non-degraded Wiretap Broadcast Channels. More precisely, we are able to characterize the secrecy region of the following settings: 
\begin{enumerate}
\item The deterministic BC with an arbitrary eavesdropper where both legitimate users observe a deterministic function of the input, 
\item The semi-deterministic BC with a more-noisy eavesdropper where only one of the legitimate users is a deterministic channel while the other is less-noisy than the eavesdropper, 
\item The less-noisy BC with an eavesdropper degraded respect to the best legitimate user,   
\item The product of two inversely less-noisy BC with a more-noisy eavesdropper. 
\end{enumerate}
Besides of novel secrecy capacity results, the outer and the inner bound also recover some known results, e.g., the degraded BC with a more-noisy eavesdropper~\cite{EkremUlukus} which generalizes the degraded BC with a degraded eavesdropper~\cite{KhandaniGaussian}. 

We finally illustrate the results by investigating the impact of secrecy constraints on the capacity of the Wiretap Broadcast Channel with binary erasure (BEC) and binary symmetric (BSC) components. To his end, we derive the secrecy capacity region of a Less Noisy BEC/BSC BC with a degraded BSC eavesdropper and compare it to the standard capacity region, i.e. without secrecy constraints. In this setting, the central difficulty arises from the converse part for which we were able to show, through convexity arguments, a novel inequality on the conditional entropy of binary sequences. Indeed, this inequality appears to be crucial in the study of the WBC with BSC and BEC components, similar to Mrs. Gerber's lemma~\cite{Wyner1973} for the binary symmetric BC. The analysis of the secrecy capacity region proved that the degraded eavesdropper's impediment can be very severe on the BSC user whilst, it would still allow, for the worst degraded case, for positive rates for the BEC user.  
	
The remainder of this paper is organized as follows. In Section II, we give relevant definitions of the Wiretap BC setting and the main outer and inner bounds. We then show in Section III that the obtained bounds are tight for various classes of WBCs. In Section IV, we fully characterize the capacity region of the BEC/BSC Broadcast Channel with a BSC eavesdropper. Sections V, resp. VI are dedicated to the corresponding proofs of the outer, resp. inner bounds. Last, summary and concluding remarks are drawn in Section VIII.
 	
\subsection*{Notations}
For any sequence~$(x_i)_{i\in\mathbb{N}_+}$, notation $x_k^n$ stands for the collection $(x_k,x_{k+1},\dots, x_n)$. $x_1^n$ is simply denoted by $x^n$. Entropy is denoted by $H(\cdot)$, and mutual information by $I(\cdot;\cdot)$. $\mathds{E}$ resp. $\mathds{P}$ denote the expectation resp. the generic probability while the notation $P$ is specific to the probability of a random variable (rv). $\|\cX\|$ stands for the cardinality of the set $\cX$. We denote typical and conditional typical sets by $\typ{X}$ and $\typ{Y|x^n}$, respectively (see Appendix~\ref{sec:typical} for details). Let $X$, $Y$ and $Z$ be three random variables on some alphabets with probability distribution~$p$. If $p(x|yz)=p(x|y)$ for each $x,y,z$, then they form a Markov chain, which is denoted by $X\mkv Y\mkv Z$. The \emph{binary entropy function} $h_2$ is defined $\forall x \in [0:1]$ by 
$$ 
h_2(x)  \triangleq - x \log_2(x) - (1-x) \log_2 (1-x), 
$$ 
and the binary convolution operator $(\star)$ as: $x \star y \triangleq x  (1-y) + (1-x) y$ for all $(x ,y) \in [0:1]^2$ . 
		
We will use FME to designate \emph{Fourier-Motzkin} elimination. 
			
		
	\section{System Model and secrecy capacity bounds}
Hereafter, we introduce the Wiretap Broadcast Channel (WBC) as represented in Fig.~\ref{WBC}, and then derive both an outer and an inner bound on its secrecy capacity region. 
	 
	\subsection{The Wiretap Broadcast Channel}
	\begin{figure}[t]
   \centering
   \includegraphics[scale=.6]{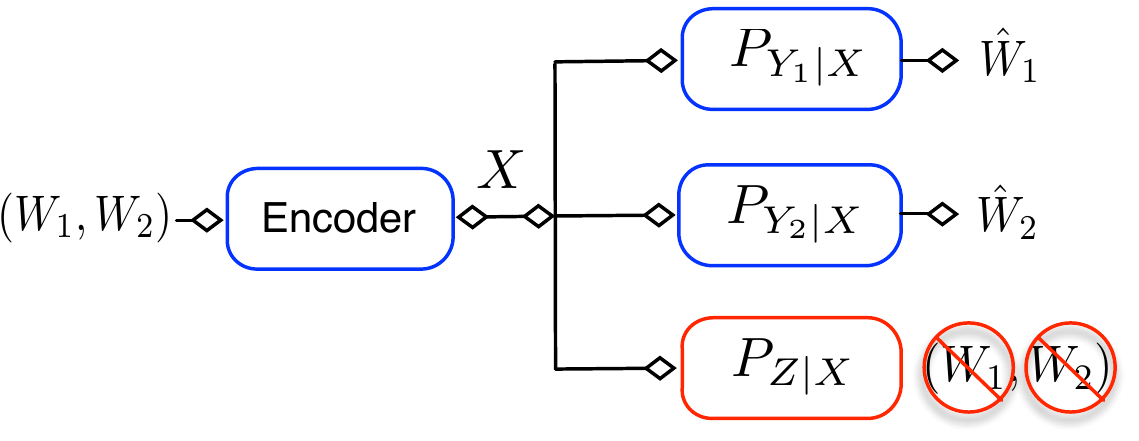} 
   \caption{The Wiretap Broadcast Channel (WBC).}
   \label{WBC} 
 \end{figure} 
	\begin{itemize}
	\item Consider an n-th extension of a three-user memoryless Broadcast Channel:
\begin{equation*}
	  \mathcal{W}^n  = \big\{P_{Y_1^{ n}Y_2^{ n}Z^{ n}|X^n }: \cX^n  \longmapsto \cY^n_1 \times \cY^n_2 \times \mathcal{Z}^n \big\}  \ , 
\end{equation*} 
defined by the conditional p.m.f: 
\begin{equation*}
  P_{Y_1^{n}Y_2^{n}Z^{n}|X^n }  \triangleq \prod^{n}\limits_{i=1} P_{Y_{1,i}Y_{2,i}Z_i|X_i}  \ .
\end{equation*} 
	\item An ${(M_{1n}, M_{2n}, n)}$-code for this channel consists of: two sets of messages $\mathcal{M}_1$ and $\mathcal{M}_2$, an encoding function that assigns an n-sequence $x^n (w_1, w_2) $ to each message pair $( w_1, w_2) \in  \mathcal{M}_1 \otimes \mathcal{M}_1 $ and decoding functions, one at each receiver, that assign to the received signal an estimate message $(\hat{w}_j)$ in $ \mathcal{M}_j ,  j\in \{ 1,2\} $ or an error.\\
	 The probability of error is given by: 
\begin{IEEEeqnarray*}{rCl}
 P_e^{(n)}  &\triangleq&  \mathds{P} \biggl( \bigcup_{j\in\{1,2\}} \bigl\{ \hat{W}_j \neq  W_j  \bigr\} \biggr). 
\end{IEEEeqnarray*}
	\item A rate pair $( R_1, R_2)$ is said to be achievable if there exists an ${(M_{1n}, M_{2n}, n)}$-code satisfying: 
\begin{IEEEeqnarray*}{rCl}
	  \liminf\limits_{n \rightarrow \infty} \frac{1}{n} \log_2 M_{jn} &\geq & R_j \,\,\,\textrm{ $\forall \,j \in \{1,2\}$}\,,\\
	  \limsup\limits_{n \rightarrow \infty}  \, P_e^{(n)} &=& 0 \ ,  \\
	  \liminf\limits_{n \rightarrow \infty} \frac{1}{n} H(W_1 W_2|Z^n) &\geq & R_1 + R_2 \ . 
\end{IEEEeqnarray*}
Note that the last constraint implies that for some sequence  $\epsilon_n$ of positive values:  
\begin{equation*}
I(W_1 W_2 ; Z^n ) \leq n \epsilon_n \ , 
\end{equation*}
which implies individual secrecy constraints given by
\begin{equation*} 
I( W_j ; Z^n ) \leq n \epsilon_n \ , \ \textrm{$\forall \,j \in \{ 1,2\}$}.\ 
\end{equation*}
\item The secrecy capacity region is the closure of the set of all achievable rate pairs $(R_1,R_2)$.
\end{itemize}
	\subsection{Ordered Broadcast Channels \cite{ElGamal02}}
	A Broadcast Channel $X \rightarrow (Y,Z)$ is said to be degraded, say $Z$ is degraded with respect to $Y$ if the following holds:
	\begin{equation*}
	\exists \ q(z|y) \ \  \text{such that} \  \  P(z|x) = \sum_{y\in \mathcal{Y}}  P(y|x)Q(z|y) \ , 
	\end{equation*}
	A channel output $Y$ is said to be ``less-noisy" than $Z$, or $Z$ is said to be ``more-noisy" than $Y$ if
	\begin{equation*}
	\forall P_U \ \text{ such that } \ U\mkv X\mkv (Y,Z) \ , \  I(U;Z) \leq I(U;Y) \ . 
	\end{equation*}
	\subsection{Outer bound on the secrecy capacity region of the WBC}
We next present an outer bound on the secrecy capacity region of the WBC under study. This bound originates from a careful single-letter characterization and accounts for different channel configurations which provides the secrecy capacity region for some new classes of wiretap broadcast channels.
\begin{theorem}[Outer bound]\label{TheoremOuter}
The secrecy capacity region of the Wiretap BC with an external eavesdropper is included in the set of rate pairs satisfying: 
x\begin{IEEEeqnarray}{rcl}\label{GeneralOuterBoundWBC}
     R_1 &\,\leq\,& 	I( U_1; Y_1 |T V_1 ) - I(U_1; Z  |TV_1)  \ ,\\
    R_1 &\leq& 	I( U_1; Y_1 Y_2 |T V_1  V_2) - I(U_1; Z  |TV_1 V_2) \ , \\ 
    R_1 &\leq& 	I( U_1; Y_1 |T V_1 U_2 ) - I(U_1; Z  |TV_1 U_2)  \ ,\\
    R_1 &\leq& 	I( U_1; Y_1 Y_2 |T V_1  U_2 V_2) - I(U_1; Z  |TV_1 U_2 V_2) \qquad   \\ 
	R_2 &\leq&  I( U_2; Y_2 |T V_2 ) - I(U_2; Z  |TV_2)  \ ,\\
	R_2 &\leq&  I( U_2; Y_2 Y_1 |T V_1 V_2 ) - I(U_2; Z  |T V_1 V_2)  \ ,\\
	R_2 &\leq&  I( U_2; Y_2 |T V_2 U_1) - I(U_2; Z  |TV_2U_1)  \ ,\\
	R_2 &\leq&  I( U_2; Y_2 Y_1 |TU_1 V_1 V_2 ) - I(U_2; Z  |T U_1 V_1 V_2)  \qquad   \\
	R_1 + R_2 &\leq& I(X; Y_2|T Z V_1) + I( U_1 S_1; Y_1 |T V_1 ) \nonumber  \\ 
	&& \qquad \qquad - I(U_1 S_1; Z Y_2 |T V_1) \ ,\\ 
	R_1 + R_2 &\leq& I(X; Y_2|T Z V_1 V_2) + I( U_1 S_1 ; Y_1 Y_2 |T V_1 V_2 ) \nonumber  \\ 
	&& \qquad \qquad - I(U_1 S_1 ; Z Y_2 |T V_1 V_2) \ ,\\
	R_1 + R_2 &\leq& I(X; Y_1|T Z V_2) + I( U_2 S_2; Y_2 |T V_2 )  \nonumber  \\ 
	&& \qquad \qquad- I(U_2 S_2; Z Y_1 |T V_2) \ , \\ 
\hspace{-2mm}	R_1 + R_2 &\leq& I(X; Y_1|T Z V_1 V_2) + I( U_2 S_2; Y_2 Y_1|T V_1 V_2 )  \nonumber  \\ 
	&& \qquad \qquad- I(U_2 S_2; Z Y_1 |T V_1 V_2) \ , 
\end{IEEEeqnarray}
	for some joint input p.m.f 
	$$
	P_{TV_1V_2 U_1 U_2 S_1 S_2X}=P_{T V_1V_2 U_1 U_2 S_1 S_2} P_{X| U_1  U_2  S_1 S_2}
	$$ 
	such that $ (T,V_1,V_2,S_1,S_2,U_1,U_2) \mkv X \mkv (Y_1,Y_2,Z)$.  
\end{theorem}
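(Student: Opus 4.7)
The plan is to reduce the problem to Csisz\'ar--K\"orner-type arguments applied several times to pairs of output sequences chosen among $(Y_1^n, Y_2^n, Z^n)$, each time absorbing the third sequence into the conditioning of the auxiliary random variables. First I would derive the $n$-letter skeleton inequalities from Fano's inequality and the joint secrecy requirement. Since the secrecy constraint yields $I(W_1W_2;Z^n)\leq n\epsilon_n$, and therefore the individual relaxations $I(W_j;Z^n)\leq n\epsilon_n$, we obtain
\begin{IEEEeqnarray*}{rCl}
nR_j &\leq& I(W_j;Y_j^n) - I(W_j;Z^n) + n\epsilon_n,\\
n(R_1+R_2) &\leq& I(W_1W_2;Y_1^nY_2^n) - I(W_1W_2;Z^n) + n\epsilon_n.
\end{IEEEeqnarray*}
These two templates can be further weakened by adding to the first mutual information a ``genie'' such as $Y_{3-j}^n$, $W_{3-j}$, or $(W_{3-j},Y_{3-j}^n)$, which directly produces the four variants listed in (1)--(4), the four variants in (5)--(8), and the four sum-rate inequalities (9)--(12).

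Next I would single-letterize each of these $n$-letter inequalities with the Csisz\'ar--K\"orner sum identity. For the individual-rate bounds I would write, for instance,
\begin{equation*}
 I(W_1;Y_1^n)-I(W_1;Z^n)=\sum_{i=1}^n\bigl[I(W_1Z_{i+1}^n;Y_{1,i}\mid Y_1^{i-1})-I(W_1Y_1^{i-1};Z_i\mid Z_{i+1}^n)\bigr],
\end{equation*}
and analogous expansions after inserting $Y_2$-related past/future sequences into the conditioning. The auxiliary variables are then identified at time $i$ as combinations of past and future output sequences: typically $V_{1,i}$ and $V_{2,i}$ are drawn from $\{Z_{i+1}^n,Y_{1}^{i-1},Y_2^{i-1}\}$, the $U_{j,i}$ encapsulate the message $W_j$ together with the matching side information, and $T_i$ collects the conditioning that is common to all twelve bounds. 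A standard time-sharing index $Q$, uniform on $\{1,\dots,n\}$ and independent of everything else, is appended to $T$ to produce the announced single-letter form.

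The sum-rate inequalities (9)--(12) require a combined mechanism. In (9), for example, user~2 is given $Z^n$ as genie, producing a term of the form $I(X;Y_2^n\mid Z^n,\cdot)$, while the first rate is bounded in a wiretap fashion by treating $(Z^n,Y_2^n)$ as the ``augmented eavesdropper'' for $W_1$; this is what accounts for the presence of $Z Y_2$ in the negative term of~(9) and (10), and symmetric steps give~(11) and (12). The satellite auxiliary $S_j$ emerges as the additional index that has to be carried when one decomposes the single-letterization of $I(W_j;Y_j^n)-I(W_j;ZY_{3-j}^n|\cdot)$ using two nested applications of the sum identity, whereas $U_j$ remains the primary message-carrying auxiliary. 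The Markov constraint $(T,V_1,V_2,S_1,S_2,U_1,U_2)\mkv X\mkv (Y_1,Y_2,Z)$ is inherited from the memorylessness of the channel, since all auxiliaries end up being deterministic functions of $(W_1,W_2)$ and of past/future output sequences, which are conditionally independent of $(Y_{1,i},Y_{2,i},Z_i)$ given $X_i$.

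The main obstacle will be that the Csisz\'ar--K\"orner sum identity fundamentally involves only two output sequences at a time, whereas the problem genuinely has three. The workaround is to apply the identity pairwise and to route the third sequence into the conditioning of the auxiliaries; the price to pay is the need for several variants of each bound, one per ``pairing choice'', and the delicate bookkeeping required so that all twelve inequalities are produced by a \emph{single} underlying joint distribution on $(T,V_1,V_2,S_1,S_2,U_1,U_2,X)$, as demanded by the theorem statement. Finding a coherent global identification of the auxiliaries that simultaneously fits the four individual-$R_1$ bounds, the four individual-$R_2$ bounds, and the four sum-rate bounds is the crux of the proof; the final step is then the standard Fenchel--Eggleston type convexification argument to deduce that the cardinality of the auxiliaries can be bounded and that the resulting region is closed and convex.
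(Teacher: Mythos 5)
Your overall strategy---Fano's inequality plus the joint secrecy constraint, pairwise applications of the Csisz\'ar--K\"orner sum identity with the third output sequence routed into the conditioning, genie-aided variants to generate the four versions of each individual-rate bound, and past/future output sequences as auxiliaries---is essentially the route the paper takes, and your identification of the two-sequence limitation of the sum identity as the central obstacle is on target. Two points, however, separate your sketch from a working proof.

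First, the sum-rate template you write down, $n(R_1+R_2)\leq I(W_1W_2;Y_1^nY_2^n)-I(W_1W_2;Z^n)+n\epsilon_n$, is not the one that single-letterizes to the bounds (9)--(12): their asymmetric form (a wiretap-type difference for $W_1$ alone against the augmented eavesdropper $(Y_2,Z)$, plus a residual $I(X;Y_2|TZV_1)$ term) does not follow by adding a genie to the positive term of this symmetric starting point, whose negative term involves the message \emph{pair} against $Z^n$ only. The paper instead starts from $n(R_1+R_2)\leq I(W_1;Y_1^n)-I(W_1;Y_2^nZ^n)+I(W_1W_2;Y_2^nZ^n)+2n\epsilon_n$, obtained by adding and subtracting $I(W_1;Y_2^nZ^n)$ and using $nR_2\leq I(W_2;Y_2^nZ^n|W_1)+n\epsilon_n$; your third paragraph describes this mechanism correctly, so you should discard the first-paragraph template and promote that description to the actual starting point. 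Second, the ``delicate bookkeeping'' you defer as the crux is precisely where the work lies, and it is resolved by one global identification valid simultaneously for all twelve bounds: $T_i=Z_{i+1}^n$, $V_{1,i}=Y_1^{i-1}$, $V_{2,i}=Y_2^{i-1}$, $U_{j,i}=W_j$, and $S_{1,i}=Y_{2,i+1}^n$ (symmetrically $S_{2,i}=Y_{1,i+1}^n$). The sum-rate derivation then requires a second, nested application of the sum identity---between $Z$ and the pair $(Y_2,Z)$ inside $I(W_1W_2;Y_2^nZ^n)$---after which the term $\sum_i I(W_1W_2;Z_i|Z^{i-1})=I(W_1W_2;Z^n)$ is killed by the \emph{joint} secrecy constraint, and the leftover term is bounded by introducing $X_i$ through the Markov structure; these are exactly the steps that make $S_1$ and the $I(X;Y_2|TZV_1)$ term appear in the stated form, and your sketch leaves them implicit. (Also, no cardinality bound or convexification is needed for the theorem as stated.)
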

\begin{IEEEproof} The proof of this theorem is relegated to Section~\ref{ProofOuterBound}. \end{IEEEproof}

The next corollary proceeds to the reduction of some auxiliary rvs which can be removed without reducing the rate region. This simplifies the complexity of the optimization of the many variables present in the bound.
\begin{corollary}[Outer bound]\label{CorollaryOuter}
The rate region stated in Theorem~\ref{GeneralOuterBoundWBC} implies the next outer bound: 
\begin{IEEEeqnarray}{rcl}\label{OuterBoundWBC}	 
    R_1 &\,\leq\,& 	I( U_1; Y_1 | TV_1 ) - I(U_1; Z  |TV_1) \ , \\
    R_2 &\leq&  I( U_2; Y_2 |T V_2 ) - I(U_2; Z  |TV_2)  \ ,\\
	R_1 + R_2 &\leq& I(X; Y_2| TZ  V_1   ) + I( U_1 ; Y_1 | TV_1 ) \nonumber \\
	&& \qquad \qquad - I(U_1 ; Z Y_2 |TV_1)	\ , \\
	R_1 + R_2 &\leq& I(X; Y_1|T Z V_2) + I( U_2 ; Y_2 |T V_2 ) \nonumber \\
	&& \qquad \qquad - I(U_2 ; Z Y_1 |T V_2) \ ,
	\end{IEEEeqnarray} 
	for some joint input p.m.f $P_{TV_1V_2 U_1 U_2X} $ such that $ (T,V_1,V_2,U_1,U_2) \mkv X \mkv (Y_1,Y_2,Z)$. 
	\end{corollary}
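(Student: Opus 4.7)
The plan is to derive the corollary as a relaxation of Theorem~\ref{TheoremOuter} by retaining only four of its twelve inequalities and eliminating the auxiliary random variables $S_1$ and $S_2$.

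First, I would observe that the corollary's individual rate bounds on $R_1$ and $R_2$ are literally the first and fifth inequalities of the theorem, and they do not involve $S_1, S_2$. Hence, for any joint distribution $P_{T V_1 V_2 U_1 U_2 S_1 S_2 X}$ satisfying the theorem's constraints, the marginal $P_{T V_1 V_2 U_1 U_2 X}$ obtained by integrating out $S_1, S_2$ satisfies these two corollary constraints automatically.

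Second, for the sum-rate bounds in the corollary, I would consider the theorem's ninth and eleventh inequalities, which involve $U_1 S_1$ (resp. $U_2 S_2$). To eliminate $S_1, S_2$, I would redefine the corollary's auxiliaries via the merging $\tilde U_1 = (U_1, S_1)$ and $\tilde U_2 = (U_2, S_2)$. The Markov chain $(T, V_1, V_2, \tilde U_1, \tilde U_2) \mkv X \mkv (Y_1, Y_2, Z)$ follows automatically from the theorem's Markov structure, and the corollary's bounds written in terms of $\tilde U_1, \tilde U_2$ then coincide verbatim with the theorem's ninth and eleventh inequalities. The remaining eight theorem constraints are discarded; since relaxing constraints can only enlarge the achievable rate region, this step preserves the inclusion of the theorem's region in the corollary's.

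The main obstacle is the compatibility of the merged auxiliaries $\tilde U_1, \tilde U_2$ with the corollary's individual rate bounds simultaneously with the sum-rate bounds. Namely, the bound $R_1 \leq I(\tilde U_1; Y_1 \mid T V_1) - I(\tilde U_1; Z \mid T V_1)$ under the merging differs from the theorem's first inequality by the term $I(S_1; Y_1 \mid T V_1 U_1) - I(S_1; Z \mid T V_1 U_1)$, whose sign is a priori undetermined under the bare Markov structure. Resolving this requires either exploiting the theorem's conditional bounds (the third and seventh inequalities, which introduce the required extra conditioning on $U_2$ or $U_1$), or arguing that the theorem's region collapses to the case $S_1 = S_2 = \emptyset$ without loss of generality—an argument that should follow from the observation that $S_1, S_2$ appear only inside the sum-rate bounds, and the Markov structure prevents them from providing strictly more information than $U_1, U_2$ alone once the individual bounds are tight. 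This auxiliary-reduction step is, in my view, the delicate core of the proof, while the rest amounts to bookkeeping.
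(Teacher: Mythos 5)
You have correctly reduced the corollary to the elimination of $S_1$ (and, by symmetry, $S_2$) from the two retained sum-rate bounds, and you have correctly located the obstruction: merging $\tilde U_1=(U_1,S_1)$ reproduces the sum-rate bound verbatim but perturbs the individual bound by $I(S_1;Y_1|TV_1U_1)-I(S_1;Z|TV_1U_1)$, whose sign is uncontrolled. However, you stop at naming this obstruction, and neither of your two suggested resolutions closes it. The theorem's third and seventh inequalities only add conditioning on $U_2$ (resp.\ $U_1$); they say nothing about $S_1$ and cannot fix that sign. The alternative claim that the region ``collapses to $S_1=S_2=\emptyset$'' is also not available pointwise: for some joint distributions the correct substitution is $U_1^\star=(U_1,S_1)$ and for others it is $U_1^\star=U_1$, and neither choice works uniformly, so an argument that always discards $S_1$ would be wrong.

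The paper closes exactly this gap with a short dichotomy argument. Define $f_1(Q)$ as the individual-rate bound evaluated at $U_1$ minus the same expression evaluated at $Q$, and $f_2(Q)$ as the $U_1S_1$-dependent part of the sum-rate bound evaluated at $U_1S_1$ minus the same expression evaluated at $Q$; then $f_1(U_1)=f_2(U_1S_1)=0$, and a direct cancellation gives $f_1(U_1S_1)+f_2(U_1)=-I(S_1;Y_2|TZV_1U_1)\leq 0$. Hence at least one of $f_1(U_1S_1)\leq 0$ or $f_2(U_1)\leq 0$ must hold: in the first case $U_1^\star=(U_1,S_1)$ satisfies both corollary constraints (the individual bound can only increase under the merge), and in the second case $U_1^\star=U_1$ does (the sum-rate bound can only increase when $S_1$ is dropped). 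This case analysis is the missing core of your argument; without it, or an equivalent reduction, the proof is incomplete.
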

	\begin{IEEEproof} The proof is relegated to Section~\ref{EliminationS}. \end{IEEEproof}
It is easy to check that by removing the secrecy constraint, i.e., if $Z$ is dropped, the above rate region reduces to the best known outer bound to the capacity of the standard BC~\cite[Lemma 3.5]{NairELGamal2006}. Moreover, this outer bound will prove to be crucial to characterize the secrecy capacity of several classes of WBCs, as will be stated later on.  
	
	\subsection{Inner bound on the secrecy capacity region of the WBC}
	In this section, we present an inner bound on the secrecy capacity region of the WBC. The coding argument combines both stochastic encoding to achieve secrecy and the standard coding techniques for the BC, i.e.,  superposition coding and random binning to let the sent codewords be arbitrarily dependent. 
	
	\begin{theorem}[Inner bound]\label{TheoremInner}
	The secrecy capacity region of the WBC includes all rate pairs $(R_1,R_2)$ satisfying: 
	\begin{IEEEeqnarray}{rcl}\label{InnerBoundWBC}
	    R_1 &\,\leq\,& I( Q U_1; Y_1 | T ) - I( Q U_1; Z | T )  \ ,\\
		R_2 &\leq& I( Q U_2; Y_2 | T ) - I( Q U_2; Z | T )  \ ,\\
		R_1 + R_2 &\leq& I( U_1; Y_1| T Q ) + I( Q U_2; Y_2 | T ) \nonumber \\
	&& \qquad- I(Q U_1 U_2 ; Z | T ) - I(U_1; U_2 | T Q ) \ ,\\
		R_1 + R_2 &\leq& I( U_2; Y_2| T Q ) + I( Q U_1; Y_1 | T ) \nonumber \\
	&& \qquad  - I(Q U_1 U_2 ; Z | T ) - I(U_1; U_2 | T Q ) \ ,\\
		R_1 + R_2 &\leq& I( Q U_1; Y_1| T ) + I( Q U_2; Y_2 | T )- I(Q U_1 U_2 ; Z | T ) \nonumber \\
	&& \qquad  - I(U_1; U_2 | T Q ) - I(Q ; Z | T) \ , 
		\end{IEEEeqnarray}
	for some joint p.m.f $P_{T Q U_1 U_2 X}$ such that $ (T,Q,U_1,U_2) \mkv X \mkv (Y_1,Y_2,Z)$ and $I( U_2; Y_2| T Q ) + I( U_1; Y_1 | Q T ) \geq  I(U_1; U_2 | T Q ) $.
	\end{theorem}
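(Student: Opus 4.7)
The plan is to prove achievability via a random-coding scheme combining four standard ingredients: a time-sharing sequence $T^n$, a superposition cloud $Q^n$ that carries a rate-split common portion of the two private messages (the \emph{bit recombination} step), Marton double binning to generate correlation between the private layers $U_1^n$ and $U_2^n$, and wiretap-style random binning to confuse the eavesdropper $Z^n$. Concretely, I would split each private message as $W_j = (W_{j0}, W_{jj})$ with rates $R_j = R_{j0} + R_{jj}$ and set $W_0 \triangleq (W_{10}, W_{20})$ of rate $R_0 = R_{10} + R_{20}$; the aggregate $W_0$ is carried by the $Q$-layer while each $W_{jj}$ is carried by its corresponding $U_j$-layer.

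For the codebook construction I would draw a single $T^n \sim \prod_i P_T(t_i)$; for every pair $(w_0, \ell_0) \in [1:2^{nR_0}] \times [1:2^{nL_0}]$ draw $Q^n(w_0, \ell_0)$ i.i.d.\ from $P_{Q|T}$; and for every tuple $(w_0, \ell_0, w_{jj}, \ell_j, b_j)$ draw $U_j^n$ i.i.d.\ from $P_{U_j|QT}$, where $\ell_0, \ell_1, \ell_2$ are resolvability (secrecy) indices of rates $L_0, L_1, L_2$ and $b_1, b_2$ are Marton bin indices of rates $B_1, B_2$. The encoder picks $(\ell_0, \ell_1, \ell_2)$ uniformly at random, uses the mutual covering lemma to select a jointly typical pair $(b_1, b_2)$ (which succeeds with high probability when $B_1 + B_2 \geq I(U_1; U_2 | T Q) + \delta$), and then samples $X^n$ i.i.d.\ from $P_{X | U_1 U_2}$. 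Decoder $j$ jointly decodes $(w_0, \ell_0, w_{jj}, \ell_j, b_j)$ from $Y_j^n$, and a standard joint-typicality analysis enforces
\[
R_0 + L_0 + R_{jj} + L_j + B_j \leq I(Q U_j; Y_j | T) - \delta
\]
together with the Markov-relaxed sub-constraint driven by $I(U_j; Y_j | T Q)$.

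For secrecy I would invoke a conditional channel-resolvability / soft-covering lemma to show that the induced $P_{Z^n | T^n}$ is close in total variation to its i.i.d.\ target provided
\[
L_0 \geq I(Q; Z | T) + \delta \quad \text{and} \quad L_0 + L_1 + L_2 + B_1 + B_2 \geq I(Q U_1 U_2; Z | T) + \delta,
\]
which implies $I(W_1 W_2; Z^n) \leq n \epsilon_n$ via the standard continuity-of-entropy argument. Collecting all reliability upper bounds, the secrecy lower bounds, Marton's covering constraint, the splitting identities $R_j = R_{j0} + R_{jj}$, and the non-negativity of every auxiliary rate, and then applying Fourier--Motzkin elimination to the auxiliary variables $(R_{j0}, R_{jj}, L_0, L_1, L_2, B_1, B_2)$, yields exactly the five inequalities of Theorem~\ref{TheoremInner}. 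The side condition $I(U_1; Y_1 | T Q) + I(U_2; Y_2 | T Q) \geq I(U_1; U_2 | T Q)$ is precisely the feasibility criterion for nonnegative $(B_1, B_2)$ that survives FME.

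The main obstacle is the three-way interplay between Marton binning, wiretap resolvability, and joint decoding: the dummy indices $(L_0, L_1, L_2, B_1, B_2)$ must simultaneously satisfy a covering inequality at the encoder, a chained soft-covering inequality at the eavesdropper, and the reliability constraints at each legitimate decoder, \emph{without} double-counting randomness across these roles. In particular, the extra $-I(Q; Z | T)$ term in the last sum-rate inequality reflects the fact that individual resolvability of the $Q$-layer must be enforced separately as $L_0 \geq I(Q; Z | T)$, rather than folded into the joint inequality; carrying this chained pair of resolvability constraints cleanly through the Fourier--Motzkin step is what produces the final region.
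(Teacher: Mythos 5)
Your encoding architecture coincides with the paper's: the same bit recombination $W_0=(W_{10},W_{20})$, a superposition $Q$-layer, Marton covering for $(U_1,U_2)$, uniformly random dummy indices for stochastic encoding, joint typicality decoding, and a final Fourier--Motzkin elimination. The divergence is in the secrecy step, and as written it contains a genuine gap. The paper does not use a resolvability/soft-covering lemma; it computes the equivocation directly, lower-bounding $H(S_0S_1S_2\mid Z^n,\mathcal{C})$ via single-letter bounds on $I(Q^n;Z^n\mid\mathcal{C})$, $I(U_1^n;U_2^n\mid Q^n,\mathcal{C})$, $I(U_1^nU_2^n;Z^n\mid Q^n,\mathcal{C})$ (Lemma~\ref{Inequalities}) and upper-bounding $H(S_0S_1S_2\mid Z^n,\bar{W}_0\bar{W}_1\bar{W}_2,\mathcal{C})$ by a list-counting argument at the eavesdropper (Lemma~\ref{LemmaUpperBound}). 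That computation produces \emph{four} randomization constraints: $L_0\geq I(Q;Z\mid T)$, the two pairwise conditions $L_0+L_j+B_j\geq I(QU_j;Z\mid T)$ for $j\in\{1,2\}$, and the joint condition $L_0+L_1+L_2+B_1+B_2\geq I(QU_1U_2;Z\mid T)+I(U_1;U_2\mid TQ)$.

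You impose only two of these, and the two you impose are not enough. First, without the pairwise conditions the eavesdropper may be able to resolve one of the layers (e.g.\ all the dummy rate could sit in the $U_2$-branch), and FME cannot produce the single-rate bounds $R_j\leq I(QU_j;Y_j\mid T)-I(QU_j;Z\mid T)$, which in the paper come precisely from combining $T_0+T_j\leq I(QU_j;Y_j\mid T)$ with $L_0+L_j+B_j\geq I(QU_j;Z\mid T)$. Second, your joint condition omits the $+I(U_1;U_2\mid TQ)$ correction: because the transmitted pair $(U_1^n,U_2^n)$ is selected to be jointly typical, the number of effectively distinct codeword pairs available for randomization is reduced by $2^{nI(U_1;U_2\mid TQ)}$, so the joint covering threshold is $I(QU_1U_2;Z\mid T)+I(U_1;U_2\mid TQ)$, not $I(QU_1U_2;Z\mid T)$; note that separately imposing the Marton constraint $B_1+B_2\geq I(U_1;U_2\mid TQ)$ does not repair this. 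With your weaker conditions, FME yields sum-rate bounds without the $-I(U_1;U_2\mid TQ)$ penalty, i.e.\ a strictly larger region than the theorem states, so the claim that your constraints reduce ``exactly'' to the five inequalities does not hold. A resolvability route is viable in principle, but it must carry the full chained set of superposition soft-covering conditions, including the correlation penalty induced by the Marton selection.
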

\begin{IEEEproof} The full proof of this inner bound is given in Section~\ref{ProofInnerBound}. \end{IEEEproof} 
\begin{remark}
It is worth mentioning here the relative behavior of this inner bound with the one of Theorem 1 in \cite{EU13} where the authors relied on similar encoding techniques as the ones we resort to in the proof of achievability. 

The corresponding inner bound is clearly included in \label{InnerBoundWBC} and it can be investigated whether these two inner bounds are indeed equal, similarly to \cite{LiangKramerEquivalence2008}, since the encoding is similar and only decoding strategies differ: successive decoding for \cite{EU13} and joint decoding in our case. 
\end{remark}


	\section{Secrecy Capacity of Some Wiretap Broadcast Channels}
	In this section, we derive the secrecy capacity of various Wiretap Broadcast Channel models.
\subsection{Deterministic BC with an arbitrary eavesdropper}
	Let us assume that both legitimate users' channel outputs are deterministic functions of the input $X$, as shown in Fig.~\ref{FigDeterministic}. 
	\begin{figure}[t]
   \centering
   \includegraphics[scale=.6]{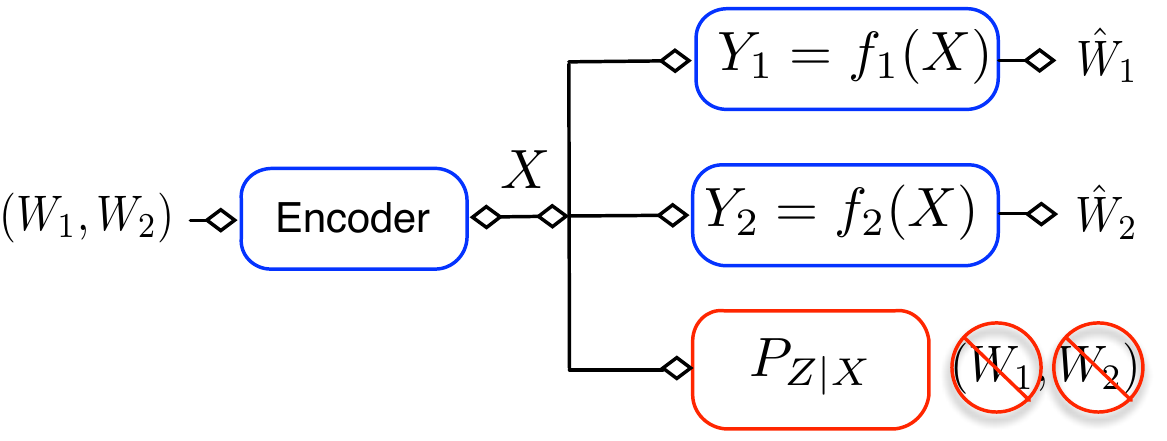} 
   \caption{Deterministic BC with an arbitrary  eavesdropper.}  \label{FigDeterministic}
\end{figure}
	\begin{theorem}[Secrecy capacity of the deterministic BC with a general eavesdropper] 
	The secrecy capacity of the deterministic BC with an arbitrary eavesdropper's channel is given by the set of all rate pairs $(R_1, R_2)$ satisfying: 
	\begin{IEEEeqnarray}{rcl}
	R_1 &\,\leq\,& H(Y_1| Z ) \ , \\
	R_2 &\leq& H(Y_2|Z) \ , \\
	R_1 + R_2 &\leq& H(Y_1  Y_2|Z)   \ ,
	\end{IEEEeqnarray}
	for some input p.m.f $P_X$. 
	\end{theorem}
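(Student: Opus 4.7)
The plan is to establish both directions by direct specialization of the bounds already proved: Theorem~\ref{TheoremInner} for achievability and Corollary~\ref{CorollaryOuter} for the converse. The determinism hypothesis $H(Y_j|X)=0$ for $j\in\{1,2\}$ is what makes the auxiliaries collapse cleanly on both sides and forces the two bounds to coincide without any of the heavier channel-ordering assumptions used elsewhere in the paper.

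\textbf{Achievability.} I would evaluate the inner region of Theorem~\ref{TheoremInner} at $T=Q=\emptyset$, $U_1=Y_1$, $U_2=Y_2$, with $X\sim P_X$ arbitrary. The two individual bounds then reduce to $I(Y_j;Y_j)-I(Y_j;Z)=H(Y_j|Z)$. Each of the three sum-rate bounds reduces, after cancellation of the $H(Y_1)+H(Y_2)$ terms and using $I(Y_1 Y_2;Z)=H(Y_1 Y_2)-H(Y_1 Y_2|Z)$ together with $I(Y_1;Y_2)=H(Y_1)+H(Y_2)-H(Y_1 Y_2)$, to $H(Y_1 Y_2|Z)$. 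Finally, the admissibility condition $I(U_2;Y_2|TQ)+I(U_1;Y_1|TQ)\geq I(U_1;U_2|TQ)$ becomes $H(Y_1)+H(Y_2)\geq I(Y_1;Y_2)$, which holds trivially.

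\textbf{Converse.} Starting from Corollary~\ref{CorollaryOuter}, any achievable $(R_1,R_2)$ admits auxiliaries $(T,V_1,V_2,U_1,U_2)\mkv X\mkv (Y_1,Y_2,Z)$ satisfying the four stated inequalities. My key tool is the elementary swap identity
\[
I(U;A\mid W)-I(U;B\mid W)=I(U;A\mid W,B)-I(U;B\mid W,A),
\]
which follows from two applications of the chain rule on $I(U;A,B\mid W)$. Applied with $W=(T,V_1)$, $A=Y_1$, $B=Z$, it yields
\[
I(U_1;Y_1|TV_1)-I(U_1;Z|TV_1)\leq I(U_1;Y_1|TV_1 Z)\leq H(Y_1|Z),
\]
so $R_1\leq H(Y_1|Z)$; the bound $R_2\leq H(Y_2|Z)$ follows symmetrically. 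For the sum rate, the same identity with $B=(Z,Y_2)$ bounds $I(U_1;Y_1|TV_1)-I(U_1;ZY_2|TV_1)$ by $H(Y_1|TV_1 Z Y_2)$, while determinism gives $I(X;Y_2|TZV_1)=H(Y_2|TZV_1)$. Summing and applying the chain rule delivers
\[
R_1+R_2\leq H(Y_2|TZV_1)+H(Y_1|TZV_1 Y_2)=H(Y_1 Y_2|TZV_1)\leq H(Y_1 Y_2|Z),
\]
and the fourth bound of Corollary~\ref{CorollaryOuter} is handled symmetrically.

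\textbf{Main obstacle.} There is no genuine difficulty beyond correctly orchestrating the swap identity in the sum-rate converse, so that the two residual conditional entropies combine via the chain rule into $H(Y_1 Y_2|TZV_1)$ before the auxiliaries are dropped by conditioning. The achievability specialization and the single-rate converse are otherwise mechanical consequences of the bounds already established.
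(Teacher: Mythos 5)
Your proposal is correct and follows essentially the same route as the paper: achievability by specializing Theorem~\ref{TheoremInner} at $Q=\emptyset$, $U_1=Y_1$, $U_2=Y_2$, and the converse by bounding the single rates of Corollary~\ref{CorollaryOuter} via $I(U_j;Y_j|TV_j)-I(U_j;Z|TV_j)\leq I(U_j;Y_j|TV_jZ)\leq H(Y_j|Z)$ and the sum rates via $H(Y_2|TZV_1)+H(Y_1|TZV_1Y_2)=H(Y_1Y_2|TZV_1)\leq H(Y_1Y_2|Z)$. Your ``swap identity'' is just a repackaging of the paper's step of enlarging $I(U_1;Y_1|TV_1)$ to $I(U_1;Y_1Z|TV_1)$ (resp.\ $I(U_1;Y_1ZY_2|TV_1)$) before cancelling, so there is nothing substantively different to compare.
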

	\begin{proof}
	We start with the achievability part for which we evaluate the inner bound in Theorem~\ref{TheoremInner} by setting: $Q = \emptyset$, $U_1 = Y_1$ and $U_2 = Y_2$. The claim follows then in a straightforward manner. As for the outer bound, it follows from the reduced outer bound in Corollary~\ref{CorollaryOuter}, by writing the next set of inequalities for $j \in \{1,2\}$: 	
	\begin{IEEEeqnarray}{rCl}
	  I(U_j; Y_j | V_j) &-&  I(U_j; Z| V_j)  \nonumber \\
	    &\leq& I(U_j; Y_j Z| V_j) - I(U_j; Z| V_j) \\ \label{Inequ_1_d}
	    &= &I(U_j; Y_j |Z,V_j) \\ 
	    &\leq&  H(Y_j |Z ) \label{Inequ_1_f}  
	\end{IEEEeqnarray}	
	with strict equality if $U_j = Y_j$ and $ V_j = \emptyset$. 
	Note also that: 
	\begin{IEEEeqnarray}{rCl}
	I(X ; Y_2 | Z V_1) &+& I(U_1; Y_1| V_1) - I(U_1; Z Y_2| V_1)   \nonumber \\
	   &\leq&  I(X ; Y_2 | Z V_1) + I(U_1; Y_1|Z Y_2 V_1) \\
	   &\leq&  H( Y_2 | Z V_1) + H(Y_1|Z Y_2 V_1) \\
	   &\leq& H( Y_1Y_2| Z ) \label{Inequ_2_f} 
	\end{IEEEeqnarray}	
	with strict equality if $U_1 = Y_1$ and $ V_1 = \emptyset$. The second sum-rate yields the same constraint.
	Thus, the outer bound is maximized with the choice $U_1 = Y_1$, $U_2 = Y_2$ and $V_1 = V_2 =\emptyset$. 
	\end{proof}
	
Below, we generalize the equality between the regions in Corollary~\ref{CorollaryOuter} and Theorem~\ref{TheoremInner} to the case of the Semi-Deterministic BC with a more-noisy eavesdropper. 

	\subsection{Semi-deterministic BC with a more-noisy eavesdropper}
	Let us assume that only $Y_1$ is a deterministic function of $X$ but we further assume that $Y_2$ is less-noisy respect to the eavesdropper's output $Z$,  as shown in Fig.~\ref{FigSemiDeterministic}.
	\begin{figure}[t]
   \centering
   \includegraphics[scale=.6]{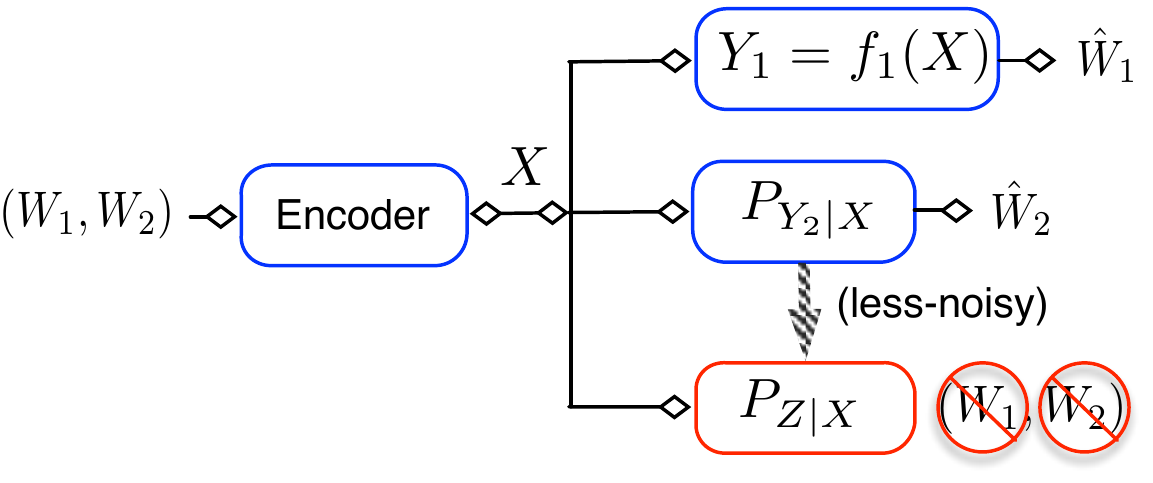} 
   \caption{The Semi-deterministic Wiretap Broadcast Channel with a more-noisy eavesdropper.}  \label{FigSemiDeterministic}
\end{figure}
	\begin{theorem}[Secrecy capacity region of the semi-deterministic BC with a more-noisy eavesdropper]\label{SemiDeterministicCapa}
	The secrecy capacity of the semi-deterministic BC with a more-noisy eavesdropper is the set of all rate pairs $(R_1, R_2)$ satisfying: 
	\begin{IEEEeqnarray}{rcl}
	R_1 &\,\leq\,& H(Y_1|Z Q) \ , \\
	R_2 &\leq& I(U;Y_2|Q) - I(U;Z|Q)  \ ,\\
	R_1 + R_2 &\leq& H(Y_1|ZQ U) + I(U;Y_2|Q) - I(U;Z|Q) \ 
	\end{IEEEeqnarray}
	for some joint p.m.f $P_{QUX}= P_{Q}P_{U|Q}P_{X|U} $ such that $(Q,U) \mkv X \mkv (Y_1, Y_2,Z)$\ .  
	\end{theorem}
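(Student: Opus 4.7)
The plan is to derive achievability by specializing the inner bound of Theorem~\ref{TheoremInner}, and the converse by manipulating the outer bound of Corollary~\ref{CorollaryOuter}, exploiting in each direction the two structural hypotheses: $Y_1$ is a deterministic function of $X$, and $Y_2$ is less-noisy than $Z$.

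For achievability, I would make the identification $T=Q$, $U_1=Y_1$ and $U_2=U$ in Theorem~\ref{TheoremInner}. Because $Y_1$ is a function of $X$, the terms $I(Q U_1;Y_1|T)$ and $I(Q U_1;Z|T)$ collapse to $H(Y_1|Q)$ and $I(Y_1;Z|Q)$ respectively, so the individual $R_1$ bound becomes $H(Y_1|ZQ)$, while the individual $R_2$ bound reduces directly to $I(U;Y_2|Q)-I(U;Z|Q)$. Each of the three sum-rate inequalities then produces the same target $H(Y_1|ZQU)+I(U;Y_2|Q)-I(U;Z|Q)$ after writing $I(Y_1 U;Z|Q)=I(U;Z|Q)+I(Y_1;Z|QU)$ and combining with $H(Y_1|Q)-I(Y_1;U|Q)=H(Y_1|QU)$. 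The side condition $I(U_2;Y_2|TQ)+I(U_1;Y_1|QT)\geq I(U_1;U_2|TQ)$ is automatic because $I(Y_1;U|Q)\leq H(Y_1|Q)$.

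For the converse, I would take $Q=T$ and $U=(T,V_2,U_2)$ in Corollary~\ref{CorollaryOuter}, so that $Q$ is a function of $U$ and the Markov structure $Q\mkv U\mkv X$ demanded by the target region holds trivially. The individual $R_1$ bound follows from $I(U_1;Y_1|TV_1)-I(U_1;Z|TV_1)\leq I(U_1;Y_1|ZTV_1)\leq H(Y_1|ZTV_1)\leq H(Y_1|ZT)=H(Y_1|ZQ)$, using only non-negativity of conditional mutual information and that conditioning reduces entropy. The individual $R_2$ bound invokes the less-noisy inequality $I(V_2;Y_2|T)\geq I(V_2;Z|T)$ to absorb $V_2$: namely, $I(U_2;Y_2|TV_2)-I(U_2;Z|TV_2)\leq I(U_2V_2;Y_2|T)-I(U_2V_2;Z|T)=I(U;Y_2|Q)-I(U;Z|Q)$. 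For the sum rate, I start from the outer-bound inequality $R_1+R_2\leq I(X;Y_1|TZV_2)+I(U_2;Y_2|TV_2)-I(U_2;ZY_1|TV_2)$; the determinism of $Y_1$ converts $I(X;Y_1|TZV_2)$ into $H(Y_1|TZV_2)$, and expanding $I(U_2;ZY_1|TV_2)=I(U_2;Z|TV_2)+I(U_2;Y_1|TV_2Z)$ produces $H(Y_1|TZV_2U_2)+\bigl[I(U_2;Y_2|TV_2)-I(U_2;Z|TV_2)\bigr]$. The bracketed term is then absorbed into $I(U;Y_2|Q)-I(U;Z|Q)$ via the less-noisy property as above, completing the match with the target sum-rate bound.

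The main obstacle is the asymmetry between the outer bound, which carries two independent prefixes $V_1$ and $V_2$, and the target region, which lives over a single auxiliary $Q$. Reconciling them requires two very different manipulations that must nevertheless agree on the same $Q=T$: on the deterministic side, $V_1$ is simply discarded by the elementary monotonicity of $H(Y_1|\cdot)$ under further conditioning; on the stochastic side, $V_2$ must be folded into the composite $U$ by invoking the less-noisy ordering of $Y_2$ over $Z$. Performing these two moves consistently so that a single identified pair $(Q,U)$ witnesses all three target inequalities simultaneously is the core technical content of the converse.
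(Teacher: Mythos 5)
Your proposal is correct and follows essentially the same route as the paper: achievability by setting $Q=T$, $U_1=Y_1$, $U_2=U$ in Theorem~\ref{TheoremInner}, and the converse by discarding $V_1$ through the deterministic chain $I(U_1;Y_1|TV_1)-I(U_1;Z|TV_1)\leq H(Y_1|ZT)$ while folding $V_2$ into the auxiliary via the less-noisy inequality $I(U_2;Y_2|TV_2)-I(U_2;Z|TV_2)\leq I(U_2V_2;Y_2|T)-I(U_2V_2;Z|T)$, then identifying $U=(U_2,V_2)$ and $Q=T$. Your write-up actually supplies details the paper leaves implicit (the collapse of all three sum-rate constraints and the verification of the inner bound's side condition), but the underlying argument is identical.
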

	\begin{proof}
	Achievability follows from the rate region stated in Theorem~\ref{TheoremInner} by letting: $Q = T$ and $U_1 = Y_1$. As for the converse, we will first evaluate the outer bound given in Corollary~\ref{CorollaryOuter}. Since $Y_2$ is less-noisy than $Z$, then one can easily notice that:
\begin{equation} 
 I(U;Y_2|V Q) - I(U;Z| VQ )   \leq  I(UV;Y_2| Q) - I(UV;Z| Q )  \ . 
\end{equation}	
Considering the same chain of inequalities as in~\eqref{Inequ_1_d}-\eqref{Inequ_1_f}, one can write the outer bound as: 
\begin{IEEEeqnarray}{rcl}
	R_1 &\,\leq \,& H(Y_1|Z Q)  \ ,\\
	R_2 &\leq& I(U V;Y_2| Q) - I(UV;Z| Q ) \ , \\
	\hspace{-5mm} R_1 + R_2 &\leq&  H(Y_1|ZQUV) + I(UV;Y_2 | Q) - I(UV;Z|Q) \  
	\end{IEEEeqnarray}
	and thus, defining $(UV) = U $, we can write that the outer bound is the union over all p.m.f $P_{QUX}= P_{Q}P_{U|Q}P_{X|U} $ of the rate region given in Theorem~\ref{SemiDeterministicCapa}. 
	\end{proof}
	\begin{remark}
	When $Y_2$ is not less-noisy than $Z$, it is not clear yet whether the two bounds can be tight due to the fact that the auxiliary rv $V$ does not seem to be useless then. 
	\end{remark}
	
	\subsection{Degraded BC with a more-noisy eavesdropper}
	In this section, we assume that the legitimate user $Y_2$ is degraded respect to the legitimate user $Y_1$. Moreover, assume that both users are less-noisy than the eavesdropper as shown in Fig.~\ref{Ordered1}. 
	\begin{figure}[t]
   \centering
   \includegraphics[scale=.6]{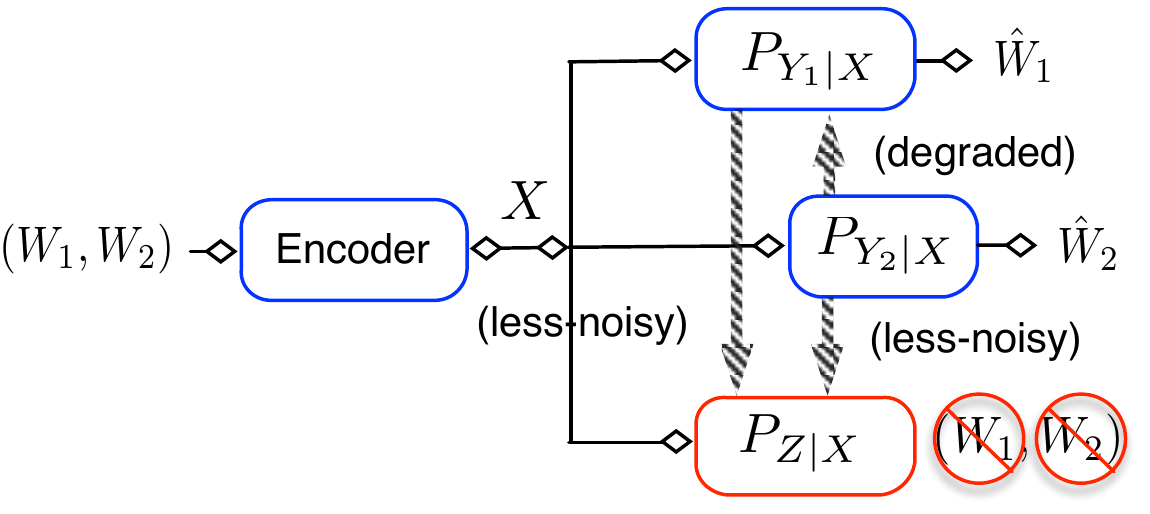} 
   \caption{Degraded BC with a more-noisy eavesdropper.}  
   \label{Ordered1}
\end{figure} 
The capacity region of this setting was first derived in~\cite{Ulukus}, and here, we simply rely on the optimality of our outer bound for this setting.  
 	 	\begin{theorem}[Secrecy capacity region of the degraded WBC \cite{Ulukus}] \label{CapaDegraded}
 	 	The secrecy capacity region of the degraded WBC is given by the set of rate pairs $(R_1, R_2)$ satisfying: 
 	 		 \begin{IEEEeqnarray}{rcl} 
    R_1 &\, \leq\, & I( X ; Y_1 |T U ) - I(X ; Z |T U)  \ , \\ 
	R_2 &\leq& I( U ; Y_2 |T   ) - I(U ; Z |T  )  \ ,   
	\end{IEEEeqnarray}
	for some input p.m.f $P_{TUX}$ where $(T,U)\mkv X \mkv (Y_1,Y_2, Z)$. 
	\end{theorem}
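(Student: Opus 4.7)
The plan follows the pattern indicated by the author: achievability by specialising the inner bound of Theorem~\ref{TheoremInner}, and the converse by specialising the outer bound of Corollary~\ref{CorollaryOuter} to the degraded configuration.

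\textbf{Achievability.} I would evaluate the inner bound of Theorem~\ref{TheoremInner} at $Q = U$, $U_1 = X$ and $U_2 = U$, keeping $T$ as the time-sharing variable, under the Markov chain $(T, U) \to X \to (Y_1, Y_2, Z)$. The first individual constraint then becomes $I(X; Y_1|T) - I(X; Z|T)$, which, since $Y_1$ is less noisy than $Z$, is at least as large as the target value $I(X; Y_1|TU) - I(X; Z|TU)$. The second individual constraint reduces directly to the desired $I(U; Y_2|T) - I(U; Z|T)$. The three sum-rate constraints collapse, after Markov cancellations within the conditional mutual informations, to the sum of these two single-letter expressions, so every rate pair in the region of Theorem~\ref{CapaDegraded} is achievable.

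\textbf{Converse.} Starting from Corollary~\ref{CorollaryOuter}, I would define the combined auxiliary $U := (V_2, U_2)$. Since $Y_2$ is less noisy than $Z$, we have $I(V_2; Y_2|T) \geq I(V_2; Z|T)$, and the chain rule yields
\[
I(U_2; Y_2|TV_2) - I(U_2; Z|TV_2) \leq I(U; Y_2|T) - I(U; Z|T),
\]
which matches the $R_2$ bound of Theorem~\ref{CapaDegraded}. For $R_1$, the plan is to combine the individual $R_1$ constraint with the sum-rate constraint of Corollary~\ref{CorollaryOuter} that couples the two users through $Y_2$ and $Z$, and then to invoke the degradation $X \to Y_1 \to Y_2$ together with the less-noisy ordering of $Y_1$ over $Z$. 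Specifically, by manipulating the sum-rate constraint via Csisz\'ar--K\"orner style identities and eliminating the auxiliary $V_1$ (made redundant by the less-noisy condition), one can upper-bound the sum-rate by $[I(X; Y_1|TU) - I(X; Z|TU)] + [I(U; Y_2|T) - I(U; Z|T)]$, from which the $R_1$ bound follows by subtracting the $R_2$ bound already obtained.

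\textbf{Main obstacle.} The delicate step of the converse is the single-letter manipulation that turns the outer bound's $R_1$ expression, which involves the auxiliary pair $(V_1, U_1)$ distinct from the $U$ introduced for $R_2$, into a bound that uses the same $U$. This reduction depends simultaneously on the degradation $X \to Y_1 \to Y_2$ and on the less-noisy ordering of $Y_1$ and $Y_2$ over $Z$, and the challenge is to orchestrate these hypotheses so as to preserve the correct inequality direction at each step. Once this single-letterisation is verified, the $R_1$ and $R_2$ bounds of Theorem~\ref{CapaDegraded} are simultaneously established and the converse is complete.
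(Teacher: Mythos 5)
Your achievability argument is fine (the paper itself does not spell one out, deferring to the cited reference, but your choice $Q=U_2=U$, $U_1=X$ does reduce the inner bound to the claimed region). The converse, however, has a genuine gap, and it stems from starting at Corollary~\ref{CorollaryOuter} rather than Theorem~\ref{TheoremOuter}. The paper's proof uses precisely the constraints of Theorem~\ref{TheoremOuter} that Corollary~\ref{CorollaryOuter} discards, namely the $R_1$ bound conditioned on the \emph{other user's} auxiliaries, $R_1 \leq I(U_1;Y_1Y_2|TV_1U_2V_2)-I(U_1;Z|TV_1U_2V_2)$. Degradedness of $Y_2$ with respect to $Y_1$ removes $Y_2$, and less-noisiness of $Y_1$ over $Z$ absorbs $(U_1,V_1)$ up to $X$, giving directly $R_1\leq I(X;Y_1|TU)-I(X;Z|TU)$ with $U=(U_2,V_2)$ --- the same $U$ as in the $R_2$ bound. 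In Corollary~\ref{CorollaryOuter} the individual $R_1$ constraint involves only $(U_1,V_1)$, so there is no way to condition it on $U$ without going back to the stronger bound.

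Your proposed repair via the sum rate fails on two counts. First, the relevant sum-rate constraint of Corollary~\ref{CorollaryOuter} rewrites (after the chain rule and absorbing $V_2$) as $I(X;Y_1|TZU)+I(U;Y_2|T)-I(U;Z|T)$, and
\begin{equation*}
I(X;Y_1|TZU)=I(X;Y_1|TU)-I(X;Z|TU)+I(X;Z|TY_1U)\;\geq\; I(X;Y_1|TU)-I(X;Z|TU),
\end{equation*}
with equality only when $Z$ is \emph{degraded} with respect to $Y_1$; under the mere more-noisy assumption of this theorem the inequality points the wrong way, so you cannot upper-bound the sum rate by the target expression. Second, even granting such a sum-rate bound, ``subtracting the $R_2$ bound'' is not a valid step: from $R_2\leq B$ and $R_1+R_2\leq A+B$ one cannot conclude $R_1\leq A$ (the point $(A+B,0)$ satisfies both), and the region $\{R_2\leq B,\,R_1+R_2\leq A+B\}$ is strictly larger than $\{R_1\leq A,\,R_2\leq B\}$ for a fixed input distribution. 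Closing that discrepancy after taking the union over distributions would require an additional argument that you do not supply; the paper avoids the issue entirely by extracting the conditioned $R_1$ bound from Theorem~\ref{TheoremOuter}.
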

	
	\begin{proof}
	To show this, we first note that the outer bound given in Theorem~\ref{TheoremOuter} is included in the following outer bound obtained through keeping only the constraints: 
	\begin{IEEEeqnarray}{rcl} 
    R_1 &\,\leq\, & 	I( U_1; Y_1 Y_2 |T V_1 U_2 V_2) - I(U_1; Z  |TV_1U_2 V_2) \ , \\ 
	R_2 &\leq&  I( U_2; Y_2 |T V_2 ) - I(U_2; Z |TV_2)  \ .   
	\end{IEEEeqnarray}
	Now, since $Y_2$ is degraded respect to $Y_1$, then
	\begin{equation}
	  I( U_1; Y_1 Y_2 |T V_1 U_2 V_2) = I( U_1; Y_1 |T V_1 U_2 V_2)  \ , 
	 \end{equation} 
	 and since $Y_1$ is less-noisy than $Z$  we can write 
	 \begin{IEEEeqnarray}{rCl}
	 && I( U_1; Y_1 |T V_1 U_2 V_2) - I(U_1; Z  |TV_1 U_2 V_2)\nonumber \\
	  &\leq& I( U_1V_1; Y_1 |T V_1 U_2  ) - I(U_1 V_1; Z  |T U_2 V_2) \\
	 &\leq& I( X; Y_1 |TU_2 V_2) - I(X; Z  |T U_2 V_2) \ .
	 \end{IEEEeqnarray}
Thus, the outer bound reduces to the union over all joint p.m.fs $P_{TUX}$ of the rate region given in Theorem~\ref{CapaDegraded}. 
\end{proof}
	 In the sequel, it turns out that the outer bound we derived yields also the capacity region of another class of ordered BC, which does not include the class of degraded BC with a more-noisy eavesdropper as will be clarified shortly. 
\begin{figure}[t]
   \centering
   \includegraphics[scale=.6]{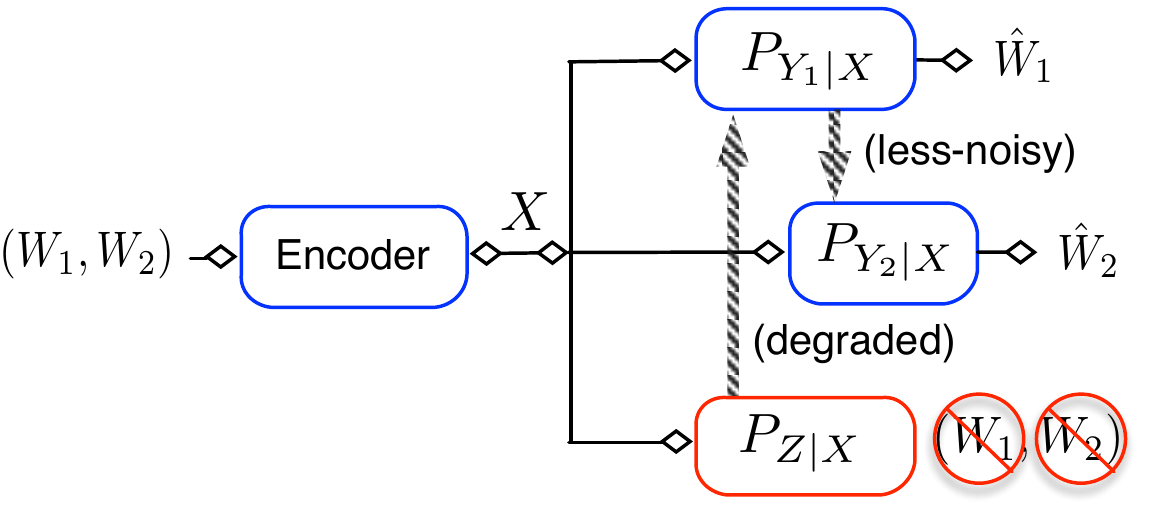} 
   \caption{Less-Noisy BC with a partly degraded eavesdropper.}  \label{Ordered2}
\end{figure}

\subsection{Less-Noisy BC with a partly degraded eavesdropper}
	Let us assume that $Y_1$ is a less-noisy channel than $Y_2$ and that $Z$ is a degraded version of $Y_1$. 	As shown in Fig.~\ref{Ordered2}, this model is more general than the one first considered in~\cite{KhandaniGaussian}, while it does not really generalize the model in Fig.~\ref{Ordered1}, first considered in~\cite{Ulukus}. Notice that in this setting the eavesdropper is not compulsorily degraded. However, the present class is wider in that users are no longer compulsorily degraded between them and the eavesdropper is no longer more noisy than the weaker legitimate user. 
	
	\begin{theorem}[Secrecy capacity region of the less-noisy WBC]\label{OrderedWBC}
	The secrecy capacity region of the ordered WBC under study is the set of all rate pairs $(R_1, R_2)$ satisfying: 
	\begin{IEEEeqnarray}{rcl}
	R_2 &\, \leq \, & I(U;Y_2| T) - I(U;Z| T)  \ ,\\   
	R_1 + R_2 &\leq& I(X;Y_1|Z U T) + I(U;Y_2|T) - I(U;Z|T)  \ ,\,\,\,\,\,\,
	\end{IEEEeqnarray}
	for some joint p.m.f $P_{TUX}= P_{T}P_{U|T}P_{X|U}$ such that $(T,U) \mkv X \mkv (Y_1,Y_2,Z)$.  
	\end{theorem}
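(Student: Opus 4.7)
My plan is to prove the theorem by establishing the two inclusions separately, using the inner bound of Theorem~\ref{TheoremInner} for achievability and the reduced outer bound of Corollary~\ref{CorollaryOuter} for the converse. Degradedness $X \mkv Y_1 \mkv Z$ will be used in both directions, while the less-noisiness of $Y_1$ over $Y_2$---in its conditional form $I(U; Y_1 | T) \geq I(U; Y_2 | T)$, which follows from the unconditional definition by averaging over $T$ since the Markov chain $(T, U) \mkv X \mkv (Y_1, Y_2)$ is preserved for each value of $T$---is needed only for achievability, to absorb the extra constraints of the inner bound into the two retained ones.

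For the achievability I would instantiate Theorem~\ref{TheoremInner} with $Q \leftarrow U$, $U_1 \leftarrow X$, $U_2 \leftarrow \emptyset$, and keep $T$ as $T$. The bound on $R_2$ immediately reduces to $I(U; Y_2 | T) - I(U; Z | T)$. The first sum-rate bound becomes $I(X; Y_1 | TU) - I(X; Z | TU) + I(U; Y_2 | T) - I(U; Z | T)$, which collapses to $I(X; Y_1 | ZUT) + I(U; Y_2 | T) - I(U; Z | T)$ once the identity $I(X; Y_1 | TU) - I(X; Z | TU) = I(X; Y_1 | ZTU)$ is invoked via degradedness conditionally on $(T, U)$. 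The side condition $I(U_1; Y_1 | TQ) + I(U_2; Y_2 | TQ) \geq I(U_1; U_2 | TQ)$ reduces to $I(X; Y_1 | TU) \geq 0$ and is trivial. It then remains to check that the remaining inner-bound constraints are slack: the individual $R_1$ bound simplifies to $I(X; Y_1 | TZ)$, which exceeds the retained sum-rate bound (together with $R_2 \geq 0$) by a gap that, after a chain-rule rewriting, reduces to $I(U; Y_1 | T) - I(U; Y_2 | T) \geq 0$ by less-noisiness; the two alternative sum-rate bounds exceed the retained one by the same non-negative quantity and can be discarded for the same reason.

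For the converse I would start from Corollary~\ref{CorollaryOuter} and identify $T' = (T, V_2)$ and $U' = U_2$. With this identification, the $R_2$ bound of the corollary matches the theorem's verbatim. The critical step concerns the second sum-rate bound, $R_1 + R_2 \leq I(X; Y_1 | TZV_2) + I(U_2; Y_2 | TV_2) - I(U_2; ZY_1 | TV_2)$. Composing the code chain $(T, V_2, U_2) \mkv X \mkv (Y_1, Z)$ with the channel-level degradedness $X \mkv Y_1 \mkv Z$ yields the derived chain $(T, V_2, U_2) \mkv Y_1 \mkv Z$, which implies $I(U_2; Z | T V_2 Y_1) = 0$ and hence $I(U_2; Z Y_1 | T V_2) = I(U_2; Y_1 | T V_2)$. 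Splitting the latter as $I(U_2; Z | T V_2) + I(U_2; Y_1 | T Z V_2)$ and expanding $I(X; Y_1 | T Z V_2) = I(U_2; Y_1 | T Z V_2) + I(X; Y_1 | T Z V_2 U_2)$ via $U_2 \mkv X \mkv Y_1$, the sum-rate bound rearranges exactly into $I(X; Y_1 | Z U' T') + I(U'; Y_2 | T') - I(U'; Z | T')$. The other corollary constraints can only further restrict the region, so discarding them preserves the inclusion.

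I expect the main technical obstacle to be the converse's mutual-information bookkeeping. The pivotal derived Markov chain $(T, V_2, U_2) \mkv Y_1 \mkv Z$ is not a structural assumption but a composition of the code's Markov chain with the channel-level degradedness, and the subsequent rearrangement requires several chain-rule decompositions to be applied in exactly the right order so that intermediate terms cancel and the bound recombines into the claimed form. Note that less-noisiness plays no role in the converse; it is exclusively what certifies the slackness of the discarded inner-bound constraints in the achievability direction.
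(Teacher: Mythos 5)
Your proposal is correct and follows essentially the same route as the paper: achievability by specializing Theorem~\ref{TheoremInner} with $U_1=X$ and $Q$ carrying the role of $U$ (your choice $U_2=\emptyset$ versus the paper's $U_2=Q$ is immaterial since only $QU_2$ appears), discarding the remaining constraints via less-noisiness, and the converse by retaining the $R_2$ and second sum-rate constraints of Corollary~\ref{CorollaryOuter} with $T\equiv(T,V_2)$, $U\equiv U_2$. Your explicit chain-rule recombination of $I(X;Y_1|TZV_2)-I(U_2;ZY_1|TV_2)$ into $I(X;Y_1|TZU_2V_2)-I(U_2;Z|TV_2)$ is exactly the step the paper leaves implicit (and in fact it needs only the code Markov chain, not degradedness), so there is no substantive difference.
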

\begin{proof} 
	The converse follows from the outer bound in Corollary~\ref{CorollaryOuter} by keeping only the terms: 
	\begin{IEEEeqnarray}{rcl}
    R_2 &\, \leq\, &  I( U_2; Y_2 |T V_2 ) - I(U_2; Z | TV_2)  \ ,\\
	R_1 + R_2 &\leq& I(X; Y_1|T Z U_2 V_2) + I( U_2 ; Y_2 |T V_2 )   \nonumber \\ 
	&& \qquad \qquad - I(U_2 ; Z Y_1 |T V_2)   \ ,  
	\end{IEEEeqnarray}
	and defining the common auxiliary rv $T \equiv (T,V_2)$. As for the achievability, let $U_1 = X$ and $Q = U_2$ in the inner bound given by Theorem~\ref{TheoremInner}. This bound reduces to:
	\begin{IEEEeqnarray}{rcl}
	R_1 &\, \leq\, & I(X;Y_1|T) - I(X;Z|T) = I(X;Y_1|ZT)  \ , \,\,\,\\
	R_2 &\leq& I(U;Y_2|T) - I(U;Z|T) \ ,\,\,\,\\  
	R_1 + R_2 &\leq& I(X;Y_1|Z U T ) + I(U;Y_2|T) - I(U;Z|T) \ , \,\,\,\,\,\,\label{eq-especial1}\\
	R_1 + R_2 &\leq& I(X;Y_1|T) - I(X;Z|T) = I(X;Y_1| Z T) \ .\,\,\,\label{eq-especial2}
	\end{IEEEeqnarray}

The first bound is redundant with respect to the last one. Moreover, since $Y_1$ is less-noisy than $Y_2$, then the bound~\eqref{eq-especial2} becomes redundant with respect to~\eqref{eq-especial1}. The inner bound reduces henceforth to the one given in Theorem~\ref{OrderedWBC}.  
\end{proof}
In the sequel, we study a non-straightforward extension of this WBC for which the secrecy capacity region remained  open since the previous results in literature apply only to the degraded BC case.	

   \subsection{Product of two inversely less-noisy wiretap broadcast channels}
The product of inversely less-noisy broadcast channels is defined as the product of two less-noisy WBCs. The BC $(Y_1,T_1)$ has a component $Y_1$ which is less-noisy than $T_1$ and an eavesdropper $Z_1$ is degraded towards the best user $Y_1$ and more-noisy than the worst user $T_1$. The BC $(Y_2, T_2)$ is less-noisy in the inverse order and the eavesdropper $Z_2$ is degraded towards $T_2$ and more-noisy than $Y_2$.   
 \begin{figure}[t]
   \centering
   \includegraphics[scale=.6]{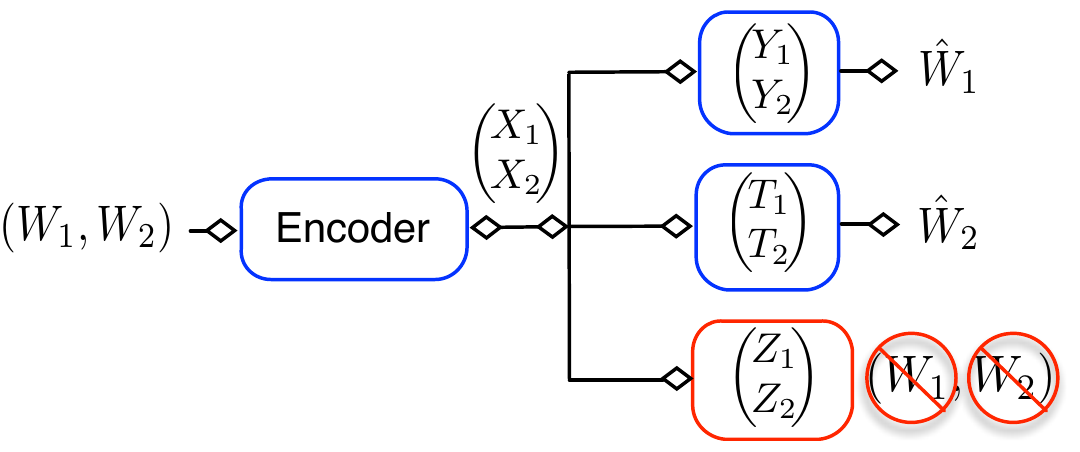} 
   \caption{The Parallel Broadcast Channel (PBC) with an eavesdropper.}  \label{FigSystemModel}
\end{figure} 

 \begin{theorem}[Product of two inversely less-noisy BCs with a more-noisy eavesdropper]\label{ProductInverselyLN}
 The secrecy capacity region of such a setting is given by the set of rates pairs $(R_1,R_2)$ satisfying: 
	\begin{IEEEeqnarray}{rcl}
   R_1 &\, \leq \, & I(X_1; Y_1|Z_1) + I(U_2; Y_2) - I(U_2;Z_2) \ , \\
R_2 &\leq & I(X_2; T_2|Z_2) + I(U_1; T_1) - I(U_1;Z_1) \ , \\
R_1 + R_2 &\leq& I(X_1; Y_1|Z_1) + I(U_2; Y_2) - I(U_2;Z_2)  \nonumber \\
&& \qquad \qquad  + I(X_2;T_2 |Z_2 U_2) \ , \\
R_1 + R_2 &\leq& I(X_2; T_2|Z_2) + I(U_1; T_1) - I(U_1;Z_1) \nonumber \\
&& \qquad \qquad  +I(X_1; Y_1|Z_1 U_1) \ , 
 \end{IEEEeqnarray}
for some input p.m.f $P_{U_1X_1U_2X_2} = P_{U_1X_1}P_{U_2X_2} $ that satisfies  $(U_1,U_2) \mkv (X_1,X_2) \mkv (Y_1,Y_2,T_1,T_2,Z_1,Z_2)$.
 \end{theorem}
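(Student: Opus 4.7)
For achievability, I plan to evaluate the inner bound of Theorem~\ref{TheoremInner} under the product input law $P_{U_1X_1}P_{U_2X_2}$, identifying its auxiliaries (which I temporarily write $T',Q',U_1',U_2'$ to distinguish from $U_1,U_2$ in the theorem above) as $T'=\emptyset$, $Q'=(U_1,U_2)$, $U_1'=X_1$, $U_2'=X_2$, and channel input $X=(X_1,X_2)$. Because of the Markov relations $U_1\mkv X_1\mkv(Y_1,T_1,Z_1)$ and $U_2\mkv X_2\mkv(Y_2,T_2,Z_2)$ together with the independence of $(U_1,X_1)$ and $(U_2,X_2)$, every joint mutual information in Theorem~\ref{TheoremInner} decomposes additively over the two sub-channels. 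The degradedness $X_1\mkv Y_1\mkv Z_1$ (resp.\ $X_2\mkv T_2\mkv Z_2$) then yields $I(X_1;Y_1)-I(X_1;Z_1)=I(X_1;Y_1|Z_1)$ (resp.\ $I(X_2;T_2)-I(X_2;Z_2)=I(X_2;T_2|Z_2)$), so the two individual-rate inequalities of Theorem~\ref{TheoremInner} collapse onto the individual bounds claimed in the theorem. The identity
\[
I(X_1;Y_1|U_1)-I(X_1;Z_1)=I(X_1;Y_1|Z_1U_1)-I(U_1;Z_1),
\]
which follows from the chain rule together with $U_1\mkv X_1\mkv Z_1$ and $X_1\mkv Y_1\mkv Z_1$, then converts one of the sum-rate inequalities of Theorem~\ref{TheoremInner} into the second claimed sum-rate bound, and a symmetric calculation produces the first. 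The remaining sum-rate of Theorem~\ref{TheoremInner} is redundant under the less-noisy orderings of $Y_1$ over $T_1$ and of $T_2$ over $Y_2$, while its admissibility condition holds automatically because $I(X_1;X_2|U_1,U_2)=0$.

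For the converse, I plan to apply Corollary~\ref{CorollaryOuter} to the product channel, identifying its legitimate outputs as $(Y_1,Y_2)$ and $(T_1,T_2)$, eavesdropper as $(Z_1,Z_2)$, and input as $(X_1,X_2)$, and then to disentangle each of the four resulting inequalities sub-channel by sub-channel. Expanding joint mutual informations via chain rules of the form $I(U;Y_1Y_2|V)=I(U;Y_1|V)+I(U;Y_2|Y_1V)$ and invoking the less-noisy ordering of $T_j,Y_j$ over $Z_j$ inside each sub-channel, the prefix variables $V_1,V_2$ of the outer bound can be absorbed into $U_1,U_2$ exactly as in the proof of Theorem~\ref{SemiDeterministicCapa}. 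The cross-term $I(X;(T_1,T_2)|T(Z_1,Z_2)V_1)$ splits into a sub-channel-$1$ piece and a sub-channel-$2$ piece; the former is bounded by $I(X_1;Y_1|Z_1U_1)$ via the degradedness $X_1\mkv Y_1\mkv Z_1$, while the latter is already in the target form $I(X_2;T_2|Z_2\cdots)$. A final convexification step, merging a time-sharing index into the common auxiliary $T$, lets the optimizing joint distribution be taken to factorize as $P_{U_1X_1}P_{U_2X_2}$, matching the region claimed in the theorem.

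The main obstacle is this last factorization step of the converse. The outer bound of Corollary~\ref{CorollaryOuter} couples $X_1$ and $X_2$ through the mixed term $I(X;(T_1,T_2)|T(Z_1,Z_2)V_1)$, and separating its contributions requires the simultaneous use of the \emph{inverse} less-noisy orderings across the two sub-channels (the best legitimate user on one sub-channel is the worst on the other) together with the degradedness of the eavesdropper with respect to the best user on each sub-channel. This parallels the classical treatment of parallel degraded broadcast channels in~\cite{ElGamal02}, extended here to the wiretap setting and to the inversely less-noisy ordering.
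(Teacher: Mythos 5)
Your achievability argument is essentially the paper's: the same identification $T'=\emptyset$, $Q'=(U_1,U_2)$, $U_1'=X_1$, $U_2'=X_2$ in Theorem~\ref{TheoremInner}, the same additive decomposition over sub-channels, the same use of degradedness to write $I(X_j;\cdot)-I(X_j;Z_j)$ as a conditional mutual information, and the same observation that $I(X_1;X_2|U_1U_2)=0$ kills the last sum-rate and the admissibility condition. That half is fine.

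The converse, however, has a genuine gap. You propose to start from the single-letter region of Corollary~\ref{CorollaryOuter} applied to the vector outputs and then ``disentangle'' it sub-channel by sub-channel. This is precisely the step the paper states cannot be done (``the proof \ldots requires a new outer bound formulation''), and for a concrete reason: in Corollary~\ref{CorollaryOuter} the auxiliaries are identified as $T_i=\mathbf{Z}^n_{i+1}$, $V_{1,i}=\mathbf{Y}^{i-1}$, $U_{1,i}=W_1$, so the conditioning never contains the \emph{present and past} of $Z_1$; yet the target bound $I(X_1;Y_1|Z_1)+I(U_2;Y_2)-I(U_2;Z_2)$ requires conditioning on the current symbol $Z_{1,i}$. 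If you try to manufacture it at the single-letter level, e.g.\ via $I(U_1;\mathbf{Y}|TV_1)-I(U_1;\mathbf{Z}|TV_1)\leq I(U_1;\mathbf{Y}|Z_1TV_1)-I(U_1;Z_2|Z_1TV_1)$ and then a chain-rule split, you are left with uncancelled nonnegative cross-terms of the type $I(Y_1;Z_2|U_1TV_1)$, because the single-letter auxiliaries $(T,V_1,U_1)$ statistically couple the two sub-channels and only a less-noisy (not degraded) ordering is available on sub-channel $2$. The paper instead redoes the converse at the $n$-letter level: it first gives the genie $Z_1^n$ to the $\mathbf{Y}$-decoder, writes $n(R_1-\epsilon_n)\leq I(W_1;\mathbf{Y}^n|Z_1^n)-I(W_1;Z_2^n|Z_1^n)$, applies Csisz\'ar--K\"orner's identity \emph{conditioned on} $Z_1^n$, and then uses the sequence-level degradedness $X_1^n\mkv Y_1^n\mkv Z_1^n$ to drop $Z_1^{i-1}$ from the conditioning; this yields the new auxiliary $V_{2,i}=(Z_{1,i}^n,\mathbf{Y}^{i-1},Z^n_{2,i+1})$, which contains the present and future of $Z_1$ and is not expressible in terms of the Corollary~\ref{CorollaryOuter} auxiliaries. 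The sum-rate is handled with the same genie plus the two inverse less-noisy orderings and a final elimination of an $S_2$-type variable as in Section~\ref{EliminationS}. So the missing idea in your plan is the genie-aided re-derivation (the ``new outer bound formulation''), not the algebraic clean-up that follows it.
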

\begin{proof}
The proof is quite evolved in that it requires a new outer bound formulation, and is thus relegated to Appendix~\ref{ProofProduct}. 
\end{proof}
Note here that, in the absence of the eavesdropper, this theorem yields the capacity region of the product of two reversely less-noisy BCs which, though not proved in \cite{ElGamalProductSum}, can be deducted from the result of \cite{Geng2011} for the product of reversely more-capable BCs. 


\section{The BEC/BSC Broadcast Channel with a BSC eavesdropper }
In this section, we characterize the capacity region of the BEC/BSC broadcast channel with an external BSC eavesdropper. This model falls into the class of ordered BCs and is extremely rich since the BC (BEC and BSC) provides for a variety of orderings following the respective values of the erasure probability ``$e$" and the crossover probability ``$p$", as it is summarized in the table~\ref{TableBECBSC} and shown in~\cite{NairBECBSC2010}. Let us consider the channel model where: 
\begin{equation}
\mathcal{W}\,:\,\left\{
\begin{array}{lcl}
\cX &  \longmapsto & \cY_1 \equiv \textrm{BEC($e$)}\, , \\
\cX &  \longmapsto & \cY_2 \equiv \textrm{BSC($p_2$)}\, , \\ 
\cX &  \longmapsto & \mathcal{Z} \,\, \equiv \textrm{BSC(p)}\, .  
\end{array}\right. \vspace{1mm}
\end{equation}

\begin{table}[ht!]
\centering
 \begin{tabular}{|c|c|c|c|c|}
\hline 
 $\!\!\! 0 \leq e \leq 2p \!\!\!$ &$\!\!\! 2p < e \leq 4p(1-p) \!\!\!$& $\!\!\! 4p(1-p) < e \leq h(p)\!\!\! $&$\!\!\! h(p) < e \leq 1 \!\!\!$ \\ 
\hline 
  Degraded & Less-noisy & More-capable & Es.Less-noisy  \\
\hline 
\end{tabular}
\caption{\label{TableBECBSC} Different orderings allowed by BEC(e) and BSC(p) models.}
\end{table}

We will consider the case where $Y_1$ is less-noisy that $Y_2$ and where $Z$ is degraded towards $Y_2$. Besides, we make sure that $Z$ is degraded towards $Y_1$. \footnote{It is worth emphasizing here that our choice of $Z$ degraded respect to $Y_2$ follows from that both channels are naturally degraded since these are $BSC$ channels. Otherwise, if $Y_2$ were to be degraded respect to $Z$, no positive rate could  be transmitted to user $2$ .} Summarizing these constraints, we end up with the inequalities: 
\begin{equation}\label{ConstraintBECBSC}
2 p_2 \leq e \leq \min \{2 p, 4 p_2 (1 - p_2)\} \ . 
\end{equation}

\begin{theorem}[Secrecy capacity region of the BEC($e$) /BSC($p_2$) BC with BSC($p$) eavesdropper]
The capacity region of the BC with BEC($e$) / BSC($p_2$)  components and a BSC($p$) eavesdropper, defined by the constraint~\eqref{ConstraintBECBSC} where $1 - 4 p (1 - p) \geq 4 p_2 (1 - p_2)$, is given by the set of rate pairs satisfying: 
\begin{equation}\label{CapacityBECBSC}
\mathcal{C}  \,: \left\{ \begin{array}{rcl}
R_1 &\leq& (1-e) \, h_2(x) + h_2(p) - h_2(p\star x) \ , \\
R_2 &\leq&  h_2(p\star x) - h_2(p_2\star x) \ ,
\end{array}\right.  
\end{equation}
for some $x \in [0:0.5]$.  
\end{theorem}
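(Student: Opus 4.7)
Under the stated constraints the channel falls squarely in the class of less-noisy BCs with a partly-degraded eavesdropper: $e \leq 4 p_2 (1-p_2)$ guarantees that $Y_1 = \mathrm{BEC}(e)$ is less-noisy than $Y_2 = \mathrm{BSC}(p_2)$, and $e \leq 2p$ guarantees that $Z = \mathrm{BSC}(p)$ is (stochastically) degraded from $Y_1$. Hence Theorem~\ref{OrderedWBC} applies, and the task reduces to evaluating its single-letter region specifically for this triple of binary channels.

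\textbf{Achievability.} I would plug into the inner bound of Theorem~\ref{OrderedWBC} the explicit channel-prefixing choice $T = \emptyset$, $U \sim \mathrm{Bern}(1/2)$ independent of $V \sim \mathrm{Bern}(x)$, and $X = U \oplus V$. Because all channels are binary and the XORs decouple cleanly, the mutual-information terms collapse to binary-entropy expressions: one immediately gets $I(U;Y_2) - I(U;Z) = h_2(p \star x) - h_2(p_2 \star x) = B(x)$, while a short entropy computation for the BEC piece reduces the sum-rate bound to $(1-e)\, h_2(x) + h_2(p) - h_2(p_2 \star x) = A(x) + B(x)$. Setting $R_2 = B(x)$ in the sum-rate bound then realizes the corner $(R_1, R_2) = (A(x), B(x))$; sweeping $x$ over $[0, 1/2]$ (together with standard time-sharing) produces the full claimed region.

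\textbf{Converse and main obstacle.} For the converse I would start from the outer bound of Corollary~\ref{CorollaryOuter}, specialized to the less-noisy structure as in the converse part of Theorem~\ref{OrderedWBC}. For fixed auxiliaries $(T,U)$ and binary $X$, I would parameterize the joint law by $\beta_{u,t} \triangleq P(X=1|U=u,T=t)$ and rewrite every conditional entropy as an average of $h_2(\beta_{u,t} \star q)$ for $q \in \{0, p_2, p\}$. The goal is to extract a single $x \in [0,1/2]$ attached to that law such that simultaneously $R_2 \leq B(x)$ and $R_1 + R_2 \leq A(x) + B(x)$; a natural choice is to define $x$ by $B(x) = I(U;Y_2|T) - I(U;Z|T)$, which is well-posed because $B$ is monotone on $[0, h_2(p) - h_2(p_2)]$. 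The hard step, as explicitly flagged in the introduction of the paper, is a novel inequality on conditional entropies of binary sequences --- an analogue of Mrs.~Gerber's Lemma adapted to the mixed BEC/BSC setting --- which is needed to upper-bound the resulting weighted average of $h_2(\beta_{u,t} \star q)$ terms by $A(x) + B(x)$. I expect this inequality to be proved by reducing to a one-variable convexity estimate in $\beta$, and the technical hypothesis $1 - 4p(1-p) \geq 4 p_2(1-p_2)$ to be exactly what is needed to make that one-variable inequality hold; outside that regime the reduction should fail, which is the reason the hypothesis is imposed in the theorem statement.
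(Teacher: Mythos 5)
Your reduction to Theorem~\ref{OrderedWBC} and your achievability evaluation ($T=\emptyset$, $X=U\oplus V$ with $U\sim\mathrm{Bern}(1/2)$, $V\sim\mathrm{Bern}(x)$, yielding exactly $A(x)$ and $B(x)$) coincide with the paper's. The gap is in the converse: you correctly locate the difficulty --- a Mrs.~Gerber-type inequality adapted to the BEC/BSC/BSC triple --- but you only postulate it. That inequality is not a routine one-variable convexity estimate; it is the bulk of the paper's proof, occupying Lemma~\ref{LemmaConvexity} together with Lemma~\ref{LemmaBECBSC}. Concretely, one must show $f_2''(x)f_1'(x)-f_1''(x)f_2'(x)\le 0$ for $f_1(x)=(1-e)h_2(x)+h_2(p)-h_2(p\star x)$ and $f_2(x)=h_2(p_2\star x)-h_2(p\star x)$, which after the substitution $a=1-2p$, $a_2=1-2p_2$, $\alpha=1-2x$ reduces to a term-by-term domination between two power series in $\alpha$ (the sequences $T_k$ and $V_k$); the hypothesis on $(p,p_2)$ enters there, in the form $a^2+a_2^2\le 1$, precisely to make $(V_k)_{k\ge 5}$ decreasing. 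Without that argument (or an equivalent) your converse does not close.

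It is also worth noting that the paper organizes the converse differently from what you propose. Rather than pinning $x$ by matching the $R_2$-bound and then controlling the averaged sum-rate, it compares supporting hyperplanes: for each $\lambda\ge 0$ it maximizes $R_1+\lambda R_2$ over the single-letter region, discards the nonpositive cross term $\lambda\bigl[h_2(p_2\star\alpha)-h_2(p\star\alpha)\bigr]$, and bounds the remaining average $\sum_u P_u\, g(x_u)$ trivially by $\max_x g(x)$, a maximum attained by an admissible symmetric two-point $P_{UX}$. All of the Mrs.~Gerber-type work is thereby pushed into proving that the parametric region is convex, so that the hyperplane comparison implies set inclusion; your route would instead need the concavity statement directly in the averaging/Jensen step. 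Both routes rest on the same convexity core, so either way the series argument must be supplied. (Be aware, finally, that the condition $1-4p(1-p)\ge 4p_2(1-p_2)$ you echoed from the theorem statement appears with the opposite sense in Lemma~\ref{LemmaConvexity} and in its proof, where the working assumption is $1-4p(1-p)\le 4p_2(1-p_2)$, i.e., $a^2+a_2^2\le 1$; guessing that the stated inequality is ``exactly what is needed'' is therefore not safe without checking which direction the argument actually uses.)
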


\begin{proof}
The proof consists in evaluating the capacity region of such an ordered channel given by $\mathcal{R}$, the set of rate pairs $(R_1,R_2)$ satisfying: 
\begin{equation}\label{SingleLetter}  \begin{array}{rcl}
R_1 &\leq& I(X;Y_1 |T U) - I(X;Z|TU) = I(X;Y_1|Z T U) \ , \\
R_2 &\leq& I(U;Y_2|T) - I(U;Z|T) = I(U;Y_2 |Z T) \ , 
\end{array}  
\end{equation} 
and is two fold. The challenging part is obviously the converse part since it requires the use of an inequality, similar in a way to Mrs. Gerber's lemma~\cite{Wyner1973} applied to the secrecy capacity region, which we have been able to prove only under the assumption $1 - 4 p (1 - p) \geq 4 p_2 (1 - p_2)$, although there is strong evidence that the converse can be proved besides this case.  

Note that $T = \emptyset$ maximizes the region since it can easily be shown to be convex and thus, will not need the time-sharing variable $T$. Moreover, we can state a cardinality bound on the auxiliary rv $U$ used in evaluating the previous region following the usual Fenchel-Eggleston-Caratheodory theorem that is it suffices to evaluate the region using an auxiliary rv with a quaternary alphabet.  

First, note that the choice $X = U \oplus V$ where $U \sim \textrm{Bern} (0.5)$, $V \sim \textrm{Bern} (x)$ yields that $X \sim \textrm{Bern} (0.5)$ and that $X|U \sim \textrm{Bern}(x)$. Thus, we can write: 
	\begin{IEEEeqnarray}{rcl}
 I(X;Y_1 |U)&\, = \, & (1-e) H(X|U) = (1-e) h_2(x) \ ,\\
 I(X;Z|U) &=& h_2(p * x) - h_2(p)  \ ,\\
 I(U;Y_2) &=& 1 - h_2(p_2* x) \ , \\
 I(U;Z) &=& 1 - h_2(p*x) \ ,
\end{IEEEeqnarray}
which proves the inclusion of the region $ \mathcal{R} $ in the rate region $\mathcal{C}$, i.e.,  the achievability.  

As for the inclusion in the appositive way, i.e.,  the converse, we will use the following lemma. 
\begin{lemma}\label{LemmaConvexity}
If $1 - 4p(1-p) \leq 4 p_2 (1-p_2)$, then $\mathcal{R}$ defines a convex set. 
\end{lemma}
\begin{IEEEproof}  The proof is given in Appendix~\ref{ProofLemmaConvexity}. \end{IEEEproof}
 
Now, since $\mathcal{R}$ and $\mathcal{C}$ define convex bounded sets, then both are uniquely defined by their supporting hyperplanes. 
And finally, since $\mathcal{R}$ is included in $\mathcal{C}$, it thus suffices to show that all their supporting hyperplanes intersect, so let then $\lambda \in [0: \infty[$. We want to show that\footnote{Note that the maxima are well defined for both regions due to the cardinality bound (for $\mathcal{C}$) and for the closed and bounded interval for $\mathcal{R}$ which results in compact supports for both optimizations.}: 
\begin{equation}
\underset{(R_1,R_2) \in \mathcal{C} }{\max}  R_1 + \lambda R_2 \leq  \underset{(R_1,R_2) \in \mathcal{R} }{\max}  R_1 + \lambda R_2  \ .
\end{equation}
 
Let us choose the following notation: $U$ is an auxiliary rv that takes its values in $\mathcal{U} = \{1 , \dots ,  \|\mathcal{U }\| \}$ following the law: $\mathds{P}(U = u) = P_U(u) \triangleq P_u$. Let us assume that $X$ is a $\textrm{Bern}(\alpha)$ distributed Binary rv and that\footnote{$\mathcal{U}$ is the support of the law $P_U$, as such, $P_{X|U}(0|u)$ is well defined.} $\mathds{P}(X= 0|U = u) =P_{X|U}(0|u)  \triangleq x_u$.

Define the set $\mathcal{P}$ of admissible transition probabilities as: 
\begin{IEEEeqnarray}{rCl}
&&  \mathcal{P}  \triangleq \biggl\{  (\alpha , \textbf{x}_{\|\mathcal{U}\|} ,  \textbf{p}_{\|\mathcal{U}\|} )  =  (\alpha , x_1,\dots,x_{\|\mathcal{U}\|} , p_1, \dots ,p_{\|\mathcal{U}\|} )    \nonumber \\ 
  && \qquad  \qquad   \qquad  \qquad  \in   \left[0:0.5\right]^{\|\mathcal{U}\| +1 } \times \left[0:1\right]^{\|\mathcal{U}\|}     \nonumber \\ 
  &&  \qquad  \qquad \text{ s.t }   \sum_{u=1}^{\|\mathcal{U}\|} p_u = 1 \ , \  \sum_{u=1}^{\|\mathcal{U}\|} p_u x_u  = \alpha   \biggr\}  \ . 
\end{IEEEeqnarray}	
With this, note that:  
\begin{IEEEeqnarray}{rCl}
&& 	   \underset{(R_1,R_2) \in \mathcal{C} }{\max}  R_1 + \lambda R_2 \nonumber \\
 &=& \underset{ \substack{ P_{UX} \\ U  \mkv  X  \mkv  (Y_1,Y_2,Z)} }{\max} I(X;Y_1|U) - I(X;Z|U)    \nonumber \\ 
  && \qquad \qquad \qquad  + \lambda \Bigl[ I(U;Y_2) - I(U;Z) \Bigr] \\
	   &=&  \underset{   (\alpha \,, \,\textbf{x}_{\|\mathcal{U}\|} \,, \,\textbf{p}_{\|\mathcal{U}\|} ) \in \mathcal{P} }  {\max} h_2(p) + \lambda \Bigl[ h_2(p_2*\alpha) - h_2(p*\alpha) \Bigr]  \nonumber \\
	     && \quad + \sum_{u\in \mathcal{U}} P_u  \left\{ \vphantom{ \Bigr]}(1-e) h_2(x_u) - h_2(p*x_u) \right.   \nonumber \\ 
  && \qquad  \left. +  \lambda \Bigl[  h_2(p*x_u) -  h_2(p_2*x_u) \Bigr] \right\} \\
	    &\overset{(a)}{\leq}&  \underset{   (\alpha\, , \,\textbf{x}_{\|\mathcal{U}\|} \,, \,\textbf{p}_{\|\mathcal{U}\|} ) \in \mathcal{P} }  {\max} \quad h_2(p) \nonumber \\
	     && \quad + \sum_{u\in \mathcal{U}} P_u  \left\{ \vphantom{\Bigr]} (1-e) h_2(x_u) - h_2(p*x_u) \right.   \nonumber \\ 
  && \qquad  \left. +  \lambda \Bigl[  h_2(p*x_u) -  h_2(p_2*x_u) \Bigr] \right\} \ \ \ \ \\ 
	    &\overset{(b)}{\leq}&  h_2(p) +   (1-e) h_2(x_u^{\lambda}) - h_2(p*x_u^{\lambda})   \nonumber \\ 
  && \qquad   +  \lambda \Bigl[  h_2(p*x_u^{\lambda}) -  h_2(p_2*x_u^{\lambda}) \Bigr]  \\
	    &=& \underset{(R_1,R_2) \in \mathcal{R} }{\max}  R_1 + \lambda R_2  \ , 
	\end{IEEEeqnarray}	 
	where:
	 \begin{IEEEeqnarray}{rCl}
	x_u^{\lambda} &=& \arg \max \left\{ \vphantom{\Bigr]} (1- e) h_2(x) - h_2(p*x) \right.   \nonumber \\ 
  && \qquad  \left. \qquad  +  \lambda \Bigl[  h_2(p*x) -  h_2(p_2*x) \Bigr] \right\} \ .  
	\end{IEEEeqnarray}
Now, $(a)$ follows from the fact that since $x , p_1 , p_2 \in [0:1/2]$ and $p \geq p_2$, then: 
\begin{IEEEeqnarray}{rCl}
  &&\forall  \alpha \in [0:1/2]   \quad  ,   \quad   p_2 * \alpha \leq p * \alpha  \leq 1/2\\ 
  \textrm{ then }     && \max_{\alpha \in [0:1/2]}  \left[h_2( p_2 * \alpha  ) - h_2( p * \alpha  )\right] = 0  
\end{IEEEeqnarray}	
with equality for $\alpha = 1/2$. As for $(b)$, it is a direct result of the existence of a value of $x_u^{\lambda}$ that maximizes the expression, and from that letting $\mathcal{U} = \{0,1\}$ and $P_0 = P_1 = \frac{1}{2}$ and $U \longmapsto  X \equiv BSC(x_u^{\lambda}) $, leads to this maximum value equality in $(b)$ in addition to being admissible: $P_0 x_u^{\lambda} + P_1 (1-x_u^{\lambda}) = \alpha = \frac{1}{2}$. This ends the proof of equality of the two rate regions. 
\end{proof}

In the sequel, we evaluate the effect of eavesdropping on such a BEC($e$)/BSC($p_2$) BC with a BSC($p$) eavesdropper.  

First note $\mathcal{C}_{std}$ the standard capacity region of the BC without an eavesdropper, $\mathcal{C}$ being its secrecy capacity region. We have that \cite{NairBECBSC2010}: 
\begin{equation} 
\mathcal{C}_{std}  \,: \left\{ \begin{array}{rcl}
R_1 &\leq& (1-e) \, h_2(x)   \ , \\
R_2 &\leq& 1 - h_2(p_2\star x) \ ,
\end{array}\right.  
\end{equation}
for some $x \in [0:0.5]$.

The presence of eavesdropper engenders an impediment on the sum rate given by $1- h_2(p)$, that does not depend on the choice of the channel parameters $(e, p_2)$. As such, it turns out that the channel to user $2$ ,i.e. BSC($p_2$) is very sensitive to such the BSC($p$) eavesdropper in that it could have zero admissible rate $R_2$ if the eavesdropper were to have a channel as good as to allow for $p = p_2$. However, and that's peculiar to the BEC($e$) channel, user $1$ always has strictly positive rates whatever the value of $p$, since $ e \leq 2p \leq h_2(p)$ and thus, a rate of $h_2(p) - 2p > 0 $ is always achievable. 

To illustrate this, we consider the following transmission scheme where $e = 2p$, i.e. the worst eavesdropper is considered for user $1$, and where we vary $p$ in the interval $[p_2:0.5]$. Fig \ref{FiGBECBSC} plots the obtained curves. As expected, the eavesdropper has no impediment on the available rates for both users when $p$ is close to $0.5$, however, as $p$ decreases, the gap between the standard capacity region and the secrecy capacity region increases, and the rate available at user $2$ decreases to zero whilst that of user $1$, stays above a given threshold. 

\begin{figure}[t]
   \centering
   \includegraphics[scale=0.4]{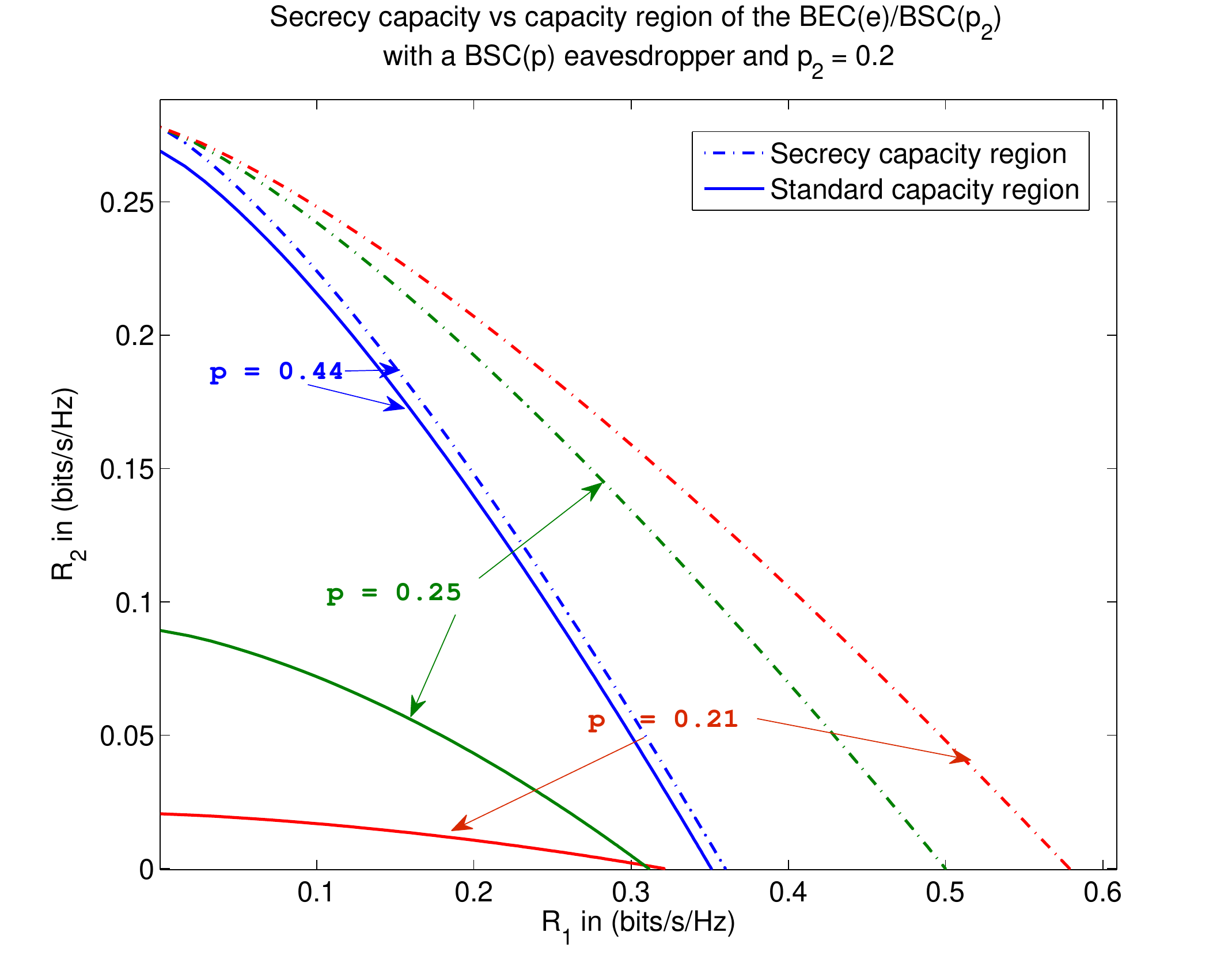} 
   \caption{Secrecy capacity region of the BC with BEC($e$)/BSC($p_2$) components and a BSC($p$) eavesdropper.}
   \label{FiGBECBSC} 
   \end{figure} 
   
   
\section{Proof of Theorem~\ref{TheoremOuter}: Outer Bound} \label{ProofOuterBound}
In this section, we prove the outer bound in Theorem~\ref{TheoremOuter},  since this rate region is symmetric in the rates $R_j$, $j \in \{ 1,2\}$, the constraints will be shown only for the following two single rates and two sum-rates: 
\begin{IEEEeqnarray}{rCl}
    R_1 &\leq& 	I( U_1; Y_1 |T V_1 ) - I(U_1; Z  |TV_1)  \ ,\\
    R_1 &\leq& 	I( U_1; Y_1 Y_2 |T V_1  V_2) - I(U_1; Z  |TV_1 V_2) \ , \\ 
	R_1 + R_2 &\leq& I(X; Y_2|T Z V_1) + I( U_1 S_1; Y_1 |T V_1 ) \nonumber \\
&& \qquad \qquad  - I(U_1 S_1; Z Y_2 |T V_1) \ ,\\ 
	R_1 + R_2 &\leq& I(X; Y_2|T Z V_1 V_2) + I( U_1 S_1 ; Y_1 Y_2 |T V_1 V_2 ) \nonumber \\
&& \qquad \qquad  - I(U_1 S_1 ; Z Y_2 |T V_1 V_2) \ . 
 \end{IEEEeqnarray}
\subsection{Single rates' constraints}
By Fano's inequality we have that:
\begin{equation}
 n R_1 \leq I(W_1 ;Y_1^n ) + n\, \epsilon_n \ . 
\end{equation} 
Moreover, from the secrecy constraint: $I(W_1 ; Z^n ) \leq   n \,\epsilon_n $. 
Thus, one can write that: 
	\begin{IEEEeqnarray}{rCl}
   && n(R_1 -2  \epsilon_n) \nonumber \\
 &\leq& I(W_1 ;Y_1^n ) - I(W_1 ; Z^n ) \\ 
    &=& \sum^n_{i=1} \left[ I(W_1 ;Y_{1,i} | Y^{i-1}_{1} ) - I(W_1 ; Z_i| Z^{n}_{i+1}  )\right]   \\ 
    &\overset{(a)}{=}& \sum^n_{i=1} \left[ I(W_1 Z^{n}_{i+1};Y_{1,i} | Y^{i-1}_{1}) - I(W_1  Y^{i-1}_{1}; Z_i| Z^{n}_{i+1}  )\right] \,\,\,  \,\,\, \\ 
      &\overset{(b)}{=}&  \sum^n_{i=1} \left[ I(W_1;Y_{1,i} | Y^{i-1}_{1} Z^{n}_{i+1}) - I(W_1 ; Z_i| Y^{i-1}_{1}Z^{n}_{i+1}  ) \right]   \,\,\, \,\,\, 
	\end{IEEEeqnarray}
	where $(a)$ and $(b)$ follow both from the Csisz\'ar \& K\"orner's sum-identity~\eqref{sec:csiszarkorner}: 
\begin{IEEEeqnarray}{rCl} 
\sum^n_{i=1} \left[I(Z^{n}_{i+1};Y_{1,i} |  W_1 Y^{i-1}_{1} ) - I ( Y^{i-1}_{1} ; Z_i| W_1 Z^{n}_{i+1}  ) \right]  
&=& 0  \ ,  \qquad  \\
\sum^n_{i=1} \left[I(Z^{n}_{i+1};Y_{1,i} |  Y^{i-1}_{1} ) - I ( Y^{i-1}_{1} ; Z_i|   Z^{n}_{i+1}  ) \right]  &=& 0  \ . 
\end{IEEEeqnarray}
We then define: $U_{1,i} = W_1$, $V_{1,i} = Y^{i-1}_{1}$ and $T_i = Z^{n}_{i+1}$, which yields the first single rate constraint. 

In the same fashion, we can write the other single rates by treating the two outputs $Y_1$ and $Y_2$ together, i.e  $Y_1 \sim (Y_1, Y_2)$ letting $V_{2,i} = Y^{i-1}_{2}$. 
We end up with the couple of constraints: 
\begin{equation} 
	 \left\{\begin{array}{rcl}
    R_1 &\leq& 	I( U_1; Y_1 |T V_1 ) - I(U_1; Z  |TV_1)  \ ,\\
    R_1 &\leq& 	I( U_1; Y_1 Y_2 |T V_1  V_2) - I(U_1; Z  |TV_1 V_2) \ . 
    \end{array} \right.
\end{equation}
Furthermore, similar   all manipulations can be performed by starting from the Fano's inequality and secrecy requirement: 
\begin{equation}
 n R_1 \leq I(W_1 ;Y_1^n |W_2 ) - I(W_1; Z^n|W_2) + n\, \epsilon_n \ . 
\end{equation} 
Thus, we could condition over $ U_{2,i}=W_2 $ the two previous rate constraints to obtain: 
\begin{equation} 
	 \left\{\begin{array}{rcl} 
    R_1 &\leq& I( U_1; Y_1 |T V_1 U_2 ) - I(U_1; Z  |TV_1 U_2)  \ ,\\
    R_1 &\leq& I( U_1; Y_1 Y_2 |T V_1  U_2 V_2) - I(U_1; Z  |TV_1 U_2 V_2) \ .  
    \end{array} \right.
\end{equation} \vspace{-8mm}
\subsection{Sum-rate constraints} \label{SumRatesProof}
Let us start by Fano's inequality writing:
\begin{equation} 
    n \, R_1  \leq  I(W_1 ; Y_1^n ) - I(W_1 ; Y^n_2 Z^n )+  I(W_1 ; Y^n_2 Z^n )+  n\, \epsilon_n    \ . 
	\end{equation}
Then, combining with the following constraint obtained from Fano's inequality: 
\begin{equation}
  n\, R_2 \leq I(W_2 ;Y^n_2 Z^n |W_1) + n\, \epsilon_n \ , 
\end{equation}
we can write: 
\begin{IEEEeqnarray}{rCl}
 n \, ( R_1 + R_2 ) &\leq& I(W_1 ;Y_1^n ) - I(W_1 ;Y^n_2 Z^n ) \nonumber \\
&& \qquad  +  I(W_1 W_2 ;Y^n_2 Z^n )+  2 n\, \epsilon_n    \ .   
\end{IEEEeqnarray}
   
Now, let us elaborate on that: 
\begin{IEEEeqnarray}{rCl}
&& I(W_1 ;Y_1^n ) - I(W_1 ;Y^n_2 Z^n ) \IEEEnonumber  \\
 &=& \sum^n_{i=1} \left[ I(W_1;Y_{1,i} | Y^{i-1}_{1}  ) - I(W_1 ; Y_{2,i} Z_i|  Y^{n}_{2,i+1} Z^{n}_{i+1} )\right]  \qquad  \\
    & \overset{(a)}{=} &  \sum^n_{i=1} \bigl[ I(W_1 Y^{n}_{2,i+1} Z^{n}_{i+1};Y_{1,i}| Y^{i-1}_{1} ) \nonumber \\
&& \qquad \qquad - I(W_1 Y^{i-1}_{1}  ; Y_{2,i} Z_i |  Y^{n}_{2,i+1} Z^{n}_{i+1}  ) \bigr] \\ 
     &=&  \sum^n_{i=1} \bigl[ I(W_1 Y^{i-1}_{1} Y^{n}_{2,i+1} Z^{n}_{i+1};Y_{1,i} )\nonumber \\
&& \qquad \qquad  - I(W_1 Y^{i-1}_{1} Y^{n}_{2,i+1} Z^{n}_{i+1}; Y_{2,i} Z_i  )   \IEEEnonumber \\ 
     && \vphantom{\sum^n_{i=1}}\qquad \qquad  +  I( Y^{n}_{2,i+1} Z^{n}_{i+1}; Y_{2,i} Z_i  )  - I( Y^{i-1}_{1} ;Y_{1,i} )  \bigr]  \ , 
	\end{IEEEeqnarray}
	where $(a)$ is again a consequence of Csisz\'ar \& K\"orner's sum-identity~\eqref{sec:csiszarkorner}: 
\begin{IEEEeqnarray}{rCl}	
&& 	\sum^n_{i=1}    I(Z^{n}_{i+1};Y_{1,i} |  W_1 Y^{i-1}_{1} ) \nonumber  \\
&=& \sum^n_{i=1} I ( Y^{i-1}_{1} ; Y_{2,i}Z_i| W_1 Y^{n}_{2,i+1} Z^{n}_{i+1}  )     \ .  
\end{IEEEeqnarray}	
    As for the other term, note that: 
\begin{IEEEeqnarray}{rCl}
     I(W_1 W_2;Y^n_2 Z^n )  &= &  \sum^n_{i=1} \bigl[  I(W_1 W_2Y^{n}_{2,i+1} Z^{n}_{i+1} ;Y_{2,i} Z_i   )\IEEEnonumber\\
    && \qquad \quad   -  I( Y^{n}_{2,i+1} Z^{n}_{i+1}; Y_{2,i} Z_i  ) \bigr]  \ . \,\,\,
\end{IEEEeqnarray}  
	Looking at the first term of the last equality: 
\begin{IEEEeqnarray}{rCl} 
      && \sum^n_{i=1}  I(W_1 W_2 Y^{n}_{2,i+1} Z^{n}_{i+1} ;Y_{2,i} Z_i ) \IEEEnonumber\\
    &=& \sum^n_{i=1}  \left[ I(W_1 W_2 Y^{n}_{2,i+1} Z^{n}_{i+1} Z^{i-1} ; Y_{2,i} Z_i ) \right. \IEEEnonumber\\
&& \qquad  \left. -  I(Z^{i-1} ; Y_{2,i} Z_i  |W_1 W_2 Y^{n}_{2,i+1} Z^{n}_{i+1}  ) \right]\\
    &\overset{(a)}{=}& \sum^n_{i=1} \left[  I(W_1 W_2 Y^{n}_{2,i+1} Z^{n}_{i+1} Z^{i-1} ; Y_{2,i} Z_i )\right. \IEEEnonumber\\
&& \qquad  \left. -  I(Y^{n}_{2,i+1} Z^{n}_{i+1}  ;  Z_i  |W_1 W_2 Z^{i-1} ) \right]\\ 
    &=& \sum^n_{i=1}  \left[ I(W_1 W_2  Z^{i-1} ;  Z_i )\right. \IEEEnonumber\\
&& \qquad  \left. +   I(W_1 W_2 Y^{n}_{2,i+1} Z^{n}_{i+1} Z^{i-1} ; Y_{2,i}|  Z_i)\right] \\ 
    &=& \sum^n_{i=1}  \left[ I(W_1 W_2   ;  Z_i |  Z^{i-1}) + I( Z^{i-1}; Z_i  ) \right. \IEEEnonumber\\
&& \qquad  \left.+ I(W_1 W_2 Y^{n}_{2,i+1} Z^{n}_{i+1} Z^{i-1} ; Y_{2,i}|  Z_i) \right] \ .
     \end{IEEEeqnarray}	
    Here, $(a)$ is a consequence of Csisz\'ar \& K\"orner's sum-identity~\eqref{sec:csiszarkorner} but between the outputs $Z$ and $(Y_2, Z)$: 
\begin{IEEEeqnarray}{rCl}	
&& 	\sum^n_{i=1}   I(Z^{i-1} ; Y_{2,i} Z_i  |W_1 W_2 Y^{n}_{2,i+1} Z^{n}_{i+1}  )  \nonumber \\
&=& \sum^n_{i=1}   I(Y^{n}_{2,i+1} Z^{n}_{i+1}  ;  Z_i  |W_1 W_2 Z^{i-1} )    \ .
\end{IEEEeqnarray}	
Using the secrecy constraint, one can then notice that: 
\begin{equation}
     \sum^n_{i=1}  I(W_1 W_2 ; Z_i |  Z^{i-1}) =   I(W_1 W_2;  Z^n ) \leq n  \, \epsilon_n  \ . 
\end{equation}
Moreover, observe  that: 
\begin{equation}
     \sum^n_{i=1} I( Z^{i-1}; Z_i  ) =  \sum^n_{i=1}  I( Z^n_{i+1};  Z_i  )  \ , 
\end{equation}
and 
\begin{IEEEeqnarray}{rCl}
  &&   I(W_1 W_2 Y^{n}_{2,i+1} Z^{n}_{i+1} Z^{i-1} ; Y_{2,i}|  Z_i)   \nonumber \\ 
  & \leq&	  I(W_1 W_2 Y^{n}_{2,i+1} Z^{n}_{i+1}  Y_1^{i-1} Z^{i-1} ; Y_{2,i}|  Z_i) \ .
\end{IEEEeqnarray}
The sum-rate can be then bounded as follows: 
\begin{IEEEeqnarray}{rCl}
   && n ( R_1 + R_2  -  2 \epsilon_n)  \IEEEnonumber\\  
   &\leq&  \vphantom{\sum^n} I(W_1 ;Y_1^n ) - I(W_1 ;Y^n_2 Z^n )   +  I(W_1 W_2 ;Y^n_2 Z^n )\,\,\, \\  
      &\leq& \sum^n_{i=1}\left[  I(W_1 Y^{i-1}_{1} Y^{n}_{2,i+1} Z^{n}_{i+1};Y_{1,i} ) \right. \IEEEnonumber\\
&&  \qquad \vphantom{\sum_{i=1}}  \left. - I(W_1 Y^{i-1}_{1} Y^{n}_{2,i+1} Z^{n}_{i+1}; Y_{2,i} Z_i  ) \right. \IEEEnonumber\\ 
     &&  \vphantom{\sum_{i=1}} \qquad + I(W_1 W_2 Y^{n}_{2,i+1} Z^{n}_{i+1} Y_1^{i-1} Z^{i-1} ; Y_{2,i}|  Z_i)  \IEEEnonumber\\ 
      &&  \vphantom{\sum_{i=1}} \qquad + \left.     I( Z^n_{i+1};  Z_i  ) - I( Y^{i-1}_{1} ;Y_{1,i} ) \right]   \  .
	\end{IEEEeqnarray}	
And to end, we use the following remarks: 
\begin{IEEEeqnarray}{rCl}
     && \sum^n_{i=1} \left[I( Z^n_{i+1}; Z_i ) - I( Y^{i-1}_{1} ;Y_{1,i} ) \right]   \IEEEnonumber\\
 &=&    \sum^n_{i=1} \left[ I( Y^{i-1}_{1} Z^n_{i+1}; Z_i ) - I( Y^{i-1}_{1}  Z^n_{i+1};Y_{1,i} )\right] \\
      &=& \sum^n_{i=1}\left[ I( Y^{i-1}_{1} Z^n_{i+1}; Y_{2,i} Z_i ) - I( Y^{i-1}_{1}  Z^n_{i+1};Y_{1,i} ) \right. \IEEEnonumber \\
      &&\quad \quad \left.- I( Y^{i-1}_{1} Z^n_{i+1}; Y_{2,i} | Z_i )\right] \ . 
\end{IEEEeqnarray}	
Thus, combining with the previous equality, we end up with: 
\begin{IEEEeqnarray}{rCl}
   && n  ( R_1 + R_2 -  2   \epsilon_n)  \IEEEnonumber \\
  &\leq& \vphantom{\sum^n} I(W_1 ;Y_1^n ) - I(W_1 ;Y^n_2 Z^n )+  I(W_1 W_2 ;Y^n_2 Z^n ) \\  
      &\leq& \sum^n_{i=1} \left[ I(W_1 Y^{i-1}_{1} Y^{n}_{2,i+1} Z^{n}_{i+1};Y_{1,i} ) - I( Y^{i-1}_{1}  Z^n_{i+1};Y_{1,i} )\right. \IEEEnonumber \\
      &&\quad  \vphantom{\sum_{i=1}} \left. - I(W_1 Y^{i-1}_{1} Y^{n}_{2,i+1} Z^{n}_{i+1}; Y_{2,i} Z_i  )+   I( Y^{i-1}_{1} Z^n_{i+1}; Y_{2,i} Z_i )  \right. \IEEEnonumber \\
      &&\quad \vphantom{\sum_{i=1}}  + \left.  I(W_1 W_2 Y^{n}_{2,i+1} Y_1^{i-1}  Z^{n}_{i+1}  Z^{i-1} ; Y_{2,i}| Z_i )  \right. \IEEEnonumber \\
      &&\quad \vphantom{\sum_{i=1}} \left. - I( Y^{i-1}_{1} Z^n_{i+1}; Y_{2,i} | Z_i ) \right]   \\
     &\overset{(a)}{\leq}& \sum^n_{i=1} \left[ I(W_1 Y^{n}_{2,i+1} ;Y_{1,i} | Y^{i-1}_{1}Z^{n}_{i+1})\right. \IEEEnonumber \\
      &&\quad \left. \vphantom{\sum_{i=1}} - I(W_1 Y^{n}_{2,i+1} ; Y_{2,i} Z_i | Y^{i-1}_{1} Z^{n}_{i+1} )\right.  \IEEEnonumber \\  
  &&\quad \vphantom{\sum_{i=1}}  + \left. I(X_i; Y_{2,i} |  Z_i Y_1^{i-1}  Z^{n}_{i+1} )\right] + 2  n\, \epsilon_n  \ ,    
	\end{IEEEeqnarray}
	where $(a)$ is a consequence of introducing the input $X_i$:
	\begin{IEEEeqnarray}{rCl}
	&& I(W_1 W_2 Y^{n}_{2,i+1}  Z^{i-1} ; Y_{2,i}| Z_i Y_1^{i-1}  Z^{n}_{i+1} )   \nonumber \\ 
  &\leq& I(X_i; Y_{2,i} |  Z_i Y_1^{i-1}  Z^{n}_{i+1} )\ . 
	\end{IEEEeqnarray}
	Letting:  $S_{1,i} = Y^{n}_{2,i+1}$, $U_{1,i} = W_1$, $V_{1,i} = Y^{i-1}_{1}$, and $T_i = Z^{n}_{i+1}$, and noting that: by resorting to a standard time-sharing argument we end up with the following single-letter constraint: 
\begin{IEEEeqnarray}{rCl}
     R_1 + R_2  &\leq&   I(U_1 S_1 ;Y_{1 }| V_1 T) -   I(U_1 S_1 ;Y_{2 }Z | V_1 T )    \nonumber \\ 
  && \qquad    +  I(X_i; Y_{2 } |Z V_1 T)  \ .    
	\end{IEEEeqnarray}

Similarly, we can show the same sum-rate constraint, by replacing the output $Y_1$ with  the two outputs $(Y_1 Y_2)$, which results in: 
\begin{IEEEeqnarray}{rCl} 
	R_1 + R_2 &\leq& I(X; Y_2|T Z V_1 V_2) + I( U_1 S_1 ; Y_1 Y_2 |T V_1 V_2 )   \nonumber \\ 
  && \qquad  - I(U_1 S_1 ; Z Y_2 |T V_1 V_2) \ .  
\end{IEEEeqnarray}

\subsection{Proof of Corollary~\ref{CorollaryOuter}}\label{EliminationS}
In the previous section, we found that an outer bound on the secrecy region for the Wiretap BC can be obtained by considering only the constraints: 
\begin{IEEEeqnarray}{rCl}
    R_1 &\leq& 	I( U_1; Y_1 | T  V_1) - I(U_1; Z  |TV_1 ) \ , \\
    R_2 &\leq&  I( U_2; Y_2 |T V_2 ) - I(U_2; Z  |TV_2)  \ ,\\
	R_1 + R_2 &\leq& I(X; Y_2|T  Z V_1 ) + I( U_1 S_1 ; Y_1 |T V_1  ) \IEEEnonumber \\
      &&\quad  - I(U_1S_1; Z Y_2 |T V_1 ) \ , 	\\
	R_1 + R_2 &\leq& I(X; Y_1|T Z V_2) + I( U_2 S_2; Y_2 |T V_2 )  \IEEEnonumber \\
      &&\quad - I(U_2 S_2; Z Y_1 |T V_2)   \ .
	\end{IEEEeqnarray} 
An important claim is then that the auxiliary rvs $S_1$ and $S_2$ can be eliminated with no impediment to the rate region. Since the region is symmetric in $R_1$ and $R_2$, we only show the claim for $S_1$. We are looking for a random variable $U_1^\star$ such that we can write: 
\begin{IEEEeqnarray}{rCl}
    R_1 &\leq& 	I( U_1^\star; Y_1 | TV_1  ) - I(U_1^\star; Z | TV_1 )  \ , \\
	R_1 + R_2 &\leq& I(X; Y_2| T Z V_1 ) + I( U_1^\star  ; Y_1 |T V_1  )  \IEEEnonumber \\
      &&\quad  - I(U_1^\star; Z Y_2 |T V_1 )	  \ .
	\end{IEEEeqnarray}
To see this, define the two following functions: 
\begin{IEEEeqnarray*}{rCl}
     f_1(Q) & \triangleq& I( U_1; Y_1 | TV_1 ) - I(U_1; Z  |TV_1)  \IEEEnonumber \\
      &&\qquad - I( Q ; Y_1 |TV_1 ) + I(  Q ; Z  |TV_1)  \ , \\
     f_2(Q) & \triangleq& I( U_1 S_1 ; Y_1 |T  V_1) - I(U_1 S_1 ; Y_2 Z |TV_1)  \IEEEnonumber \\
      &&\qquad  - I( Q; Y_1 | TV_1) + I(Q ;  Y_2 Z |TV_1)  \ .
	\end{IEEEeqnarray*}
We note first that: 
\begin{equation}
 	f_1(U_1) = 0 \quad , \quad f_2(U_1S_1) = 0 \ .
\end{equation}
Moreover, 
\begin{IEEEeqnarray}{rCl}
 f_1(U_1 S_1) &+& f_2(U_1)  \IEEEnonumber \\
      &=& - I(U_1 S_1 ; Y_2 | TZV_1   ) + I(U_1; Y_2 | TZV_1 )\,\,\,\,\,\,\\
     &=& - I(S_1 ; Y_2 |T Z U_1 V_1  ) \\
     &\leq& 0 \ . 
	\end{IEEEeqnarray}
Therefore, either  $ f_1(U_1 S_1) \leq 0$ and thus, letting $U_1^\star= (U_1 S_1)$ will not reduce the region, or $ f_2(U_1) \leq 0$ and in this case $U_1^\star= U$ allows us to prove our claim. 
The same holds for the other couple of constraints on $R_2$ and $R_1 + R_2$. 


\section{Proof of Theorem \ref{TheoremInner}: Inner Bound} \label{ProofInnerBound}
	In this section, we prove the achievability of the inner bound stated in Theorem \ref{TheoremInner}. Let $R_1$ and $R_2$ denote the information rates. Let $T$ be any the time sharing random variable. The coding argument is as follows. 
	
	\subsection{Code generation, encoding and decoding procedures}
	
	\subsubsection{Rate splitting}
	We split the message intended to each user of rate $R_j$ into two sub-messages: one of  rate $\bar{R}_j= R_j - R_{0j}$ that will be decoded only by the user, and one of rate $R_{0j}$ that will be carried through the common message. Thus in stead of transmitting the message pair $(  w_1, w_2)$, we  transmit the triple $(\bar{w}_0, \bar{w}_1, \bar{w}_2)$.
	\begin{equation} \label{ConstraintRS} 
	\left\{ \begin{array}{rcl}
    \bar{R}_0 &\triangleq & R_{01} + R_{02}   \ ,  \\ 
    \bar{R}_j &\triangleq & R_j - R_{0j} \geq 0  \ .
	\end{array}\right.
	\end{equation}
	\subsubsection{Codebook generation} 	
	Generate $2^{n T_0}$ sequences $q^n(s_0)$ following 
	\begin{equation} 
	P_Q^n(q^n(s_0)) = \prod_{i=1}^n P_{Q}(q^n_i(s_0)) \ , 
	\end{equation}
	where $T_0 \geq \bar{R}_0$ and map these in $2^{n\bar{R}_0}$ bins indexed by $\bar{w}_0$: $\mathcal{B}^n_0(\bar{w}_0)$.  
	
	For each $s_0 \in [1:2^{nT_0}]$ and for each $j\in\{1,2\}$, generate $2^{nT_j}$ sequences $u_j^n( s_0, s_j)$ following      	
	\begin{equation}  
	P_{U_j|Q} ^n(u_j^n( s_0, s_j) | q^n(s_0)) = \prod_{i=1}^n P_{U_j|Q}( u_{j,i}^n( s_0, s_j) | q^n_i(s_0)) \ . 
	\end{equation} 
	Map these sequences in $2^{n\bar{R}_j}$ bins indexed by $\bar{w}_j$: $\mathcal{B}^n_j(s_0,\bar{w}_j) $ and consisting in $2^{n(T_j-\bar{R}_j)}$ n-sequences. Each of these bins are divided into $2^{n\tilde{R}_j}$ sub-bins indexed by $l_j$: $\mathcal{B}^n_j(s_0,\bar{w}_j,l_j)$, thus each bin contains $2^{n(T_j-\bar{R}_j-\tilde{R}_j)}$ sequences where $0 \leq \tilde{R}_j \leq T_j-\bar{R}_j$. 
	
	The codebook consisting of all the bins is known to all terminals, including the eavesdropper.

	\begin{figure}[t]
   \centering
   \includegraphics[scale=.55]{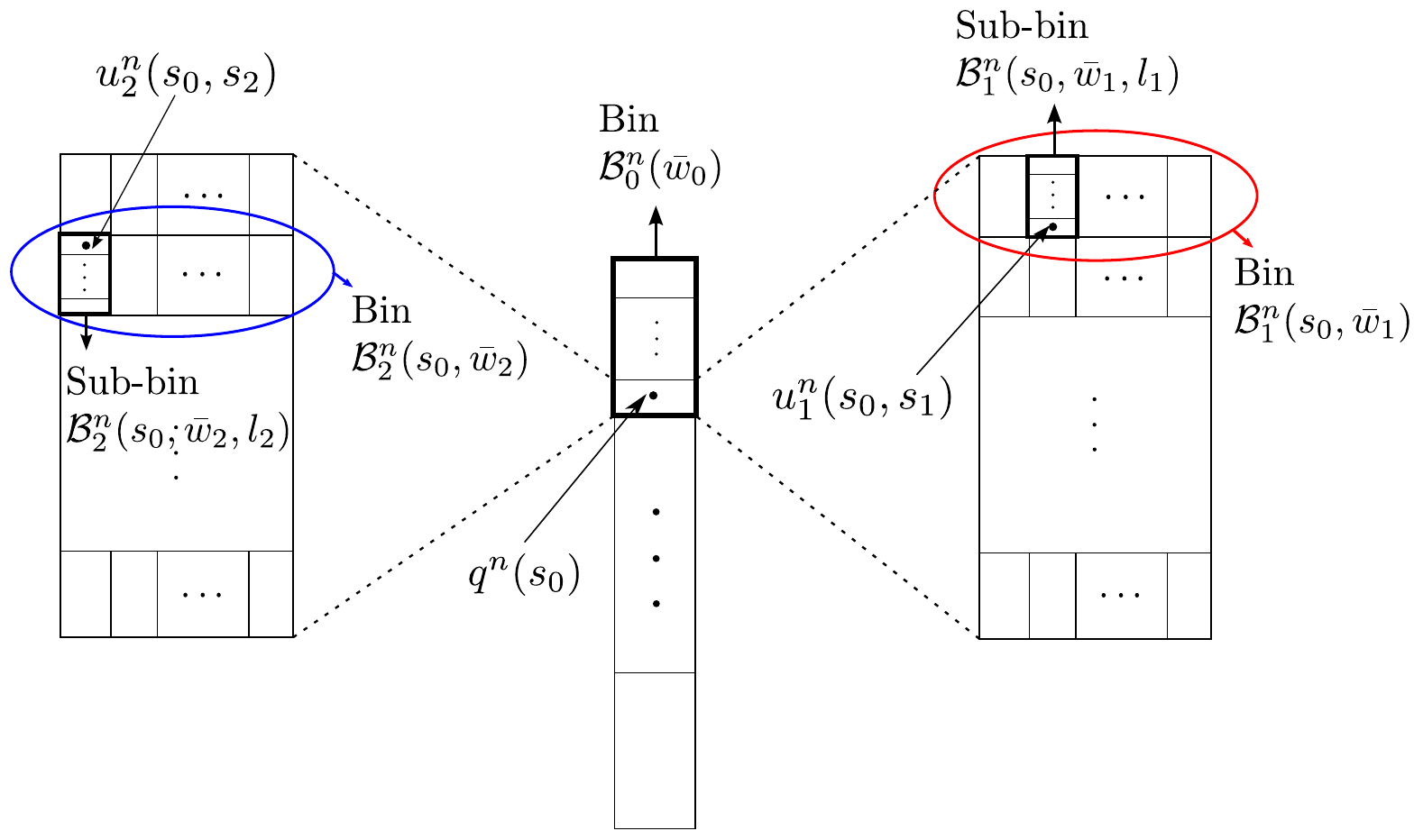} 
   \caption{Codebook generation and encoding.}
   \label{WBCEncoding} 
   \end{figure}
	\subsubsection{Encoding} 
	Fig.~\ref{WBCEncoding} plots the encoding operation.  To send $(\bar{W}_0, \bar{W}_1, \bar{W}_2)$, the encoder selects at random an index $s_0$ such that $q^n(s_0) \in \mathcal{B}^n_0(\bar{w}_0)$. Then, in the product bin $\mathcal{B}^n_1(s_0,\bar{w}_1) \times \mathcal{B}^n_2(s_0,\bar{w}_2)$, it chooses at random a pair of sub-bins $\mathcal{B}^n_j(s_0,\bar{w}_1,l_1)$ and $\mathcal{B}^n_2(s_0,\bar{w}_2,l_2)$ indexed by $l_1$ and $l_2$. In the corresponding product sub-bin, it looks for a pair of sequences indexed with $s_1$ and $s_2$ satisfying: 
	\begin{equation}
		\left(  q^n(s_0) , u_1^n( s_0 , s_1), u_2^n( s_0 , s_2) \right) \in \typ{Q U_1 U_2} \ . 
	\end{equation}
	Based on the Mutual Covering Lemma~\cite{cover2006elements}, the encoding will succeed if the following inequalities hold: 
	\begin{equation}
	\left\{\begin{array}{rcl} 
	 T_1 - (\bar{R}_1 +  \tilde{R}_1) + T_2 - (\bar{R}_2 + \tilde{R}_2)  &> & I(U_1; U_2 |Q)  \ , \\ 
	 0  &\leq& \tilde{R}_1 \leq T_1-\bar{R}_1 \ , \\ 
	 0 &\leq& \tilde{R}_2 \leq T_2-\bar{R}_2 \ . 
	 \end{array}\right.
	\end{equation}
	
	\subsubsection{Decoding}
	Upon receiving $y_j^n$, decoder $j$ looks  jointly for a pair of indices $(s_0,s_j)$ such that: 
	\begin{equation}
		\left(  q^n(s_0) , u_j^n( s_0 , s_j), y^n_j\right) \in
		\typ{Q U_j Y_j} \ . 
	\end{equation}
	From the decoded indices $s_0$ and $s_j$, it can infer the initial values of both $\bar{W}_0$ and $\bar{W}_j$. 
	
	Based on Lemma~\ref{lem:jointTypicality}, the error probability can be made arbitrarily small provided that: 
	\begin{equation}
		 \left\{\begin{array}{rcl} 
		 T_j &\leq& I(U_j; Y_j | Q) \ , \\
		 T_j + T_0 &\leq& I(Q U_j; Y_j) \ .
		\end{array}\right.
		\end{equation}
	
	\subsection{Equivocation analysis}
	We find conditions on the rates $T_0 , T_1 , T_2 $ and $\tilde{R}_1 , \tilde{R}_2$ to achieve perfect secrecy for all message triples $(\bar{W}_0, \bar{W}_1, \bar{W}_2)$. 
	
	To this end, we first note that it suffices to find conditions for which $\frac{1}{n}I(\bar{W}_0\bar{W}_1 \bar{W}_2 ; Z^n |\mathcal{C})$ can be made arbitrarily small where $ \mathcal{C} $ denotes the codebook used in the transmission, the latter constraint leading to the individual secrecy requirements being fulfilled. 
	
	Note that: 
    \begin{IEEEeqnarray}{rCl}
	  &&  I(\bar{W}_0 \bar{W}_1 \bar{W}_2 ; Z^n | \mathcal{C} ) \IEEEnonumber \\
      &=& n(\bar{R}_0 + \bar{R}_1 + \bar{R}_2) - H(\bar{W}_0 \bar{W}_1\bar{W}_2 | Z^n , \mathcal{C} )  \\
	   &\overset{(a)}{=}& n(\bar{R}_0 + \bar{R}_1 + \bar{R}_2) - H(S_0 S_1 S_2 | Z^n ,\mathcal{C} )\nonumber \\
	   && \qquad \qquad + H(S_0 S_1 S_2 | Z^n \bar{W}_0 \bar{W}_1 \bar{W}_2 ,\mathcal{C}) \label{WTP1}   \ , 
	\end{IEEEeqnarray}	
	where (a) follows from that, knowing the codebook, the sent messages are deterministic functions of the binning indices chosen. 
	 
    We first start by giving a lower bound to $H(S_0S_1S_2 | Z^n ,\mathcal{C})$.    
    Let us write: 
    \begin{IEEEeqnarray}{rCl}
		&& H(S_0S_1S_2 | Z^n, \mathcal{C} ) \IEEEnonumber\\
	   &=& H(S_0 | Z^n ,\mathcal{C}  ) + H(S_1S_2 | Z^n ,S_0 ,\mathcal{C} ) \\
	   &=& H(S_0 |\mathcal{C} ) - I(S_0 ; Z^n |\mathcal{C}  ) + H(S_1S_2 | S_0 ,\mathcal{C} ) \nonumber \\ 
	   	&& \quad -  I(S_1 S_2  ; Z^n |S_0 ,\mathcal{C} ) \\ 
	   &=& n T_0 - I(S_0 ; Z^n|\mathcal{C} ) +  H(S_1S_2 | S_0 ,\mathcal{C} ) \nonumber \\ 
	   	&& \quad- I(S_1 S_2  ; Z^n |S_0 ,\mathcal{C} ) \\
	   &=& n(T_0 + T_1 + T_2) - I(S_0 ; Z^n|\mathcal{C} ) \nonumber \\ 
	   	&& \quad- I(S_1;S_2| S_0 ,\mathcal{C} ) - I(S_1 S_2  ; Z^n |S_0 ,\mathcal{C} ) \\
	   &\overset{(a)}{=}&  n(T_0 + T_1 + T_2) - I(Q^n ; Z^n | \mathcal{C}) \nonumber \\ 
	   	&& \quad - I(U_1^n ; U_2^n | Q^n , \mathcal{C}) - I(U_1^n U_2^n ; Z^n |Q^n, \mathcal{C}) \ , 
	\end{IEEEeqnarray}	
   where (a) follows similarly from the fact that, knowing the codebook, the sent sequences are functions of the chosen binning indices. 
   
	The next lemma provides the main result for carrying on with the analysis. 
	\begin{lemma}\label{Inequalities}
	Assuming the codebook generation presented before, the next inequalities hold true: 
    \begin{IEEEeqnarray}{rCl}
		 I(Q^n ; Z^n | \mathcal{C}) &\leq& n I(Q;Z) + n \, \epsilon_n  \ , \\
		 I(U_1^n ; U_2^n | Q^n , \mathcal{C}) &\leq& n I(U_1; U_2 |Q) + n \, \epsilon_n \ , \\
		  I(U_1^n  U_2^n; Z^n |Q^n , \mathcal{C}) &\leq& n I(U_1 U_2; Z |Q) + n \,  \epsilon_n  \ .   
		\end{IEEEeqnarray}
	\end{lemma}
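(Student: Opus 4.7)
The plan is to exploit the random-codebook construction to argue that the joint distribution of $(Q^n, U_1^n, U_2^n, X^n, Z^n)$ induced by the encoding, averaged over the codebook $\mathcal{C}$, is close in expected total variation to the i.i.d. product $\prod_{i=1}^n P_{QU_1U_2XZ}$. Each of the three inequalities then reduces, modulo an $n\epsilon_n$ additive term, to the single-letter mutual information on the right-hand side via continuity of entropy on the (finite) alphabets.

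For the second inequality, the Mutual Covering Lemma invoked at the encoding step already guarantees that with probability approaching one over $\mathcal{C}$ and the message-dependent choice of $(S_0, S_1, S_2)$, the selected triple $(Q^n, U_1^n, U_2^n)$ lies in $\typ{QU_1U_2}$. Combined with the fact that each $u_j^n$ is drawn i.i.d.\ $P_{U_j|Q}$ and that the sub-bin indices are uniformly randomized, one shows that $P_{U_1^n U_2^n | Q^n, \mathcal{C}}$ is close in expected total variation to $\prod_{i=1}^n P_{U_1 U_2|Q}(\cdot, \cdot | Q_i)$. A standard continuity-of-entropy argument (or Fano applied to the typicality indicator) then yields $I(U_1^n; U_2^n | Q^n, \mathcal{C}) \leq n I(U_1; U_2 | Q) + n\epsilon_n$.

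For the third inequality, I would write $I(U_1^n U_2^n; Z^n | Q^n, \mathcal{C}) = H(Z^n | Q^n, \mathcal{C}) - H(Z^n | Q^n, U_1^n, U_2^n, \mathcal{C})$. By the i.i.d.\ generation of $X^n$ from $(U_1^n, U_2^n)$ and memorylessness of the channel, the second term equals $n H(Z | Q U_1 U_2)$ exactly. The first term is bounded above by $n H(Z|Q) + n\epsilon_n$ via the same approximate-i.i.d.\ property applied to the pair $(Q^n, Z^n)$, giving the result. Inequality~1 then follows by coupling $H(Z^n | \mathcal{C}) \leq n H(Z)$ (obtained from sub-additivity together with the fact that the marginal of each $Z_i$ coincides with $P_Z$) with the lower bound $H(Z^n | Q^n, \mathcal{C}) \geq n H(Z|Q) - n\epsilon_n$.

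The main technical obstacle is precisely the lower bound $H(Z^n | Q^n, \mathcal{C}) \geq n H(Z|Q) - n\epsilon_n$ underlying inequalities~1 and~3. A naïve chain-rule or data-processing argument fails here, because conditioning reduces entropy in the wrong direction; instead one invokes a soft-covering / channel-resolvability type statement, which guarantees that whenever the number of $(u_1^n, u_2^n)$ codewords attached to each $q^n$ exceeds $2^{n I(U_1 U_2; Z | Q)}$, the induced output distribution $P_{Z^n | Q^n, \mathcal{C}}$ is close in expected total variation to $\prod_i P_{Z|Q}(\cdot | Q_i)$. The careful choice of the rates $T_0, T_1, T_2, \tilde{R}_1, \tilde{R}_2$ in the code construction is exactly what makes the resolvability condition hold, so at this stage one simply needs to verify it and then invoke the standard entropy-continuity bound.
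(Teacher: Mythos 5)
Your route is genuinely different from the paper's. The paper never approximates the induced code distribution by an i.i.d.\ one: it introduces the indicator $\cE$ of the event that the selected triple $(q^n,u_1^n,u_2^n)$ lies in $\typ{QU_1U_2}$ (an event of probability tending to one by the mutual covering step), writes $I(U_1^n;U_2^n\,|\,Q^n,\mathcal{C})\le I(U_1^n;U_2^n\,|\,Q^n,\cE,\mathcal{C})+1$, absorbs the $\cE=0$ branch into $n\,\epsilon_n$ via a trivial cardinality bound, and bounds the $\cE=1$ branch by $nI(U_1;U_2|Q)+n\epsilon_n$ through the joint typicality lemma. That argument consumes only what the codebook generation and the Mutual Covering Lemma already guarantee; no distributional approximation, and hence no resolvability hypothesis, is needed.

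The gap in your proposal is the provenance of the resolvability condition. You correctly isolate the lower bound $H(Z^n|Q^n,\mathcal{C})\ge nH(Z|Q)-n\epsilon_n$ as the technical obstacle for inequalities 1 and 3, and you are right that it requires a soft-covering statement with roughly $2^{nI(U_1U_2;Z|Q)}$ codewords superimposed on each $q^n$. But at the point where this lemma is stated, no such condition on $T_1,T_2$ is in force: the codebook generation only imposes $T_0\ge\bar{R}_0$, the bin structure, and the covering condition $T_1-\bar{R}_1-\tilde{R}_1+T_2-\bar{R}_2-\tilde{R}_2>I(U_1;U_2|Q)$. The constraints of the form $T_0+T_1+T_2-(\bar{R}_0+\bar{R}_1+\bar{R}_2)\ge I(QU_1U_2;Z)+I(U_1;U_2|Q)$ are the \emph{output} of the equivocation analysis, which itself consumes this lemma; importing them as hypotheses here is circular unless you restructure the whole achievability proof to impose them up front. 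You therefore cannot ``simply verify'' the condition: for rate tuples permitted by the codebook generation but violating resolvability (e.g.\ $T_1=T_2=0$ with $U_1$ independent of $U_2$ given $Q$ and $I(Q;Z)<T_0<I(QU_1U_2;Z)$), one has $H(Z^n|Q^n,\mathcal{C})\approx nH(Z|QU_1U_2)<nH(Z|Q)$, so your intermediate bound fails --- your diagnosis in fact exposes that the first and third inequalities implicitly lean on the secrecy rate constraints, something the paper glosses over by proving only the second inequality in detail. A secondary, harmless issue: $H(Z^n|Q^n,U_1^n,U_2^n,\mathcal{C})=nH(Z|QU_1U_2)$ does not hold \emph{exactly}, since the typicality-biased selection perturbs the per-letter marginals; it holds only up to $n\epsilon_n$.
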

	\begin{IEEEproof} The proof of this lemma is presented in Appendix \ref{ProofLemmaInequalities}. \end{IEEEproof}
Lemma~\ref{Inequalities}  allows us thus to write: 
	\begin{IEEEeqnarray}{rCl}\label{WTP2}
	 \dfrac{1}{n}H(S_0  S_1 S_2 | Z^n, \mathcal{C} ) &\geq & T_0 + T_1 + T_2 - I(Q U_1 U_2; Z) \nonumber \\
	 &&    - I(U_1; U_2 | Q)  \ .
	 \end{IEEEeqnarray}
	Now, let us upper bound the remainder term to be studied: $H(S_0S_1S_2 | Z^n \bar{W}_0 \bar{W}_1 \bar{W}_2 ,\mathcal{C})$. 
	
	The following Lemma is useful to carry on with the proof. 
	\begin{lemma}\label{LemmaUpperBound}
	Assuming the same coding scheme presented before, then 
	\begin{IEEEeqnarray}{rCl}
	&& \limsup_{n \rightarrow \infty }\dfrac{1}{n}  H(S_0 S_1 S_2 | Z^n \bar{W}_0 \bar{W}_1 \bar{W}_2 ,\mathcal{C} ) \nonumber \\
	&& \qquad \leq  \max \left\{0 , I_1, I_2, I_3 , I_4\right\} \ , 
	\end{IEEEeqnarray}
	 where 
	 \begin{IEEEeqnarray}{rCl}
	I_1 &=& T_1 - \bar{R}_1 - I(U_1; Z U_2 | Q) \ , \\ 
	I_2 &=& T_2 - \bar{R}_2 - I(U_2; Z U_1 | Q)  \ , \\
	I_3 &=& T_1 - \bar{R}_1 + T_2 - \bar{R}_2  \IEEEnonumber \\ 
	  &&   \qquad  - I(U_1U_2  ; Z | Q) - I(U_1; U_2 |Q) \ ,  \\
	I_4 &=& T_0 - \bar{R}_0 + T_1 - \bar{R}_1 + T_2 - \bar{R}_2  \IEEEnonumber \\ 
	  &&   \qquad  - I(QU_1U_2  ; Z) - I(U_1; U_2 |Q) \ .
	\end{IEEEeqnarray}	 
	\end{lemma}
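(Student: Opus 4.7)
The aim is to upper bound $H(S_0 S_1 S_2 \mid Z^n \bar{W}_0 \bar{W}_1 \bar{W}_2, \mathcal{C})$ by a list-decoding argument tailored to the random binning construction. Conditioned on the codebook and the messages $(\bar{w}_0,\bar{w}_1,\bar{w}_2)$, the triple $(S_0,S_1,S_2)$ is supported on the product of bins $\mathcal{B}_0^n(\bar{w}_0)$ and $\mathcal{B}_j^n(s_0,\bar{w}_j)$, $j\in\{1,2\}$. Define the ambiguity list
\[
\mathcal{L}(z^n) \triangleq \bigl\{(s_0,s_1,s_2)\!:\, q^n(s_0)\in\mathcal{B}_0^n(\bar{w}_0),\ u_j^n(s_0,s_j)\in\mathcal{B}_j^n(s_0,\bar{w}_j),\ (q^n(s_0),u_1^n,u_2^n,z^n)\in \typ{QU_1U_2Z}\bigr\}.
\]
The joint AEP guarantees that the transmitted triple lies in $\mathcal{L}(Z^n)$ with probability at least $1-\epsilon_n$, hence a conditional Fano-type argument yields
\[
H(S_0 S_1 S_2 \mid Z^n \bar{W}_0 \bar{W}_1 \bar{W}_2, \mathcal{C}) \leq n\epsilon'_n + \mathds{E}\bigl[\log|\mathcal{L}(Z^n)|\bigr].
\]

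Next I would partition $\mathcal{L}(z^n)\setminus\{(S_0,S_1,S_2)\} = \mathcal{L}_1\cup\mathcal{L}_2\cup\mathcal{L}_3\cup\mathcal{L}_4$ according to the error pattern, where $\mathcal{L}_1$ (resp.\ $\mathcal{L}_2$) collects triples differing from the truth only in $s_1$ (resp.\ $s_2$), $\mathcal{L}_3$ those differing in both $s_1$ and $s_2$ but sharing the correct $s_0$, and $\mathcal{L}_4$ all triples with $s_0$ wrong. A standard random-coding computation yields $\mathds{E}[|\mathcal{L}_k|] \leq 2^{nI_k + n\delta_n}$ for each $k$. For $\mathcal{L}_1$, a spurious $u_1^n(s_0,s_1)$ is, conditionally on $q^n(s_0)$, independent of the true $u_2^n$ and $z^n$ with marginal $P_{U_1\mid Q}^n$, giving a per-sequence joint-typicality probability of $2^{-nI(U_1;ZU_2\mid Q)}$ against $2^{n(T_1-\bar{R}_1)}$ candidates; this produces exactly $I_1$, and $\mathcal{L}_2$ is symmetric. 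For $\mathcal{L}_3$ and $\mathcal{L}_4$ the essential observation is that in the codebook construction $u_1^n$ and $u_2^n$ were generated \emph{independently} given $q^n(s_0)$, not from the joint conditional $P_{U_1 U_2 \mid Q}$, so the probability that a spurious pair is jointly typical with $z^n$ picks up an additional mutual-covering factor of $I(U_1;U_2\mid Q)$ on top of the usual $I(U_1U_2;Z\mid Q)$ (resp.\ $I(QU_1U_2;Z)$).

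Combining by Jensen's inequality and the elementary bound $\log\bigl(1+\sum_k 2^{na_k}\bigr) = n\max_k\{0,a_k\} + o(n)$ yields
\[
\limsup_{n\to\infty}\tfrac{1}{n}\mathds{E}\bigl[\log|\mathcal{L}(Z^n)|\bigr] \leq \max\{0,I_1,I_2,I_3,I_4\},
\]
which, together with the conditional Fano bound above, gives the claim.

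\textbf{Main obstacle.} The delicate step is the cardinality bound for $\mathcal{L}_3$ and $\mathcal{L}_4$. It rests on the identity
\[
H(U_1\mid Q)+H(U_2\mid Q)-H(U_1 U_2\mid Q Z) = I(U_1U_2;Z\mid Q)+I(U_1;U_2\mid Q),
\]
(and its three-variable analogue with $Q$ unknown), which isolates the mutual-covering penalty $I(U_1;U_2\mid Q)$ as the intrinsic cost of having generated $u_1^n$ and $u_2^n$ independently given $q^n$. Once this identity is in hand, the remainder is standard joint-typicality bookkeeping combined with Markov's inequality to pass from expected list sizes to almost-sure logarithmic bounds.
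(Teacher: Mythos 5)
Your proposal is correct and follows essentially the same route as the paper's proof: the ambiguity list $\mathcal{L}(z^n)$ is the paper's set $\mathcal{S}$, the partition into four error patterns with the mutual-covering penalty $I(U_1;U_2|Q)$ for spurious pairs generated independently given $q^n$ matches the paper's case analysis, and the final passage via Jensen's inequality and the log-sum-exp bound is exactly the paper's steps $(c)$ and $(d)$. The only cosmetic difference is that you phrase the reduction to $\mathds{E}[\log|\mathcal{L}(Z^n)|]$ as a conditional Fano-type argument, whereas the paper implements it explicitly with an indicator variable $\cE$ for the event that the true triple lies in the list; these are the same argument.
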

	\begin{IEEEproof} This Lemma is proved in Appendix \ref{ProofLemmaUpperBound}.\end{IEEEproof}
	
	As a conclusion of this lemma, and combining \eqref{WTP1} and \eqref{WTP2} we can conclude that: 
	 \begin{IEEEeqnarray}{rCl}
	&& \dfrac{1}{n} I(\bar{W}_0 \bar{W}_1 \bar{W}_2 ; Z^n | \mathcal{C} ) -  \epsilon_n   \IEEEnonumber \\
	 &\leq &  \bar{R}_0 + \bar{R}_1 + \bar{R}_2 - (T_0 + T_1 + T_2)   + I(QU_1U_2  ; Z)\IEEEnonumber \\ 
	 && \quad +  I(U_1; U_2 |Q) \max\left\{0 , I_1, I_2 , I_3 , I_4\right\}  \\ 
	  &=& \max \big\{ \bar{R}_0 + \bar{R}_1 + \bar{R}_2 - (T_0 + T_1 + T_2)  \IEEEnonumber \\
      &&\qquad \qquad  + I(QU_1U_2  ; Z) + I(U_1; U_2 |Q) \  , \IEEEnonumber \\ 
	  &&   \qquad  \bar{R}_0 - T_0 + \bar{R}_2 - T_2 + I(QU_2;Z) \  , \IEEEnonumber\\
	  &&  \qquad  \bar{R}_0 - T_0 + \bar{R}_1 - T_1 + I(QU_1;Z) \ ,  \IEEEnonumber \\ 
	  &&   \qquad \bar{R}_0 - T_0 + I(Q; Z) \ ,   \    0    \big\}  \ . 
	\end{IEEEeqnarray}	
	Hence,  full secrecy is guaranteed by forcing all operands in the max term to be  less  than zero.  
	
By collecting all inequalities and applying FME  on the rates $R_{01}$ and $R_{02}$ (see details in Appendix~\ref{appendix-missing}), we obtain the desired rate region: 
 \begin{IEEEeqnarray}{rCl}
	    R_1 &\leq& I( Q U_1; Y_1 ) - I( Q U_1; Z )  \ ,\\
		R_2 &\leq& I( Q U_2; Y_2 ) - I( Q U_2; Z )   \ ,\\
		R_1 + R_2 &\leq& I( U_1; Y_1| Q ) + I( Q U_2; Y_2 ) \IEEEnonumber \\ 
	  &&   \qquad - I(Q U_1 U_2 ; Z) - I(U_1; U_2|Q )  \ ,\\
		R_1 + R_2 &\leq& I( U_2; Y_2| Q ) + I( Q U_1; Y_1 ) \IEEEnonumber \\ 
	  &&   \qquad  - I(Q U_1 U_2 ; Z)  - I(U_1; U_2|Q ) \ ,\\
		R_1 + R_2 &\leq& I( Q U_1; Y_1 ) + I( Q U_2; Y_2 )- I(Q U_1 U_2 ; Z) \IEEEnonumber \\ 
	  &&   \qquad   -  I(U_1; U_2|Q )- I(Q;Z)   \ .  
		\end{IEEEeqnarray}
		Obviously, the time sharing variable $T$ can be added and thus, the achievability of the region~\eqref{InnerBoundWBC} is proved. \qed 
		


\section{Summary and Discussion}
In this work, we investigated the secrecy capacity region of the general memoryless two-user Wiretap Broadcast Channel (WBC). We derived a novel outer bound which implies, to the best of our knowledge, all known capacity results in the corresponding setting while by removing secrecy constraints it performs as well as the best-known outer bound for the general Broadcast Channel (BC). An inner bound on the secrecy capacity region of the WBC was also derived by simply using existent encoding techniques based on random binning and stochastic encoders. These  bounds allowed us to characterize  the secrecy capacity region of several classes of channels, including  the deterministic BC with a general eavesdropper, the semi-deterministic BC with a more-noisy eavesdropper and the less-noisy BC with a degraded eavesdropper, as well as some classes of ordered BCs previously studied. Furthermore, the secrecy capacities of the BC with BEC/BSC components and a BSC eavesdropper, as well as the product of two inversely ordered BC with a degraded eavesdropper were also characterized. 

In the same spirit of Corollary~\ref{CorollaryOuter}, a more general study of the role of the auxiliary variables of the outer bound in Theorem~\ref{TheoremOuter}  may lead to the characterization of capacity for other classes of Wiretap BCs and this will be object of future work. 
 
\appendices
\section{Useful Notions and Results}

The appendix below provides basic notions on some concepts used in this paper.

\label{sec:typical}

Following~\cite{csiszar1982information}, we use in this paper \emph{strongly typical sets} and the so-called \emph{Delta-Convention}. 
Some useful facts are recalled here.
Let $X$ and $Y$ be random variables on some finite sets $\cX$ and $\cY$, respectively. We denote by $P_{XY}$ (resp. $P_{Y|X}$, and $P_X$) the joint probability distribution of $(X,Y)$ (resp. conditional distribution of $Y$ given $X$, and marginal distribution of $X$). 

\begin{definition}
For any sequence $x^n\in\cX^n$ and any symbol $a\in\cX$, notation $N(a|x^n)$ stands for the number of occurrences of $a$ in $x^n$.
\end{definition}

\begin{definition}
A sequence $x^n\in\cX^n$ is called \emph{(strongly) $\delta$-typical} w.r.t. $X$ (or simply \emph{typical} if the context is clear) if
\[
\abs{\frac1n N(a|x^n) - P_X(a)} \leq \delta \ \text{ for each } a\in\cX \ ,
\]
and $N(a|x^n)=0$ for each $a\in\cX$ such that $P_X(a)=0$.
The set of all such sequences is denoted by $\typ{X}$.
\end{definition}

\begin{definition}
Let $x^n\in\cX^n$.
A sequence $y^n\in\cY^n$ is called \emph{(strongly) $\delta$-typical (w.r.t. $Y$) given $x^n$} if $ \text{for all } a\in\cX, b\in\cY $
\[
\abs{\frac1n N(a,b|x^n,y^n) - \frac1n N(a|x^n)P_{Y|X}(b|a)} \leq \delta \ ,
\]
and, $N(a,b|x^n,y^n)=0$ for each $a\in\cX$, $b\in\cY$ such that $P_{Y|X}(b|a)=0$.
The set of all such sequences is denoted by $\typ{Y|x^n}$.
\end{definition}

\emph{Delta-Convention~\cite{csiszar1982information}:\ }
For any sets $\cX$, $\cY$, there exists a sequence $\{\delta_n\}_{n\in\bN^*}$ such that lemmas below hold.\footnote{As a matter of fact, $\delta_n\to0$ and $\sqrt{n}\,\delta_n\to\infty$ as $n\to\infty$.}
From now on, typical sequences are understood with $\delta=\delta_n$. 
Typical sets are still denoted by $\typ{\cdot}$.

\begin{lemma}[{\cite[Lemma~1.2.12]{csiszar1982information}}]\label{lem-concentration}
There exists a sequence $\eta_n\toas{n\to\infty}0$ such that
\[
P_X^n(\typ{X}) \geq 1 - \eta_n \ .
\]
\end{lemma}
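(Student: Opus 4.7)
The plan is to bound the probability of atypicality by a union bound combined with Chebyshev's inequality applied to empirical frequencies. First, I observe the inclusion
\begin{equation*}
\{X^n \notin \typ{X}\} \subseteq \bigcup_{a \in \cX} \left\{ \left| \tfrac{1}{n} N(a|X^n) - P_X(a) \right| > \delta_n \right\},
\end{equation*}
noting that the auxiliary condition $N(a|X^n) = 0$ for every $a \in \cX$ with $P_X(a) = 0$ holds with probability one and therefore plays no role.

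For each fixed $a \in \cX$, the count $N(a|X^n) = \sum_{i=1}^n \mathbf{1}\{X_i = a\}$ is a sum of i.i.d.\ Bernoulli random variables with mean $P_X(a)$ and variance $P_X(a)(1-P_X(a)) \le 1/4$. Chebyshev's inequality thus yields
\begin{equation*}
\mathds{P}\left\{ \left| \tfrac{1}{n} N(a|X^n) - P_X(a) \right| > \delta_n \right\} \le \frac{P_X(a)\bigl(1-P_X(a)\bigr)}{n\,\delta_n^2} \le \frac{1}{4\, n\, \delta_n^2}.
\end{equation*}
Applying the union bound across the finite alphabet $\cX$, I obtain
\begin{equation*}
1 - P_X^n\bigl(\typ{X}\bigr) \le \frac{\|\cX\|}{4\, n\, \delta_n^2} \;=:\; \eta_n.
\end{equation*}

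Finally, the Delta-Convention stipulates $\sqrt{n}\,\delta_n \to \infty$, which is equivalent to $n\,\delta_n^2 \to \infty$, and hence $\eta_n \to 0$ as $n \to \infty$, which is the desired conclusion. There is no real obstacle here, since this is a textbook second-moment concentration argument on a finite alphabet; the only delicate point is that the sequence $\delta_n$ must simultaneously shrink to zero (to make the typicality notion tight) and decay slowly enough for $n\,\delta_n^2$ to diverge (so that Chebyshev's bound still vanishes), and this balance is precisely what the Delta-Convention is engineered to guarantee.
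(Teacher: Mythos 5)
Your proof is correct. The paper does not prove this lemma itself---it simply cites \cite[Lemma~1.2.12]{csiszar1982information}---and your Chebyshev-plus-union-bound argument is exactly the standard second-moment proof of that result, with the Delta-Convention ($\delta_n\to 0$, $\sqrt{n}\,\delta_n\to\infty$) invoked correctly to make $\eta_n = \|\cX\|/(4n\delta_n^2)$ vanish.
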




\begin{lemma}[Joint typicality lemma~\cite{cover2006elements}]
\label{lem:jointTypicality}
There exists a sequence $\eta_n\toas{n\to\infty}0$ such that for each $x^n\in\typ{X}$:  
\[
\abs{-\frac1n \log P^n_Y(\typ{Y|x^n}) - I(X;Y)} \leq \eta_n \
.
\]
\end{lemma}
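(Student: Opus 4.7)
The plan is to estimate $P_Y^n\bigl(\typ{Y|x^n}\bigr)=\sum_{y^n\in\typ{Y|x^n}}P_Y^n(y^n)$ by separately controlling (i) the probability of any single $y^n\in\typ{Y|x^n}$ under $P_Y^n$, and (ii) the cardinality of $\typ{Y|x^n}$. The product of these two estimates will give the claimed $2^{-nI(X;Y)}$ behavior.

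For (i), I would write $\log P_Y^n(y^n)=\sum_{b\in\cY}N(b|y^n)\log P_Y(b)$ and argue that for $y^n\in\typ{Y|x^n}$ the single-letter frequency $\frac1n N(b|y^n)$ is close to $P_Y(b)$. This comes from summing the joint-typicality defining inequality over $a\in\cX$,
\[
\Bigl|\frac1n N(b|y^n)-\sum_{a\in\cX}\frac1n N(a|x^n)P_{Y|X}(b|a)\Bigr|\leq \|\cX\|\,\delta_n,
\]
and then using $x^n\in\typ{X}$ so that $\frac1n N(a|x^n)$ is within $\delta_n$ of $P_X(a)$; marginalizing gives $\sum_a P_X(a)P_{Y|X}(b|a)=P_Y(b)$. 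Substituting back, $-\frac1n\log P_Y^n(y^n)=H(Y)\pm\eta_n^{(1)}$ uniformly over $y^n\in\typ{Y|x^n}$, with $\eta_n^{(1)}\to 0$ by the Delta-Convention (one must restrict attention to symbols with $P_Y(b)>0$, as forced by the typicality definition, so $\log P_Y(b)$ is finite).

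For (ii), I would invoke the standard strong-conditional-typicality cardinality bounds, namely $|\typ{Y|x^n}|\leq 2^{n(H(Y|X)+\eta_n^{(2)})}$ from a direct volume estimate of conditional type classes, and a matching lower bound $|\typ{Y|x^n}|\geq(1-\eta_n^{(3)})\,2^{n(H(Y|X)-\eta_n^{(2)})}$ obtained by applying Lemma~\ref{lem-concentration} to the product measure $P_{Y|X}^n(\cdot|x^n)$ together with the single-sequence estimate for $P_{Y|X}^n$ (same argument as in (i) but with the conditional kernel). Multiplying the two estimates yields
\[
P_Y^n\bigl(\typ{Y|x^n}\bigr)=2^{-n\bigl(H(Y)-H(Y|X)\bigr)\pm n\eta_n}=2^{-n(I(X;Y)\pm\eta_n)},
\]
so taking $-\tfrac1n\log$ gives the lemma.

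The main obstacle is purely bookkeeping: one must collect the several $o(1)$ slacks ($\eta_n^{(1)},\eta_n^{(2)},\eta_n^{(3)}$) coming from the frequency estimate, the upper and lower cardinality bounds, and the typicality of $x^n$ itself, into a single vanishing sequence $\eta_n$. This is guaranteed by the Delta-Convention since $\delta_n\to 0$ and $\sqrt{n}\,\delta_n\to\infty$ control all three terms uniformly, but the constants depend on $\|\cX\|$ and $\|\cY\|$ and must be tracked with some care.
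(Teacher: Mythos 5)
The paper does not actually prove this lemma---it is imported from the literature with a citation---and your argument is exactly the standard one behind that citation: estimate $-\frac1n\log P_Y^n(y^n)\approx H(Y)$ uniformly over $y^n\in\typ{Y|x^n}$ via the empirical frequencies, bound $\lvert\typ{Y|x^n}\rvert\approx 2^{nH(Y|X)}$ from above and below, and multiply; it is correct, and the uniformity in $x^n\in\typ{X}$ holds because every slack depends only on $\delta_n$ and the alphabet sizes. The one detail to state precisely is that the lower cardinality bound requires the \emph{conditional} analogue of Lemma~\ref{lem-concentration}, i.e.\ $P^n_{Y|X}(\typ{Y|x^n}\,|\,x^n)\geq 1-\eta_n$ uniformly over $x^n\in\typ{X}$ (concentration of $N(a,b|x^n,y^n)$ around $N(a|x^n)P_{Y|X}(b|a)$, which follows from Chebyshev and the Delta-Convention), rather than the unconditional statement as written.
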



%
\begin{lemma}[{Csisz\'ar \& K\"orner's sum-identity~\cite[Lemma~7]{CsiszarConfidential}}]
Consider two random sequences $X^n$ and $Y^n$, and a constant $C$ (independent of time).
The following identity holds:
\begin{equation}
\sum_{i=1}^n I(Y_{i+1}^n ; X_i | C X^{i-1})
        =       \sum_{i=1}^n I(X^{i-1} ; Y_i | C Y_{i+1}^n)  \ .\label{sec:csiszarkorner}
\end{equation}
\end{lemma}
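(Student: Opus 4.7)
The plan is to expand each aggregate conditional mutual information on both sides by the chain rule for (conditional) mutual information, and then observe that after summing over $i$ and relabeling, the two resulting double sums range over the same family of triples and agree term-by-term by the symmetry of mutual information.

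First, I would apply the chain rule to the sequence $Y_{i+1}, Y_{i+2}, \ldots, Y_n$ conditioned on $(C, X^{i-1})$ to decompose each summand on the left-hand side as
\begin{equation*}
I(Y_{i+1}^n; X_i \mid C X^{i-1}) = \sum_{k=i+1}^{n} I(Y_k; X_i \mid C X^{i-1} Y_{k+1}^n).
\end{equation*}
Symmetrically, applying the chain rule to $X_1, X_2, \ldots, X_{i-1}$ conditioned on $(C, Y_{i+1}^n)$ decomposes each summand on the right-hand side as
\begin{equation*}
I(X^{i-1}; Y_i \mid C Y_{i+1}^n) = \sum_{j=1}^{i-1} I(X_j; Y_i \mid C X^{j-1} Y_{i+1}^n).
\end{equation*}

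Next, I would sum these two expansions over $i \in \{1, \ldots, n\}$. The left-hand total becomes a double sum indexed by pairs $(i,k)$ with $1 \leq i < k \leq n$, while the right-hand total becomes a double sum indexed by pairs $(j,i)$ with $1 \leq j < i \leq n$. After the purely cosmetic relabeling $(i,k) \mapsto (j,i)$ on the left, both sides range over the same index set $\{(j,i) : 1 \leq j < i \leq n\}$, and the typical summand on the left reads $I(Y_i; X_j \mid C X^{j-1} Y_{i+1}^n)$, which coincides with the corresponding right-hand summand $I(X_j; Y_i \mid C X^{j-1} Y_{i+1}^n)$ by the symmetry of (conditional) mutual information. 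This establishes the identity.

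I do not foresee a substantive obstacle, as the argument is essentially bookkeeping. The only point requiring care is performing the chain-rule expansion in the correct order on each side so that the conditioning sets line up: on the left one peels off the $Y$'s starting from the largest index $n$ and moving down, while on the right one peels off the $X$'s starting from the smallest index $1$ and moving up. It is precisely this choice of orderings that produces the common conditioning block $C X^{j-1} Y_{i+1}^n$ in both expansions and makes the term-by-term matching transparent.
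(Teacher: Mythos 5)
Your proof is correct. Both chain-rule expansions are valid, the two index sets coincide after the relabeling $(i,k)\mapsto(j,i)$, and the matched summands agree by symmetry of conditional mutual information; your closing remark correctly identifies the one delicate point, namely that peeling the $Y$'s downward from index $n$ and the $X$'s upward from index $1$ is what makes the conditioning blocks $C\,X^{j-1}\,Y_{i+1}^n$ coincide on the two sides. The paper, however, takes a different (and shorter) route: it never atomizes the mutual informations into double sums. Instead it applies the chain rule once per index to show that the $i$-th difference collapses, after cancellation of the common cross term $I(Y_{i+1}^n;X^{i-1}\mid C)$, to
\begin{equation*}
I(Y_{i+1}^n; X_i \mid C X^{i-1}) - I(X^{i-1}; Y_i \mid C Y_{i+1}^n)
= I(Y_{i+1}^n; X^i \mid C) - I(Y_i^n; X^{i-1} \mid C)
= S_i - S_{i-1} \ ,
\end{equation*}
where $S_i \triangleq I(Y_{i+1}^n; X^i \mid C)$, and then telescopes: $\sum_{i=1}^n (S_i - S_{i-1}) = S_n - S_0 = 0$ since $Y_{n+1}^n = X^0 = \emptyset$. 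The telescoping argument buys brevity --- one cancellation per index and no bookkeeping over pairs --- whereas your double-sum argument buys transparency: it exhibits an explicit term-by-term bijection between the two sides and makes the symmetry underlying the identity visible, which is the standard alternative proof of the Csisz\'ar sum identity found in textbook treatments. Both are complete and rigorous; there is no gap in your version.
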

 
\begin{IEEEproof}
\begin{IEEEeqnarray*}{rCl}
&& \sum_{i=1}^n  \Bigl[ I(Y_{i+1}^n ; X_i | C X^{i-1}) - I(X^{i-1} ; Y_i | C Y_{i+1}^n)  \Bigr] \nonumber  \\
&=& \sum_{i=1}^n \Bigl[ I(Y_{i+1}^n ;X_i X^{i-1} | C ) - I(X^{i-1} ; Y_i Y_{i+1}^n | C)  \Bigr] \\
&=& \sum_{i=1}^n \Bigl[ I(Y_{i+1}^n ; X^i | C ) - I(Y_i^n ; X^{i-1} | C)  \Bigr] \\
&=& \sum_{i=1}^n \Bigl[ S_i - S_{i-1}  \Bigr] \\
&=& S_n - S_0 \\
&=& 0
\end{IEEEeqnarray*}
where: $S_i \triangleq  I(Y_{i+1}^n ; X_i^n | C ) $, and where we define $Y_{n+1}^n = X^{0}= \emptyset$ which leads to $S_n = S_0 = 0$. 
\end{IEEEproof} 
 

	\section{Proof of Lemma \ref{Inequalities}}\label{ProofLemmaInequalities}
	We want to show the following set of inequalities: 
	\begin{IEEEeqnarray}{rCl}
		 I(Q^n ; Z^n) &\leq& n  \,I(Q;Z) + n \, \epsilon_n  \ , \\
		 I(U_1^n ; U_2^n | Q^n) &\leq& n \, I(U_1; U_2 |Q) + n \, \epsilon_n  \ ,\label{eq-second} \\
		 I(U_1^n  U_2^n; Z^n |Q^n) &\leq& n  \, I(U_1 U_2; Z |Q) + n \, \epsilon_n \ .
		\end{IEEEeqnarray}
		All inequalities can be proved using the same approach, so we only prove inequality~\eqref{eq-second}. 
		
		Let $\cE$ be the indicator function defined by 		
		\begin{equation}
		\cE\triangleq \left\{\begin{array}{lcl}
		1 & & \textrm{ if $(q^n, u_1^n, u_2^n ) \in \typ{QU_1U_2}$} \\ 
		0 & & \textrm{ otherwise} 
		\end{array}\right.
		\end{equation}
with probability  $\mathds{P} (\cE = 1)$. We have that: 		 
		\begin{IEEEeqnarray}{rCl}
	    &&I(U_1^n ; U_2^n | Q^n)  \IEEEnonumber \\ 
	    &\leq&  I(U_1^n, \cE; U_2^n | Q^n)   \\
	    &=& I(U_1^n ; U_2^n | Q^n , \cE) + I( \cE; U_2^n | Q^n)\\
	    &\overset{(a)}{\leq}&  I(U_1^n ; U_2^n | Q^n, \cE) + 1 \\
	    &=& \mathds{P} (\cE = 1)  I(U_1^n ; U_2^n | Q^n , \cE= 1) \IEEEnonumber \\ 
	  &&   \qquad   + \mathds{P} (\cE = 0)  I(U_1^n ; U_2^n | Q^n , \cE= 0) + 1\ \ \  \\
	    &\leq& I(U_1^n ; U_2^n | Q^n , \cE= 1) \IEEEnonumber \\ 
	    && \qquad+  n \, \mathds{P} (\cE = 0)  \log_2 (\|U_2\|) + 1  \ ,  
	\end{IEEEeqnarray}	    
	where $(a)$ is due to upper bounding $h_2(\cE)\leq 1$. By the codebook generation, as $n$ grows large, $\mathds{P} (\cE = 0) $ can be made arbitrarily small. Note that if encoding is  succeeds, only jointly typical sequences $U_1^n $ and $U_2^n$ are sent. 
Then, if $\cE = 1$, as a result of Lemma \ref{lem:jointTypicality}, we can have 
		\begin{equation}
		I(U_1^n ; U_2^n | Q^n , \cE= 1) \leq n I(U_1; U_2 | Q) + n \epsilon_n  
		\end{equation}
		and thus, 
		\begin{equation}
		\dfrac{1}{n} I(U_1^n ; U_2^n | Q^n ) \leq  I(U_1; U_2 | Q) + 2 \epsilon_n  \ . 
		\end{equation}
		The remaining inequalities follow in a similar manner and thus details are omitted here.


	\section{Proof of Lemma \ref{LemmaUpperBound}}\label{ProofLemmaUpperBound}
	 	In this section, we want to prove the following: 
	 	\[
	 	\limsup_{n \rightarrow \infty }\dfrac{1}{n}  H(S_0 S_1 S_2 | Z^n \bar{W}_0 \bar{W}_1 \bar{W}_2 , \mathcal{C} ) \leq \max \left\{0 , I_1, I_2, I_3 , I_4\right\}  \ .
	 	\]
	 	To do this, given the output $z^n$ and the messages $(\bar{W}_0, \bar{W}_1, \bar{W}_2)$, let us define $\mathcal{S}$ as the set of indices $( s_0, s_1,s_2) $ falling in the respective messages' bins, such that: 
	 	\begin{equation}
			(q^n(s_0) , u_1^n(s_0, s_1) ,  u_2^n(s_0, s_2) , z^n ) \in \typ{QU_1U_2Z} \ .  	 	
	 	\end{equation}
	 Then, we can show that the expected size of this list, over all codebooks, is upper bound by 
	 	\begin{equation}\label{IneqSizeList}
	 	\mathds{E}(\|\mathcal{S}\|) \leq 1 + 2^{nI_1} + 2^{nI_2} +  2^{nI_3} +  2^{nI_4}  \ ,
	 	\end{equation}
	 	where: 
	\begin{IEEEeqnarray}{rCl}
	I_1 &=& T_1 - R_1 - I(U_1; Z U_2 | Q)  \ ,\\ 
	I_2 &=& T_2 - R_2 - I(U_2; Z U_1 | Q)  \ ,\\
	I_3 &=& T_1 - R_1 + T_2 - R_2  \IEEEnonumber \\ 
	  &&   \qquad  - I(U_1U_2 ; Z | Q ) - I(U_1; U_2 |Q)  \ ,\\
	I_4 &=& T_0 - R_0 + T_1 - R_1 + T_2 - R_2 \IEEEnonumber \\ 
	  &&   \qquad  - I(QU_1U_2 ; Z ) - I(U_1; U_2 |Q) \ .
	\end{IEEEeqnarray}
	
	To see this, one can note that: 
	\begin{IEEEeqnarray}{rCl}
		\mathds{E} \| \mathcal{S}  \| &=& \mathds{P}\{ (S_0,S_1,S_2) \in \mathcal{S} \} + \sum_{\mathclap{(s_0,s_1,s_2) \neq (S_0,S_1,S_2)}} \mathds{P}\{(s_0,s_1,s_2) \in \mathcal{S}\}  \qquad .
	\end{IEEEeqnarray}	
	 where $(S_0,S_1,S_2 )$ are the true indices chosen by the source. 
	 
	 Due to the LLN and the codebook construction, and Lemma \ref{lem-concentration}, we can show that: 
	 \begin{equation}
       \mathds{P}\{ (S_0,S_1,S_2) \in \mathcal{S} \}  \geq 1 - \eta 
     \end{equation}
	 As for the probability of undetected errors, we can distinguish many cases following the values of $(s_0, s_1, s_2)$.     Hereafter, we give only representative classes of errors. 
	 \begin{itemize}
	\item If $s_1 \neq S_1$ and $(s_0,s_2) = (S_0, S_2)$, then by similar tools to Lemma \ref{lem:jointTypicality}, we can show that: 
    	 \begin{equation}
    	   \mathds{P}\{ (S_0,s_1,S_2) \in \mathcal{S} \} \leq 2^{[-n I(U_1; Z U_2 |Q) + n \epsilon_n]}
         \end{equation}
     \item If $s_1 \neq S_1$, $s_2 \neq S_2$ and $s_0 = S_0$, then: 
     \begin{equation}
    	   \mathds{P}\{ (S_0,s_1,s_2) \in \mathcal{S} \} \leq 2^{[-n I(U_1 U_2; Z |Q) -n I(U_1 ; U_2 |Q)+ n \epsilon_n]}
         \end{equation}
        \item Last, if $s_0 \neq S_0 $, then for all $(s_1,s_2)$, 
        \begin{equation}
    	   \mathds{P}\{ (s_0,s_1,s_2) \in \mathcal{S} \} \leq 2^{[-n I(QU_1U_2 ; Z ) -n I(U_1; U_2 |Q)+ n \epsilon_n]}
         \end{equation}
\end{itemize}
	
	Now, once the list size has been bounded, by defining 	
			\begin{equation}
		\cE\triangleq \left\{\begin{array}{lcl}
		1 & & \textrm{ if $(S_0, S_1, S_2) \in \mathcal{S}$ } \\ 
		0 & & \textrm{ if otherwise} 
		\end{array}\right.
		\end{equation}
we have that
	\begin{IEEEeqnarray}{rCl}
 && H(S_0 S_1 S_2 | Z^n \bar{W}_0 \bar{W}_1 \bar{W}_2, \mathcal{C} ) \IEEEnonumber \\ 
	  &=& I( \cE; S_0 S_1 S_2 | Z^n \bar{W}_0 \bar{W}_1 \bar{W}_2, \mathcal{C})  \nonumber\\
	   \IEEEeqnarraymulticol{3}{r}{+ H(S_0 S_1 S_2 | Z^n \bar{W}_0 \bar{W}_1 \bar{W}_2 ,\cE , \mathcal{C} )} \\
	  & \overset{(a)}{\leq}& 1 +  H(S_0 S_1 S_2 | Z^n \bar{W}_0 \bar{W}_1 \bar{W}_2 ,\cE , \mathcal{C} ) \\
	  & \overset{(b)}{\leq}& 1 +  H(S_0 S_1 S_2 | Z^n \bar{W}_0 \bar{W}_1 \bar{W}_2 ,\cE=1 , \mathcal{C} ) \nonumber\\
	   \IEEEeqnarraymulticol{3}{r}{+  \mathds{P} (\cE = 0) H(S_0 S_1 S_2 | \bar{W}_0 \bar{W}_1 \bar{W}_2 ) \ , } 
	\end{IEEEeqnarray}	
	where $(a)$ comes from that the entropy of the binary variable $\cE$ is upper-bounded by $1$ while $(b)$ follows by upper bounding: $\mathds{P} (\cE = 1) \leq 1 $ and 
	\begin{IEEEeqnarray*}{rCl}
&&H(S_0 S_1 S_2 | Z^n  \bar{W}_0  \bar{W}_1 \bar{W}_2 ,\cE = 0 , \mathcal{C} ) \nonumber \\
&& \quad \leq  H(S_0 S_1 S_2|\bar{W}_0 \bar{W}_1 \bar{W}_2 )  \ .
\end{IEEEeqnarray*}	
By our codebook construction  and Lemma~\ref{lem-concentration}, again $\mathds{P} (\cE = 0)$ can be made arbitrarily small. Next, note that: 
	\begin{IEEEeqnarray}{rCl}
	  && H(S_0S_1S_2 | Z^n \bar{W}_0 \bar{W}_1 \bar{W}_2 ,\cE =1, \mathcal{C} )   \IEEEnonumber \\ 
	    & \overset{(a)}{=}& H(S_0 S_1 S_2 | Z^n \bar{W}_0 \bar{W}_1 \bar{W}_2 ,\cE =1, \mathcal{C}  , \mathcal{S} , \|\mathcal{S}\|) \\
	  & \leq & H(S_0 S_1 S_2 | \cE=1, \mathcal{S} , \|\mathcal{S}\|) \\
	  &=& \sum_{\mathclap{s\in \textrm{supp}(\| \mathcal{S}\|) }} P(\|\mathcal{S}\| = s )H(S_0 S_1 S_2 | \cE=1, \mathcal{S} , \|\mathcal{S}\| = s) \ \  \ \  \ \ \\
	  &\overset{(b)}{\leq}& \sum_{s\in \textrm{supp}(\| \mathcal{S}\|) }  P(\|\mathcal{S}\| = s ) \log_2(s)\\
	   &=& \mathds{E}\left[ \log_2(\|\mathcal{S}\|) \right]\\
	  &\overset{(c)}{\leq}&  \log_2 \left(\mathds{E}  \|\mathcal{S}\| \right)  \\
	  &\overset{(d)}{\leq}& n \max \left\{ 0, I_1, I_2, I_3, I_4 \right\} + \log_2{(5)}  \ ,
	\end{IEEEeqnarray}
	where $(a)$ follows form the fact that $\mathcal{S}$ and $\|\mathcal{S}\|$ are functions of the output $Z^n$, the codebook and the chosen messages to be sent; $(b)$ is a result of that knowing $\cE=1$, the sent indices $(S_0,S_1,S_2) $ belong to the set  $\mathcal{S}$ and thus their uncertainty can not exceed the log cardinality of that set; and finally, $(c)$ is a consequence of Jensen's inequality while $(d)$ comes from \eqref{IneqSizeList} along with an application of the \emph{log-sum-exp} inequality: 
\begin{equation}
  \log_2 \left( \sum_{x \in \cX}2^x \right) \leq \max_{x \in \cX} x +  \log_2( \|  \cX \| ) \ .
\end{equation}	  
This, along with the previous remarks yields the desired inequality: 
		\begin{IEEEeqnarray*}{rCl}
	 &&	\limsup_{n \rightarrow \infty }\dfrac{1}{n}  H(S_0 S_1 S_2 | Z^n \bar{W}_0 \bar{W}_1 \bar{W}_2 ,\mathcal{C} )\nonumber \\
	 	&& \qquad  \leq \max \left\{0 , I_1, I_2, I_3 , I_4 \right\}  \ .
	\end{IEEEeqnarray*}	
		
\section{Fourier-Motzkin Elimination}\label{appendix-missing}
 	We resort to FME, recalling all the constraints: 
	 \begin{IEEEeqnarray}{rCl}
		    T_1 &\leq& I(U_1; Y_1 | Q )  \ ,\\
		    T_1 + T_0 &\leq& I(Q U_1; Y_1)  \ ,\\
		    T_2 &\leq& I(U_2; Y_2 | Q)  \ ,\\
		    T_2 + T_0 &\leq& I(Q U_2; Y_2)  \ ,\\
		    T_0 - \bar{R}_0 &\geq& I(Q; Z)  \ ,\\
		    T_0 - \bar{R}_0 + T_1 - \bar{R}_1  &\geq& I(QU_1;Z) \ , \\
		    T_0 - \bar{R}_0 + T_2 - \bar{R}_2  &\geq&  I(QU_2;Z)  \ ,\\
		    T_0 + T_1 + T_2 - ( \bar{R}_0 + \bar{R}_1 + \bar{R}_2 ) &\geq&   I(QU_1U_2  ; Z) \IEEEnonumber \\ 
	  &&     + I(U_1; U_2 |Q)  \ , \\ 
		    T_1 - \bar{R}_1 - \tilde{R}_1 + T_2 - \bar{R}_2 - \tilde{R}_2  &\geq& I(U_1; U_2 |Q)  \ ,\\
	 		0 \leq \tilde{R}_1 \leq T_1-\bar{R}_1  \! &,& \!  0 \leq \tilde{R}_2 \leq T_2-\bar{R}_2  \ . \nonumber 
		\end{IEEEeqnarray}
The resulting rate region after FME is as follows: 
	 \begin{IEEEeqnarray}{rCl}
		    \bar{R}_1 &\leq& I(U_1; Y_1 | Q)  \ ,\\
		    \bar{R}_1 + \bar{R}_0 &\leq& I(Q U_1; Y_1)  - I(QU_1;Z) \ ,\\
		    \bar{R}_2 &\leq& I(U_2; Y_2 | Q)  \ ,\\
		    \bar{R}_2 + \bar{R}_0 &\leq& I(Q U_2; Y_2) -I(QU_2;Z)  \ ,\\
		    \bar{R}_1 + \bar{R}_2 &\leq& I(U_1; Y_1 | Q) +I(U_2; Y_2 | Q)  \IEEEnonumber \\ 
	  &&   \quad - I(U_1; U_2 |Q)  \ , \\
		    \bar{R}_0 + \bar{R}_1 + \bar{R}_2 &\leq& I(QU_1; Y_1) +I(U_2; Y_2 | Q)  \IEEEnonumber \\ 
	  &&   \quad - I(QU_1U_2  ; Z) -  I(U_1; U_2 |Q) \ , \\
		    \bar{R}_0 + \bar{R}_1 + \bar{R}_2 &\leq& I(QU_2; Y_2) +I(U_1; Y_1 | Q)  \IEEEnonumber \\ 
	  &&   \quad - I(QU_1U_2  ; Z) -  I(U_1; U_2 |Q) \ , \\
2\, \bar{R}_0 + \bar{R}_1 + \bar{R}_2 &\leq& I(QU_2; Y_2) +I(QU_1; Y_1 )-I(QU_1U_2  ; Z)\nonumber\\
& &  -  I(U_1; U_2 |Q) - I(Q; Z)  \ .
		\end{IEEEeqnarray}
\subsection*{Eliminating rate splitting parameters:}
	
The achievable rate region writes then as: 
	 \begin{IEEEeqnarray}{rCl}
		    R_1 - R_{01} &\leq& I(U_1; Y_1 | Q)  \ ,\\
		    R_1 +R_{02}&\leq& I(Q U_1; Y_1) - I(QU_1;Z) \ ,\\
		    R_2  - R_{02} &\leq& I(U_2; Y_2 | Q)  \ ,\\
		    R_2 +R_{01}&\leq& I(Q U_2; Y_2) - I(QU_2;Z) \ ,\\
		    R_1 - R_{01} + R_2 - R_{02}  &\leq& I(U_1; Y_1 | Q) +I(U_2; Y_2 | Q)  \IEEEnonumber \\ 
	  &&    - I(U_1; U_2 |Q)  \ , \\
		   R_1 + R_2 &\leq& I(QU_1; Y_1) +I(U_2; Y_2 | Q) \IEEEnonumber \\ 
	  &&      - I(QU_1U_2  ; Z) -  I(U_1; U_2 |Q)  \ ,\\
		    R_1 + R_2 &\leq& I(QU_2; Y_2) +I(U_1; Y_1 | Q)  \IEEEnonumber \\ 
	  &&    - I(QU_1U_2  ; Z) -  I(U_1; U_2 |Q)  \ ,\\
		    R_1 + R_2 + R_{01} + R_{02} &\leq& I(QU_2; Y_2) +I(QU_1; Y_1 ) - I(Q; Z) \nonumber\\
		   &  &-I(QU_1U_2 ; Z) - I(U_1; U_2 |Q)  \ .		
		   \end{IEEEeqnarray}
Eliminating the rates splitting parameters $R_{01}$ and $R_{02}$ with the positivity constrains: $R_{0,j} > 0 $ and $R_j - R_{0j} > 0  $ for $j \in \{1,2\}$,  yields the desired inner bound. 
	 	
		\section{Proof of Lemma \ref{LemmaConvexity} }\label{ProofLemmaConvexity}
In this section, we show the convexity of the rate region given by: 
\begin{equation}
\mathcal{R}  \,: \left\{ \begin{array}{rcl}
R_1 &\leq& (1-e) \, h_2(x) + h_2(p) - h_2(p*x) \ , \\
R_2 &\leq&  h_2(p*x) - h_2(p_2*x) \ ,
\end{array}\right.  
\end{equation}
where the union is over $ x \in [0:0.5]$. 

Obtaining this result comes to writing an equivalent of Mrs. Gerber's Lemma~\cite{Wyner1973} in the presence of an eavesdropper in the same fashion as in~\cite{Wyner1973}. Our aim will be to show that, for the corner point of this region, the rate $R_2$ is a concave function of the rate $R_1$. 

Let us define the function $f_1$ as follows:  
\begin{equation}
R_1 = f_1(x) \triangleq (1-e) h_2(x)+ h_2(p) - h_2(p * x) \ . 
\end{equation}
We have that: 
 \begin{equation}
 f^\prime_1(x) = (1-e) h^\prime_2(x)  + (1 - 2p) h^\prime_2(p*x)  \ , 
\end{equation}
and, 
 \begin{equation}
 f^{\prime\prime}_1(x) = (1-e) h^{\prime\prime}_2(x)  + (1 - 2p)^2 h^{\prime\prime}_2(p*x) \ , 
\end{equation}
where:
\begin{equation}
  h^{\prime}_2(x)  = \log_2 \left( \dfrac{1-x}{x}\right) \qquad \text{and}\qquad  h^{\prime\prime}_2(x) = - \dfrac{1}{x(1-x)}  \ .  
\end{equation} 
Let us also define the function $f_2$ as: 
\begin{equation}
R_2 = f_2(x) \triangleq h_2(p_2*x) - h_2(p * x) \ . 
\end{equation}
In the same fashion, we can write: 
 \begin{equation}
 f^\prime_2(x) = (1 - 2p_2)   h^\prime_2(p_2*x)  - (1 - 2p)  h^\prime_2(p*x)   \ ,
\end{equation}
and  
\begin{equation}
 f^{\prime\prime}_2(x) =  (1 - 2p_2)^2 h^{\prime\prime}_2(p_2*x)  -  (1 - 2p)^2 h^{\prime\prime}_2(p*x) \ .
\end{equation}
To show that:  
 \begin{equation}
 \frac{d^2 R_2}{d R_1^2} = \frac{d^2 f_2}{d f_1^2} \leq 0 \ ,
\end{equation}
we observe that: 
\begin{IEEEeqnarray}{rCl}
\frac{d f_2}{d f_1 } &=& \frac{d f_2}{d x} \,\frac{d x}{d f_1 } =  \frac{d f_2}{d x} \, \frac{d f_1^{-1}(y)}{d y }  \nonumber \\
&=& \frac{1}{f^\prime_1 (f_1^{-1}(y))} \,\frac{d f_2}{d x}  
= \frac{1}{f^\prime_1 (x)} \, \frac{d f_2}{d x}  \ .
\end{IEEEeqnarray}
As such, one can write in the same manner that: 
\begin{equation} 
\frac{d^2 f_2}{d f_1 ^2 }  =  \dfrac{f^{\prime\prime}_2(x) f^\prime_1(x) - f^{\prime\prime}_1(x) f^\prime_2(x)}{\left(f^\prime_1(x)\right)^3} \ .
\end{equation}
Since $ 0\leq x\leq \frac{1}{2}$ , then $0\leq p*x \leq \frac{1}{2}$, and thus, one can easily check that: 
\begin{equation} 
f^\prime_1(x) \geq 0 \ .
\end{equation}
Thus, it suffices to show that for all $ x \in [0:0.5]$, 
\begin{equation}  \label{Desired}
f^{\prime\prime}_2(x) f^\prime_1(x) - f^{\prime\prime}_1(x) f^\prime_2(x)  \leq 0   \ .
\end{equation}
For notation convenience, we let: 
\begin{equation}
  a \triangleq   1 - 2 p  \qquad \text{and}\qquad  a_2  \triangleq  1 - 2 p_2 \ .
\end{equation} 
Now, one can write that: 
\begin{IEEEeqnarray}{rCl}
&&  f^{\prime\prime}_2(x) f^\prime_1(x) - f^{\prime\prime}_1(x) f^\prime_2(x)   \IEEEnonumber \\ 
  &=&    a^2 \, h_2^{\prime \prime}(p*x) \, \Bigl[  (1-e)\, h_2^\prime(x) - a_2 \, h_2^\prime (p_2*x)  \Bigr]  \nonumber\\
 \IEEEeqnarraymulticol{3}{r}{ -  a_2^2 \, h_2^{\prime \prime}(p_2*x) \, \Bigl[  (1-e)\,  h_2^\prime(x) - a \, h_2^\prime (p *x)  \Bigr]} \nonumber \\
 \IEEEeqnarraymulticol{3}{r}{-  (1-e) \, h_2^{\prime \prime}(x) \, \Bigl[  a\, h_2^\prime(p*x) - a_2 \, h_2^\prime (p_2 *x)  \Bigr]\ , }   
\end{IEEEeqnarray}
and thus 

\begin{IEEEeqnarray}{rCl}
&&   \dfrac{f^{\prime\prime}_2(x) f^\prime_1(x) - f^{\prime\prime}_1(x) f^\prime_2(x) }{h_2^{\prime \prime}(p*x)  \, h_2^{\prime \prime}(p_2*x) \, h_2^{\prime \prime}(x)}  \IEEEnonumber \\ 
  &=&  a^2 \, \dfrac{ (1-e)\, h_2^\prime(x) - a_2 \, h_2^\prime (p_2*x)  }{ h_2^{\prime \prime}(p_2*x) \, h_2^{\prime \prime}(x)}  \IEEEnonumber \\ 
	  &&   \qquad  - \,  a_2^2 \, \dfrac{(1-e)\,  h_2^\prime(x) - a \, h_2^\prime (p *x) }{ h_2^{\prime \prime}(p*x) \, h_2^{\prime \prime}(x) }  \nonumber\\ 
&&  \qquad \qquad - \, (1-e) \, \dfrac{a\, h_2^\prime(p*x) - a_2 \, h_2^\prime (p_2 *x)}{h_2^{\prime \prime}(p_2*x)h_2^{\prime \prime}(p*x) }   \ .\qquad
\end{IEEEeqnarray}

Let us now define a variable $\alpha$ such that:  $\alpha  \triangleq  1 - 2x $. 
We have that: 
 \begin{equation}
  a \cdot \alpha = 1 - 2 (p*x)  \qquad \text{and}\qquad  a_2 \cdot \alpha = 1 - 2 (p_2*x) \ .
\end{equation} 
Moreover: 
\begin{IEEEeqnarray}{rCl}
h_2^\prime (x) &=& \log_2 \left(\dfrac{1-x}{x} \right)=  \log_2 \left(\dfrac{1+ \alpha}{1- \alpha} \right)  \ , \\
h_2^{\prime\prime} (x) &=& - \dfrac{1}{x(1-x)} = - \dfrac{4}{1 - \alpha^2} \ .
\end{IEEEeqnarray}
 Then, to show the desired inequality \eqref{Desired}, since:
 \begin{equation}
 h_2^{\prime \prime}(p*x)  \, h_2^{\prime \prime}(p_2*x) \, h_2^{\prime \prime}(x)  \leq 0 \ , 
 \end{equation}
 one only has to show, after some simplifications, that: 
 \begin{IEEEeqnarray}{rCl}
	 &&   - a_2 \, \left(1 - (a_2 \alpha)^2 \right)\left[a^2 - 1 + e \left (1- (a \alpha )^2 \right) \right]\log_2 \left(\dfrac{1+ a_2\alpha}{1- a_2\alpha} \right)  \nonumber \\ 
	  &&  +  a \, \left(1 - (a \alpha)^2 \right)\left[a_2^2 - 1 + e \left (1- (a_2 \alpha )^2 \right) \right]\log_2 \left(\dfrac{1+ a\alpha}{1- a\alpha} \right)   \nonumber   \\ 
	  &&  (1-e) \, \left(a^2 - a_2^2\right)  \log_2 \left(\dfrac{1+ \alpha}{1- \alpha} \right)   \geq 0 \ . \label{Final}
\end{IEEEeqnarray}
We will resort to the known series expansion of the log: 
\begin{equation}
   \log\left(\dfrac{1+\alpha}{1-\alpha}\right) = 2 \sum^\infty_{\substack{k=1 \\ \textrm{$k$ odd}}}\dfrac{\alpha^k}{k} \ ,
\end{equation} 
to write that the inequality \eqref{Final}, after simplifications, requires: 

 \begin{IEEEeqnarray}{rCl}
 &(a_2^2 - a^2)& \sum^\infty_{\substack{k=5 \\ \textrm{$k$ odd} }} \alpha^k \biggl[ \left( \dfrac{1}{k-2} -  \dfrac{1}{k}\right) T_k  -   \left( \dfrac{1}{k-4} -  \dfrac{1}{k-2}\right) V_k  \biggr] \nonumber \\
  &\geq & - \dfrac{2}{3} \alpha^3 T_3 \ ,
\end{IEEEeqnarray}
for all $\alpha \in[0:1]$,  where
 \begin{IEEEeqnarray}{rCl}
  T_k &=& (1-e) \left(  1 - \dfrac{ a_2^{k+1} - a^{k+1}}{ a_2^2 - a^2 }\right) + a_2^2 \, a^2 \, \dfrac{ a_2^{k-1} - a^{k-1}}{ a_2^2 - a^2 }   \qquad  \\
  V_k &=& e \, a_2^2 \, a_2 \, \dfrac{ a_2^{k-3} - a^{k-3} }{ a_2^2 - a^2  } \ .
\end{IEEEeqnarray}
By hypothesis $p_2 \leq p$ and hence $a_2^2 - a^2 \geq 0$. We are thus left with only the analysis of the summation. In the sequel, we show the following results on summation operand.

\begin{lemma}[Properties of some series]\label{LemmaBECBSC}
\ 
\begin{enumerate}
\item The sequence $(T_k)_k$ dominates the sequence $(V_k)_k$ in that:
\begin{equation}
( \forall k \in \mathds{N}_{\textrm{odd}} )  \  ,\    T_k \geq V_k \geq 0  \ .
\end{equation}
\item If  $a^2 + a_2^2 \leq 1$, then $(V_k)_{\substack{k\geq 5}}$ for  $k$ odd is a decreasing sequence. 
\item The following identity holds: 
\begin{equation}
 - \dfrac{2}{3} \alpha^3 T_3 = \sum^\infty_{\substack{k=5 \\ \textrm{$k$ odd}}} \alpha^3 T_3 \left( \dfrac{1}{k-2} -  \dfrac{1}{k} -  \dfrac{1}{k-4} + \dfrac{1}{k-2} \right)  \ .
\end{equation}
\end{enumerate}
\end{lemma}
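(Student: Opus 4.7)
The plan is to treat the three assertions independently, each reducing to direct algebraic or calculus manipulations on the two explicitly defined sequences $T_k$ and $V_k$. Since $k$ is odd throughout, every exponent appearing inside the quotients (namely $k+1$, $k-1$, $k-3$) is even, so the basic geometric-series factorization
$a_2^{2\ell} - a^{2\ell} = (a_2^2 - a^2)\sum_{j=0}^{\ell-1} a_2^{2(\ell-1-j)} a^{2j}$
lets one rewrite every quotient $\frac{a_2^{m}-a^{m}}{a_2^2 - a^2}$ as a nonnegative polynomial in $a,a_2 \in [0,1]$. This is the workhorse for (1).

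For (1), nonnegativity $V_k \geq 0$ is then immediate from $e \in [0,1]$ and $a,a_2\in[0,1]$. For the dominance $T_k \geq V_k$, I would expand $T_k - V_k$ via the same factorization and verify that all resulting monomial coefficients are nonnegative, using the bound $e \leq 1$ to control the $e\cdot(\cdot)$ subtrahend against the $(1-e)(\cdot)$ and $a_2^2 a^2(\cdot)$ contributions in $T_k$.

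For (2), cancel the common positive prefactor in the ratio $V_{k+2}/V_k$ to reduce monotonicity to the statement $f(a_2) \geq f(a)$ with $f(x)=x^{k-3}(1-x^2)$. Changing variables to $t=x^2$ and setting $s=a^2$, $u=a_2^2$, the hypothesis $a^2+a_2^2 \leq 1$ becomes $0\leq s\leq u \leq 1-s$, and the claim becomes $g(u) \geq g(s)$ for $g(t) = t^{(k-3)/2}(1-t)$. In the base case $k=5$ one has the transparent identity $g(u)-g(s) = (u-s)(1-u-s) \geq 0$, which is precisely why the hypothesis is the right one. For odd $k>5$, I would integrate $g'(t) = t^{(k-5)/2}\bigl[(k-3)/2 - (k-1)t/2\bigr]$ from $s$ to $u$, splitting at the unique interior critical point $t^\star = (k-3)/(k-1)$; the symmetry $s+u\leq 1$ is what pairs negative contributions beyond $t^\star$ with dominating positive contributions on $[0,t^\star]$.

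For (3), the identity is a pure telescoping computation. Split the inner bracket as $\left(\tfrac{1}{k-2}-\tfrac{1}{k}\right) + \left(\tfrac{1}{k-2}-\tfrac{1}{k-4}\right)$ and sum each piece separately over odd $k\geq 5$: the first telescopes to $1/3$ (only the head $1/(k-2)$ at $k=5$ survives), the second telescopes to $-1$ (only the tail $-1/(k-4)$ at $k=5$ survives), and the combined constant is $-2/3$ as required. The main obstacle is clearly (2) for $k>5$: since $f$ is unimodal on $[0,1]$, the naive monotonicity implication $a\leq a_2 \Rightarrow f(a)\leq f(a_2)$ fails, and one must exploit the global quadratic constraint $a^2+a_2^2\leq 1$ rather than any pointwise bound. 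The $k=5$ calculation supplies a clean template that the integral-splitting argument should promote to all odd $k\geq 5$.
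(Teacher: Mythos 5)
Your argument for claim (1) has a genuine gap. You propose to expand $T_k-V_k$ via the geometric factorization and ``verify that all resulting monomial coefficients are nonnegative, using the bound $e\le 1$.'' This cannot work: the dominance $T_k\ge V_k$ is simply false if one only assumes $e\in[0,1]$ and $0\le a\le a_2\le 1$. For instance, with $e=0.99$, $a=0.5$, $a_2=0.6$ one gets $S_3=1$, $S_5=a^2+a_2^2=0.61$, $S_7\approx 0.282$, hence $T_5=(1-e)(1-S_7)+a^2a_2^2S_5\approx 0.062$ while $V_5=e\,a^2a_2^2S_3\approx 0.089$, so $T_5<V_5$. (Concretely, the monomial $a_2^{k-1}$ enters $T_k$ through $-(1-e)S_{k+2}$ with coefficient $-1$ at $e^0$, so the expansion is not coefficient-wise nonnegative.) The inequality only holds because of the channel-parameter constraints~\eqref{ConstraintBECBSC}, which translate into $\max\{a,a_2^2\}\le 1-e\le a_2$; the paper's proof uses these essentially, rewriting $T_k-V_k=(1-e)\bigl(1-a_2^{k-1}-a^{k-1}\bigr)+a_2^2a^2\bigl(a^{k-3}+(a_2^2-1)S_{k-2}\bigr)$ and splitting into two cases according to the sign of $1-a_2^{k-1}-a^{k-1}$, bounding $1-e$ from below by $a_2^2$ in one case and from above by $1$ in the other before regrouping the geometric sums. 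Without importing these hypotheses your step fails.

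Claims (2) and (3) are fine. Your telescoping for (3) is exactly the paper's computation. For (2) your route differs from the paper's: the paper shows $S_{k+2}-S_k\le 0$ by induction using the recursion $S_{k+4}-S_{k+2}=a_2^{k+1}-a_2^{k-1}+a^2(S_{k+2}-S_k)$, whereas you reduce monotonicity of $(V_k)$ to $f(a_2)\ge f(a)$ for $f(x)=x^{k-3}(1-x^2)$ under $a^2+a_2^2\le 1$. That reduction is correct and can be closed cleanly: with $g(t)=t^{m}(1-t)$, $m=(k-3)/2$, and $t^\star=m/(m+1)$, either $u\le t^\star$ and $g$ is increasing on $[s,u]$, or $u>t^\star$ and then $s\le 1-u\le 1/(m+1)\le t^\star$ gives $g(s)\le g(1-u)\le g(u)$, the last step since $u^{m-1}\ge(1-u)^{m-1}$. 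This is arguably more transparent than the induction, and it isolates where $a^2+a_2^2\le 1$ is used; but it does not repair part (1).
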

\begin{IEEEproof} Proof  is given in Appendix~\ref{ProofLemmaBECBSC}. 
\end{IEEEproof}

Indeed, Lemma~\ref{LemmaBECBSC} motivates our choice  $a^2 + a_2^2 \leq 1 $ in the sequel and thus allows us to write: 

\begin{IEEEeqnarray}{rCl}
  && \sum^\infty_{\substack{k=5 \\ \textrm{$k$ odd}}} \alpha^k \biggl[ \left( \dfrac{1}{k-2} - \dfrac{1}{k}\right) T_k  - \left( \dfrac{1}{k-4} -  \dfrac{1}{k-2}\right) V_k  \biggr]  \IEEEnonumber \\ 
	  &&   \qquad + \dfrac{2}{3} \alpha^3 T_3 \\ 
   &\overset{(a)}{=}&  \sum^\infty_{\substack{k=5 \\ \textrm{$k$ odd}}}  \left( \dfrac{1}{k-2} - \dfrac{1}{k}\right) ( \alpha^k T_k - \alpha^3 T_3 ) \IEEEnonumber \\ 
	  &&   \qquad - \left( \dfrac{1}{k-4} - \dfrac{1}{k-2}\right) ( \alpha^k V_k - \alpha^3 T_3 ) \\
 &\overset{(b)}{\geq}& \sum^\infty_{\substack{k=5 \\ \textrm{$k$ odd}}}  \left( \dfrac{1}{k-2} - \dfrac{1}{k}\right) ( \alpha^k V_k - \alpha^3 T_3 )\IEEEnonumber \\ 
	  &&   \qquad  - \left( \dfrac{1}{k-4} - \dfrac{1}{k-2}\right) ( \alpha^k V_k - \alpha^3 T_3 ) \\
   &=& \sum^\infty_{\substack{k=5 \\ \textrm{$k$ odd}}}  \left( \dfrac{1}{k-2} - \dfrac{1}{k} - \dfrac{1}{k-4} + \dfrac{1}{k-2}\right) ( \alpha^k V_k - \alpha^3 T_3 ) \nonumber \\
   &\overset{(c)}{\geq}& 0  \ ,
\end{IEEEeqnarray}
where $(a)$ comes from claim (3) in Lemma~\ref{LemmaBECBSC} and $(b)$ results from claim (1) in Lemma~\ref{LemmaBECBSC} while $(c)$ comes from the fact that
\begin{equation}
\dfrac{1}{k-2} - \dfrac{1}{k} - \dfrac{1}{k-4} + \dfrac{1}{k-2} \leq 0  \ ,
\end{equation}
and hence, since $(V_k)_{k\geq 5}$ is a decreasing sequence, then for all $\alpha \in [0:1]$ we can write that: 
\begin{equation}
( \forall k \geq 5 ) \qquad \alpha^k V_k \leq \alpha^k V_5 \leq \alpha^3 V_5  \ ,
\end{equation}
and since: 
\begin{equation} 
T_3 - V_5 = (1-e) (1 - a^2 - a_2^2 + a^2 \, a_2^2 ) \geq 0  \ ,
\end{equation}
then, 
\begin{equation}
( \forall k \geq 5 ) \qquad \alpha^k V_k  - \alpha^3 T_3 \leq 0 \ .
\end{equation}
It is worth mentioning that the assumption $a_2^2+ a^2 \leq 1$ was used only in the monotony of the sequence $(V_k)$.


\section{Proof of Lemma \ref{LemmaBECBSC}}\label{ProofLemmaBECBSC}
In this section, we prove the claims stated in Lemma~\ref{LemmaBECBSC}. We start by showing claim (1) which consists to show that $\forall k \in \mathds{N}_{\textrm{odd}}$, $T_k \geq V_k \geq 0$. Let the sequence $(S_k)_{k \in  \mathds{N}_{\textrm{odd}}} $ defined as follows:

\begin{equation}
S_k \triangleq \dfrac{a_2^{k-1} - a^{k-1}}{a_2^2 - a^2}  \ ,
\end{equation}
with  $k-1 \triangleq 2 s $, then one can write that for all $k\geq 3$, 

\begin{equation} \label{SumDef}
S_k =  \sum_{j= 0}^{s-1}  a_2^{2\,j} a^{2\,(s-1-j)} \ .
\end{equation}
Now, we know that:
 \begin{IEEEeqnarray}{rCl}
  T_k &=& (1-e) \left(  1 -  S_{k+2} \right) + a_2^2 \, a^2 \,  S_k \ , \\
  V_k &=& e \, a_2^2 \, a_2 \,  S_{k-2} \ .
\end{IEEEeqnarray}
Let $ k \geq 3 $ for which we have that: 
\begin{equation} \label{EquationDiff}
  T_k  - V_k =  (1-e) \left(  1 -  S_{k+2} \right) + a_2^2 \, a^2 \,  ( S_k - e S_{k-2} )  \ .
\end{equation}
It is easy to check that: 
 \begin{IEEEeqnarray}{rCl}
    S_{k } &=& a^{k-3} + a_2^2 S_{k-2}  \ , \\
    S_{k+2} &=& a_2^{k-1} + a^{k-1} + a^2 \, a_2^2 S_{k-2} \ .
\end{IEEEeqnarray}
Thus, by  substituting these expressions in~\eqref{EquationDiff}, we end up with the next equality: 
 \begin{IEEEeqnarray}{rCl} 
  T_k - V_k &=&  (1-e) \left( 1 - a_2^{k-1} - a^{k-1} \right) \IEEEnonumber \\ 
	  &&   \qquad  + a_2^2 \, a^2 \,  ( S_k - S_{k-2} ) \\
  &=& (1-e) \left( 1 - a_2^{k-1} - a^{k-1} \right) \IEEEnonumber \\ 
	  &&   \qquad + a_2^2 \, a^2 \, ( a^{k-3} + ( a_2^2  -1) \,  S_{k-2} ) \ .
\end{IEEEeqnarray}
Now, from the choice of the system parameters~\eqref{ConstraintBECBSC}, we see that: 
 \begin{equation}
 \max\{ a, a_2^2 \} \leq 1-e \leq a_2 \ .
\end{equation}
Then, to lower bound $T_k - V_k $ we split into the following cases: 
 
(i) If  $1 - a_2^{k-1} - a^{k-1} \geq 0 $, then  
 \begin{IEEEeqnarray*}{rCl} 
 && T_k - V_k  \IEEEnonumber \\ 
	 &=& (1-e) \left( 1 - a_2^{k-1} - a^{k-1} \right)  \nonumber  \\
	 && \qquad \qquad   + a_2^2 \, a^2 \, ( a^{k-3} + ( a_2^2  -1) \,  S_{k-2} )  \nonumber \\
  &\geq& a_2^2 \left( 1 - a_2^{k-1} - a^{k-1} \right) + a_2^2 \, a^2 \, ( a^{k-3} + ( a_2^2  -1) \,  S_{k-2} ) \\
  &=&  a_2^2 \left( 1 - a_2^{k-1} + ( a_2^2 -1) \, a^2 \, S_{k-2} \right)  \\
  &=&  a_2^2 ( 1- a_2^2 ) \left( \dfrac{1 - a_2^{k-1}}{1- a_2^2} -  a^2 \, S_{k-2} \right) \\
  &\overset{(a)}{=}& a_2^2 ( 1- a_2^2 )  \left( \sum_{j= 0}^{s-1}  a_2^{2\,j} - a^2 \sum_{j=0}^{s-2}  a_2^{2\,j} a^{2\,(s-2-j)} \right) \\
  &=&  a_2^2 ( 1- a_2^2 )  \left(   \sum_{j= 0}^{s-1}  a_2^{2\,j} -  \sum_{j= 0}^{s-2}  a_2^{2\,j} a^{2\,(s-1-j)} \right)\\
  &=&  a_2^2 ( 1- a_2^2 )  \left(   a_2^{k-3} + \sum_{j= 0}^{s-2}  a_2^{2\,j} \underbrace{\left( 1 -  a^{2\,(s-1-j)} \right)}_{\geq0}  \right) \\
  &\geq& 0  \ .
\end{IEEEeqnarray*}
where $(a)$ comes from \eqref{SumDef} and some standard manipulations of multinomial coefficients. 

(ii) If $1 - a_2^{k-1} - a^{k-1} \leq 0 $, then
\begin{IEEEeqnarray*}{rCl} 
 &&  T_k - V_k \nonumber \\
   &=&  (1-e) \left( 1 - a_2^{k-1} - a^{k-1} \right) + a_2^2 \, a^2 \, ( a^{k-3} + ( a_2^2  -1) \,  S_{k-2} ) \nonumber \\
  &\geq& \left( 1 - a_2^{k-1} - a^{k-1} \right) + a_2^2 \, a^2 \, ( a^{k-3} + ( a_2^2  -1) \,  S_{k-2} ) \\ 
  &=& 1 - a_2^{k-1} -  a^{k-1} ( 1 - a_2^2 ) - a_2^2 \, a^2 ( 1 - a_2^2 ) \,  S_{k-2} \\
  &=& ( 1- a_2^2 ) \left( \dfrac{ 1 - a_2^{k-1} }{ 1 - a_2^2 } -  a^{k-1} - a_2^2 \, a^2 \,  S_{k-2} \right) \\
  &\overset{(a)}{\geq}& ( 1- a_2^2 ) \left( \dfrac{ 1 - a_2^{k-1} }{ 1 - a_2^2 } -  a^{k-1} - a_2^4  S_{k-2} \right) \\
  &=& ( 1- a_2^2 ) \left(  \sum_{j= 0}^{s-1}  a_2^{2\,j}  -  a^{k-1}  -  a_2^4 \sum_{j=0}^{s-2}  a_2^{2\,j} a^{2\,(s-2-j)}  \right) \\
  &=& ( 1- a_2^2 ) \left(  \sum_{j= 0}^{s-1}  a_2^{2\,j}  -  a^{k-1}  -  \sum_{j=0}^{s-2}  a_2^{2\,(j+2)} a^{2\,(s-2-j)}  \right) \\
  &=& ( 1- a_2^2 ) \left(  \sum_{j= 0}^{s-1}  a_2^{2\,j}  -  a^{k-1}  -  \sum_{j=2}^{s}  a_2^{2\,j} a^{2\,(s-j)}  \right) \\
  &=& ( 1- a_2^2 ) \left(   \underbrace{1 -  a^{k-1}}_{\geq 0} +   \underbrace{a_2^2 -  a_2^{2\,s } }_{\geq 0}+ \sum_{j=2}^{s-1}  a_2^{2\,j} \left(1 -a^{2\,(s-j)} \right)  \right) \nonumber  \\
  &\geq& 0   \ ,
\end{IEEEeqnarray*}
where $(a)$ comes from that $a_2 \geq a \geq 0$.  

This proves our claim. Next, we show that if $a^2 +a_2^2 \leq 1$ then ${(V_k)}_{k\geq 5}$ is decreasing for $k$ odd. Let $k$ be an odd integer such that $k \geq 5$. We have that: 
\begin{IEEEeqnarray}{rCl} 
  \dfrac{V_{k+2} - V_k }{e a^2a_2^2}= S_{k+2} - S_k   \ .
\end{IEEEeqnarray}
We check our last claim by induction, i.e., assuming $S_7 - S_5 \leq 0$ and  
\begin{equation*}
\forall k \geq 5 \ , \  S_{k+2} - S_k \leq 0 \qquad \textrm{ then } \qquad  S_{k+4} - S_{k+2} \leq 0 \ .
\end{equation*}
To this end, we have that: 
\begin{equation}
 S_7 - S_5 = a_2^2 (a^2 + a_2^2 -1) \leq 0 \ .
\end{equation}
Let then $k \geq 5$, such that $ S_{k+2} - S_k  \leq 0$, thus:
\begin{IEEEeqnarray}{rCl} 
 && S_{k+4} - S_{k+2}  \nonumber \\
 &=& a_2^{k+1} + (a^2 - 1) S_{k+2}\\
  &=& a_2^{k+1} + (a^2 - 1)\left( a_2^{k-1} + a^2 S_k \right)  \\
  &=& a_2^{k+1}  - a_2^{k-1}  +  a^2 \left( a_2^{k-1} + (a^2 -1) S_k \right) \\
  &=& \underbrace{ a_2^{k+1}  - a_2^{k-1} }_{\leq 0} +  a^2 \, \underbrace{ \left( S_{k+2} - S_k \right)}_{\leq 0} \\
  &\leq& 0 \ ,
\end{IEEEeqnarray}
which proves the claim. Finally, it is easy to verify that: 

\begin{equation}
 - \dfrac{2}{3} \alpha^3 T_3 = \sum^\infty_{\substack{k=5 \\ \textrm{$k$ odd}}} \alpha^3 T_3 \left( \dfrac{1}{k-2} -  \dfrac{1}{k} -  \dfrac{1}{k-4} + \dfrac{1}{k-2} \right)  \ ,
\end{equation}
by noticing 

\begin{equation}
 \sum^\infty_{\substack{k=5 \\ \textrm{$k$ odd}}}  \left( \dfrac{1}{k-2} -  \dfrac{1}{k} -  \dfrac{1}{k-4} + \dfrac{1}{k-2} \right)  = -\dfrac{2}{3} \ .
\end{equation} 


\section{Proof of Theorem \ref{ProductInverselyLN}}\label{ProofProduct}
	In this section, we prove the result on the product of the two inversely less-noisy BC with a more-noisy eavesdropper. 
	
	 \subsection{Proof of the achievability}
The achievability easily follows by evaluating the region: 
\begin{equation} 
 \left\{\begin{array}{rcl}
        R_1 &\leq& I( Q U_1; \mathbf{Y} ) - I( Q U_1;\mathbf{Z}  )   \ , \\
		R_2 &\leq& I( Q U_2; \mathbf{T} ) - I( Q U_2; \mathbf{Z}  )  \ , \\
		R_1 + R_2 &\leq&  I(U_1;\mathbf{Y}|Q) + I(QU_2;\mathbf{T} )  \nonumber \\
   &=& \quad- I(QU_1U_2;\mathbf{Z} ) - I(U_1;U_2|Q ) \ ,  \\
		R_1 + R_2 &\leq&  I(QU_1;\mathbf{Y} ) + I(U_2;\mathbf{T} |Q)  \nonumber \\
   &=& \quad- I(QU_1U_2;\mathbf{Z}) - I(U_1;U_2|Q)  \ , \\
		R_1 + R_2 &\leq& I( Q U_1; \mathbf{Y}  ) - I( Q U_1; \mathbf{Z} )+ I( Q U_2; \mathbf{T} )   \nonumber \\
   &=& \quad - I( Q U_2; \mathbf{Z}  )- I(U_1;U_2|\mathbf{Z}Q) \ ,    
 \end{array} \right. 
\end{equation}
based on the choices: $Q = (U_1 , U_2)$ and $U_1 = X_1$ and $U_2 = X_2$ such that  $P_{U_1X_1U_2X_2} = P_{U_1X_1}P_{U_2X_2} $. 

The single rate constraints write thus as: 
\begin{IEEEeqnarray}{rCl}  
	    R_1 &\leq&  I(X_1; Y_1) - I(X_1;Z_1) + I(U_2;Y_2) - I(U_2;Z_2)  \qquad  \\
	    &\overset{(a)}{=}&  I(X_1; Y_1|Z_1) + I(U_2;Y_2) - I(U_2;Z_2) \ , 
\end{IEEEeqnarray}
where $(a)$ is a result of that $Z_1$ is degraded towards $Y_1$.  
The sum-rates follow in a similar fashion, however the last sum-rate is redundant since: 
\begin{equation}
 I(X_1; X_2 | Z_1 Z_2 U_1 U_2) \leq I(X_1; X_2 | U_1 U_2) = 0 \ . 
\end{equation}

	\subsection{Proof of the converse}
Let us concatenate the two outputs $\mathbf{Y} = (Y_1, Y_2)$, $\mathbf{Z}= (Z_1,Z_2)$ and $\mathbf{T}= (T_1,T_2)$.  We start by single rate constraints. 

\subsubsection{Single-rate constraints}

By Fano's inequality and the secrecy constraint, we have that: 
\begin{IEEEeqnarray}{rCl}
n(R_1 - \epsilon_n) &\leq& I(W_1;\mathbf{Y}^n) - I(W_1;\mathbf{Z}^n)  \\
&\leq& I(W_1;\mathbf{Y}^n Z_1^n) - I(W_1;\mathbf{Z}^n)  \\
&=& I(W_1;\mathbf{Y}^n |Z_1^n) - I(W_1;Z_2^n| Z_1^n) \ . 
\end{IEEEeqnarray}
Thus, by standard Csisz\'ar \& K\"orner's sum-identity~\eqref{sec:csiszarkorner} and some basic manipulations, we get that: 

\begin{IEEEeqnarray}{rCl}
&& n(R_1 - \epsilon_n) \\
&\leq&  \sum^n_{i=1} \left[ I(W_1; \mathbf{Y}_i | Z_1^n \mathbf{Y}^{i-1} Z^n_{2,i+1}) \right. \nonumber \\
&& \left. \qquad  - I(W_1; Z_{2,i} | Z_1^n \mathbf{Y}^{i-1} Z^n_{2,i+1})\right] \\
&=& \sum^n_{i=1}\left[ I(W_1; Y_{1,i} Y_{2,i} | Z_1^n \mathbf{Y}^{i-1} Z^n_{2,i+1})   \right. \nonumber \\
&& \left. \qquad - I(W_1; Z_{2,i} | Z_1^n \mathbf{Y}^{i-1} Z^n_{2,i+1})\right] \\
&=& \sum^n_{i=1}\left[ I(W_1; Y_{2,i} | Z_1^n \mathbf{Y}^{i-1} Z^n_{2,i+1})  \right. \nonumber \\
&& \left. \qquad  - I(W_1; Z_{2,i} | Z_1^n \mathbf{Y}^{i-1} Z^n_{2,i+1})\right. \nonumber \\
&&   \vphantom{\sum^n_{i=1}} \left.   \qquad  + I(W_1; Y_{1,i} | Y_{2,i} Z_1^n \mathbf{Y}^{i-1} Z^n_{2,i+1}) \right]   \\
&\overset{(a)}{=}& \sum^n_{i=1}\left[ I(W_1; Y_{2,i} | Z_{1,i}^n \mathbf{Y}^{i-1} Z^n_{2,i+1}) \right. \nonumber \\
&& \left. \qquad  - I(W_1; Z_{2,i} | Z_{1,i}^n \mathbf{Y}^{i-1} Z^n_{2,i+1})\right. \nonumber \\
&&   \vphantom{\sum^n_{i=1}}  \left.   \qquad  + I(W_1; Y_{1,i} | Y_{2,i} Z_{1,i}^n \mathbf{Y}^{i-1} Z^n_{2,i+1}) \right]  \\
&\overset{(b)}{\leq}& \sum^n_{i=1} \left[ I(W_1; Y_{2,i} | Z_{1,i}^n \mathbf{Y}^{i-1} Z^n_{2,i+1}) \right. \nonumber \\
&& \left. \qquad - I(W_1; Z_{2,i} | Z_{1,i}^n \mathbf{Y}^{i-1} Z^n_{2,i+1})\right. \nonumber \\
&&  \vphantom{\sum^n_{i=1}} \left. \qquad  + I(X_{1,i}; Y_{1,i} | Z_{1,i} ) \right] \ , 
\end{IEEEeqnarray}
where $(a)$ follows from that $Z_1$ is degraded respect to $Y_1$ and $(b)$ comes from the Markov chain:  
\begin{equation}
 (Z_1^{i-1},\mathbf{Y}^{i-1},Z^n_{2,i+1} ,Y_{2,i}) \mkv  X_{1,i}  \mkv  (Y_{1,i}, Z_{1,i})  \ . 
\end{equation}
Thus, letting $U_{2,i} = W_1$ and $V_2 = (Z_{1,i}^n, \mathbf{Y}^{i-1}, Z^n_{2,i+1} )$ we can simply get the rate constraint: 
\begin{equation}
R_1 \leq I(X_1; Y_1 | Z_1) + I(U_2; Y_2 |V_2) - I(U_2; Z_2 | V_2) \ . 
\end{equation}

 \subsubsection{Sum-rate constraint} ~\\ 
We start by writing: 
\begin{IEEEeqnarray}{rCl}
&&  n (R_1 + R_2 - \epsilon_n )   \nonumber \\
   &\leq& I(W_1 ; \mathbf{Y}^n ) - I(W_1 ; \mathbf{T}^n \mathbf{Z}^n) + I(W_1 W_2;\mathbf{T}^n \mathbf{Z}^n) \\
&\leq& I(W_1 ; \mathbf{Y}^n Z^n_1 ) - I(W_1 ; \mathbf{T}^n \mathbf{Z}^n) + I(W_1 W_2; \mathbf{T}^n \mathbf{Z}^n)\qquad \\
&\overset{(a)}{=}& I(W_1 ;\mathbf{Y}^n |Z^n_1 ) - I(W_1 ; \mathbf{T}^n Z_2^n | Z_1^n)   \nonumber \\
&& \qquad  + I(W_1 W_2; \mathbf{T}^n Z_2^n | Z_1^n) + n\epsilon_n  \ , \ \ \ 
\end{IEEEeqnarray}
where $(a)$ follows from the secrecy constraint.  By standard manipulations, similarly to those used in the proof of the outer bound in Section~\ref{SumRatesProof}, write that:
\begin{IEEEeqnarray}{rCl}
&&  n (R_1 + R_2 - \epsilon_n )   \nonumber \\
  &\leq&  \sum^n_{i=1} \left[ I( W_1 \mathbf{T}^n_{i+1} ; Y_i |Z^n_1 \mathbf{Y}^{i-1} Z_{2, i+1}^n )  \right. \nonumber \\ 
  &&  \left. -  I( W_1 \mathbf{T}^n_{i+1} ; \mathbf{T}_i Z_{2,i}| Z^n_1 \mathbf{Y}^{i-1} Z_{2, i+1}^n ) \right. \nonumber \\ 
  &&    \vphantom{\sum^n } \left.   + I(W_1 W_2 \mathbf{T}^n_{i+1}Z_2^{i-1} ; \mathbf{T}_i |  Z_{2,i}  Z^n_1 \mathbf{Y}^{i-1} Z_{2, i+1}^n )\right]  \\
  &=& \sum^n_{i=1}\left[   I( W_1 \mathbf{T}^n_{i+1} ;  Y_{1,i}   Y_{2,i} |Z^n_1 \mathbf{Y}^{i-1} Z_{2, i+1}^n )  \right. \nonumber \\ 
  &&    \left.- I( W_1 \mathbf{T}^n_{i+1} ;  T_{1,i}  T_{2,i} Z_{2,i}| Z^n_1 \mathbf{Y}^{i-1} Z_{2, i+1}^n )  \right. \nonumber \\ 
  &&    \vphantom{\sum^n } \left.  + I(W_1 W_2 \mathbf{T}^n_{i+1}Z_2^{i-1} ;  T_{1,i}  T_{2,i} | Z_{2,i} Z^n_1 \mathbf{Y}^{i-1} Z_{2, i+1}^n )\right]   \\
  &=& \sum^n_{i=1} \left[ I( W_1 \mathbf{T}^n_{i+1} ;  Y_{2,i} |Z^n_1 \mathbf{Y}^{i-1} Z_{2, i+1}^n ) \right. \nonumber \\ 
  &&    \left. -  I( W_1 \mathbf{T}^n_{i+1} ;  T_{2,i} Z_{2,i}| Z^n_1 \mathbf{Y}^{i-1} Z_{2, i+1}^n )  \right. \nonumber \\ 
  &&    \vphantom{\sum^n } \left.  + I(W_1 W_2 \mathbf{T}^n_{i+1}Z_2^{i-1} ;   T_{2,i} | Z_{2,i} Z^n_1 \mathbf{Y}^{i-1} Z_{2, i+1}^n )  \right. \nonumber \\ 
  &&   \vphantom{\sum^n }  \left.  +  I( W_1 \mathbf{T}^n_{i+1} ;  Y_{1,i} | Y_{2,i} Z^n_1 \mathbf{Y}^{i-1} Z_{2, i+1}^n )  \right. \nonumber \\ 
  &&   \vphantom{\sum^n }  \left. - I( W_1 \mathbf{T}^n_{i+1} ;   T_{1,i} | T_{2,i} Z_{2,i}Z^n_1 \mathbf{Y}^{i-1} Z_{2, i+1}^n )  \right. \nonumber \\ 
  &&  \vphantom{\sum^n } \left.  + I(W_1 W_2 \mathbf{T}^n_{i+1}Z_2^{i-1} ;  T_{1,i} | T_{2,i}Z_{2,i} Z^n_1 \mathbf{Y}^{i-1} Z_{2, i+1}^n )\right]  \qquad    \\
  &\leq&  \sum^n_{i=1} \left[ I( W_1 \mathbf{T}^n_{i+1} ;  Y_{2,i} |Z^n_1 \mathbf{Y}^{i-1} Z_{2, i+1}^n )  \right. \nonumber \\ 
  &&  \left. -  I( W_1 \mathbf{T}^n_{i+1} ;  T_{2,i} Z_{2,i}| Z^n_1 \mathbf{Y}^{i-1} Z_{2, i+1}^n )  \right. \nonumber \\ 
  &&  \vphantom{\sum^n } \left. + I(X_{2,i} ;   T_{2,i} | Z_{2,i} Z^n_1 \mathbf{Y}^{i-1} Z_{2, i+1}^n )  \right. \nonumber \\ 
  &&  \vphantom{\sum^n } \left.  +  I( W_1 \mathbf{T}^n_{i+1} ; Y_{1,i} | Y_{2,i} Z^n_1 \mathbf{Y}^{i-1} Z_{2, i+1}^n )  \right. \nonumber \\ 
  && \vphantom{\sum^n } \left. - I( W_1 \mathbf{T}^n_{i+1} ;  T_{1,i} | T_{2,i} Z_{2,i}Z^n_1 \mathbf{Y}^{i-1} Z_{2, i+1}^n )   \right. \nonumber \\ 
  &&  \vphantom{\sum^n } \left.  + I(W_1 W_2 \mathbf{T}^n_{i+1}Z_2^{i-1} ;  {T_{1,i} } | T_{2,i}Z_{2,i} Z^n_1 \mathbf{Y}^{i-1} Z_{2, i+1}^n )  \right] \ . 
\end{IEEEeqnarray} 
On one hand, we observe that:
\begin{IEEEeqnarray}{rCl}
 && I( W_1 \mathbf{T}^n_{i+1} ;  Y_{1,i} | Y_{2,i} Z^n_1 \mathbf{Y}^{i-1} Z_{2, i+1}^n )    \nonumber \\ 
  && \quad   - I( W_1 \mathbf{T}^n_{i+1} ;  T_{1,i} | T_{2,i} Z_{2,i}Z^n_1 \mathbf{Y}^{i-1} Z_{2, i+1}^n )   \nonumber \\ 
  && \quad   + I(W_1 W_2 \mathbf{T}^n_{i+1}Z_2^{i-1} ;  T_{1,i} | T_{2,i}Z_{2,i} Z^n_1 \mathbf{Y}^{i-1} Z_{2, i+1}^n )   \nonumber\\ 
  &=& I( W_1 \mathbf{T}^n_{i+1}  ; Y_{1,i} | Y_{2,i} Z^n_1 \mathbf{Y}^{i-1} Z_{2, i+1}^n )   \nonumber \\ 
  && \quad   + I( W_2 Z_2^{i-1} ;  T_{1,i}   | T_{2,i}Z_{2,i} Z^n_1 \mathbf{Y}^{i-1} Z_{2, i+1}^n W_1 \mathbf{T}^n_{i+1} ) \,\,\, \,\,\, \,\,\, \,\,\,  \\
  &\overset{(a)}{=}& I( W_1 \mathbf{T}^n_{i+1}; Y_{1,i} | Y_{2,i} Z^n_1 \mathbf{Y}^{i-1} Z_{2, i+1}^n )    \nonumber \\ 
  && \quad   + I( W_2 Z_2^{i-1} ;  T_{1,i}   | T_{2,i} Z^n_1 \mathbf{Y}^{i-1} Z_{2, i+1}^n W_1 \mathbf{T}^n_{i+1} ) \\
  &\leq& I( W_1 \mathbf{T}^n_{i+1};   Y_{1,i}  | Y_{2,i} Z^n_1 \mathbf{Y}^{i-1} Z_{2, i+1}^n )    \nonumber \\ 
  && \quad   + I( X_{1,i} ;  T_{1,i} | T_{2,i} Z^n_1 \mathbf{Y}^{i-1} Z_{2, i+1}^n  W_1 \mathbf{T}^n_{i+1} ) \ , 
\end{IEEEeqnarray} 
where $(a)$ follows from that $Z_2$ is degraded respect to $T_2$. On the other hand, we have that:
\begin{IEEEeqnarray}{rcl}
&& I(X_{1,i}; T_{1,i} |  T_{2,i}  Z_{1,i}) \nonumber \\ 
& \overset{(a)}{=}& I(X_{1,i}; T_{1,i}  |Z_{1,i}) - I( T_{2,i} ; T_{1,i}|Z_{1,i} ) \\
& \overset{(b)}{\leq}& I(X_{1,i}; T_{1,i}|Z_{1,i} ) - I( Y_{2,i}  ; T_{1,i} |Z_{1,i}) \\
&=& I(X_{1,i}; T_{1,i} |  Y_{2,i} Z_{1,i}) \ , 
\end{IEEEeqnarray}
where $(a)$ and $(b)$ follow from the Markov chains: 
\begin{equation}
  (Y_{2,i}, T_{2,i}) \mkv X_{1,i} \mkv (Y_{1,i}, Z_{1,i})  
\end{equation}
and \begin{equation}
 (Y_{1,i},Z_{1,i})  \mkv X_{2,i}  \mkv(Y_{2,i}, T_{2,i}) \ , 
\end{equation}
and thus this implies that $T_2$ is less-noisy than $Y_2$. From this observation, we have:
\begin{IEEEeqnarray}{rCl}
 && I(W_1 \mathbf{T}^n_{i+1} ;   Y_{1,i}  | Y_{2,i} Z^n_1 \mathbf{Y}^{i-1} Z_{2, i+1}^n )    \nonumber \\ 
  && \qquad  + I( X_{1,i} ;  T_{1,i} | T_{2,i} Z^n_1 \mathbf{Y}^{i-1} Z_{2, i+1}^nW_1 \mathbf{T}^n_{i+1} ) \IEEEnonumber \\
  &\leq&  I( W_1 \mathbf{T}^n_{i+1} ;   Y_{1,i}  | Y_{2,i} Z^n_1 \mathbf{Y}^{i-1} Z_{2, i+1}^n )   \nonumber \\ 
  && \qquad  + I( X_{1,i} ;  T_{1,i} |  Y_{2,i} Z^n_1 \mathbf{Y}^{i-1} Z_{2, i+1}^n W_1 \mathbf{T}^n_{i+1}) \\
  &=& I( X_{1,i} ;  T_{1,i} | Y_{2,i} Z^n_1 \mathbf{Y}^{i-1} Z_{2, i+1}^n ) \\
  &\leq& I( X_{1,i} ;  T_{1,i} |  Z_{1,i} ) \ . 
\end{IEEEeqnarray} 
Then, letting $S_{2,i} = \mathbf{T}^n_{i+1}$, the resulting sum-rate reads as: 
\begin{IEEEeqnarray}{rCl}
 R_1 + R_2 &\leq& I(X_1; Y_1|Z_1) + I(U_2 S_2; Y_2|V_2)   \nonumber \\ 
  &&  - I(U_2 S_2; T_2 Z_2|V_2) + I(X_2;T_2 |Z_2 V_2)\ .\,\,\, 
\end{IEEEeqnarray}
The variable $S_2$ can be eliminated in a similar manner as we already did  in Section~\ref{EliminationS}.  Since, $Y_2$ is less-noisy than $Z_2$ and so is $T_1$ towards $Z_1$, then we can show the converse of the region by letting $U_2 \equiv (U_2,V_2)$ and $U_1 \equiv (U_1,V_1)$. 

\section*{Acknowledgment}
The authors are grateful to the Associate Editor Prof. Yingbin Liang and to anonymous reviewers for very constructive comments and suggestions on the earlier version of the paper, which has significantly improved its quality.

\bibliographystyle{IEEEtran}
\bibliography{FirstBibJabRef} 

\begin{thebibliography}{10}
\providecommand{\url}[1]{#1}
\csname url@samestyle\endcsname
\providecommand{\newblock}{\relax}
\providecommand{\bibinfo}[2]{#2}
\providecommand{\BIBentrySTDinterwordspacing}{\spaceskip=0pt\relax}
\providecommand{\BIBentryALTinterwordstretchfactor}{4}
\providecommand{\BIBentryALTinterwordspacing}{\spaceskip=\fontdimen2\font plus
\BIBentryALTinterwordstretchfactor\fontdimen3\font minus
  \fontdimen4\font\relax}
\providecommand{\BIBforeignlanguage}[2]{{%
\expandafter\ifx\csname l@#1\endcsname\relax
\typeout{** WARNING: IEEEtran.bst: No hyphenation pattern has been}%
\typeout{** loaded for the language `#1'. Using the pattern for}%
\typeout{** the default language instead.}%
\else
\language=\csname l@#1\endcsname
\fi
#2}}
\providecommand{\BIBdecl}{\relax}
\BIBdecl

\bibitem{Shannon1949}
\BIBentryALTinterwordspacing
C.~E. Shannon, ``Communication theory of secrecy systems,'' \emph{Bell System
  Technical Journal}, vol.~28, no.~4, pp. 656--715, Oct. 1949. [Online].
  Available: \url{http://dx.doi.org/10.1002/j.1538-7305.1949.tb00928.x}
\BIBentrySTDinterwordspacing

\bibitem{Wyner1975}
A.~D. Wyner, ``The wire-tap channel,'' \emph{Bell System Technical Journal},
  vol.~54, no.~8, pp. 1355--1387, Oct. 1975.

\bibitem{CsiszarConfidential}
I.~Csiszar and J.~Korner, ``Broadcast channels with confidential messages,''
  \emph{Information Theory, IEEE Transactions on}, vol.~24, no.~3, pp.
  339--348, 1978.

\bibitem{Ozel2011}
O.~Ozel and S.~Ulukus, ``Wiretap channels: Roles of rate splitting and channel
  prefixing,'' in \emph{Information Theory Proceedings (ISIT), 2011 IEEE
  International Symposium on}, 2011, pp. 628--632.

\bibitem{Liang2006}
Y.~Liang and H.~Poor, ``Generalized multiple access channels with confidential
  messages,'' in \emph{Information Theory, 2006 IEEE International Symposium
  on}, 2006, pp. 952--956.

\bibitem{Liang-Poor-Shamai2009}
Y.~Liang, H.~V. Poor, and S.~Shamai~(Shitz), ``Physical layer security in
  broadcast networks,'' \emph{Security and Communication Networks, Wiley},
  vol.~2, no.~5, pp. 227--238, 2009.

\bibitem{LiuMaric2008}
R.~Liu, I.~Maric, P.~S, and R.~Yates, ``Discrete memoryless interference and
  broadcast channels with confidential messages: Secrecy rate regions,''
  \emph{Information Theory, IEEE Transactions on}, vol.~54, no.~6, pp.
  2493--2507, 2008.

\bibitem{Zhao2009}
Y.~Zhao, P.~Xu, Y.~Zhao, W.~Wei, and Y.~Tang, ``Secret communications over
  semi-deterministic broadcast channels,'' in \emph{Communications and
  Networking in China, 2009. ChinaCOM 2009. Fourth International Conference
  on}, 2009, pp. 1--4.

\bibitem{Kang2009}
W.~Kang and N.~Liu, ``The secrecy capacity of the semi-deterministic broadcast
  channel,'' in \emph{Information Theory, 2009. ISIT 2009. IEEE International
  Symposium on}, 2009, pp. 2767--2771.

\bibitem{4787620}
R.~Liu and H.~Poor, ``{Secrecy Capacity Region of a Multiple-Antenna Gaussian
  Broadcast Channel With Confidential Messages},'' \emph{Information Theory,
  IEEE Transactions on}, vol.~55, no.~3, pp. 1235--1249, 2009.

\bibitem{5550390}
R.~Liu, T.~Liu, H.~Poor, and S.~Shamai, ``Multiple-input multiple-output {
  Gaussian} broadcast channels with confidential messages,'' \emph{Information
  Theory, IEEE Transactions on}, vol.~56, no.~9, pp. 4215--4227, 2010.

\bibitem{5730586}
E.~Ekrem and S.~Ulukus, ``{The Secrecy Capacity Region of the Gaussian MIMO
  Multi-Receiver Wiretap Channel},'' \emph{Information Theory, IEEE
  Transactions on}, vol.~57, no.~4, pp. 2083--2114, 2011.

\bibitem{ITS-book}
Y.~Liang, H.~V. Poor, and S.~Shamai~(Shitz), \emph{Information Theoretic
  Security}.\hskip 1em plus 0.5em minus 0.4em\relax Foundations and Trends in
  Communications and Information Theory, Now Publishers, Hanover, MA, USA,
  2008, vol.~5, no. 4-5.

\bibitem{6582704}
R.~Bassily, E.~Ekrem, X.~He, E.~Tekin, J.~Xie, M.~Bloch, S.~Ulukus, and
  A.~Yener, ``Cooperative security at the physical layer: A summary of recent
  advances,'' \emph{Signal Processing Magazine, IEEE}, vol.~30, no.~5, pp.
  16--28, 2013.

\bibitem{EkremUlukus}
\BIBentryALTinterwordspacing
E.~Ekrem and S.~Ulukus, ``Secrecy capacity of a class of broadcast channels
  with an eavesdropper,'' \emph{EURASIP Journal on Wireless Communications and
  Networking}, vol. 2009, no.~1, pp. 824\,235--, 2009. [Online]. Available:
  \url{http://jwcn.eurasipjournals.com/content/2009/1/824235}
\BIBentrySTDinterwordspacing

\bibitem{EU13}
------, ``Multi-receiver wiretap channel with public and confidential
  messages,'' \emph{Information Theory, IEEE Transactions on}, vol.~59, no.~4,
  pp. 2165--2177, 2013.

\bibitem{KhandaniGaussian}
G.~Bagherikaram, A.~Motahari, and A.~Khandani, ``Secrecy capacity region of
  {Gaussian} broadcast channel,'' in \emph{Information Sciences and Systems.
  CISS 2009. 43rd Annual Conference on}, 2009, pp. 152--157.

\bibitem{NairELGamal2006}
C.~Nair and A.~El~Gamal, ``An outer bound to the capacity region of the
  broadcast channel,'' in \emph{Information Theory, 2006 IEEE International
  Symposium on}, 2006, pp. 2205--2209.

\bibitem{MartonInnerBound79}
K.~Marton, ``A coding theorem for the discrete memoryless broadcast channel,''
  \emph{Information Theory, IEEE Transactions on}, vol.~25, no.~3, pp.
  306--311, 1979.

\bibitem{Wyner1973}
A.~Wyner and J.~Ziv, ``A theorem on the entropy of certain binary sequences and
  applications--{I},'' \emph{Information Theory, IEEE Transactions on},
  vol.~19, no.~6, pp. 769--772, 1973.

\bibitem{ElGamal02}
A.~Gamal, ``The capacity of a class of broadcast channels,'' \emph{Information
  Theory, IEEE Transactions on}, vol.~25, no.~2, pp. 166--169, 1979.

\bibitem{LiangKramerEquivalence2008}
Y.~Liang, G.~Kramer, and H.~Poor, ``Equivalence of two inner bounds on the
  capacity region of the broadcast channel,'' in \emph{Communication, Control,
  and Computing, 2008 46th Annual Allerton Conference on}, 2008, pp.
  1417--1421.

\bibitem{Ulukus}
S.~Shafiee and S.~Ulukus, ``{Achievable Rates in Gaussian MISO Channels with
  Secrecy Constraints},'' in \emph{Information Theory, 2007. ISIT 2007. IEEE
  International Symposium on}, 2007, pp. 2466--2470.

\bibitem{ElGamalProductSum}
A.~El~Gamal, ``Capacity of the product and sum of two unmatched brodcast
  channels,'' \emph{Probl. Peredachi Inf.}, vol.~16, pp. 3--23, 1980.

\bibitem{Geng2011}
Y.~Geng, A.~Gohari, C.~Nair, and Y.~Yu, ``The capacity region for two classes
  of product broadcast channels,'' in \emph{Information Theory Proceedings
  (ISIT), 2011 IEEE International Symposium on}, 2011, pp. 1544--1548.

\bibitem{NairBECBSC2010}
C.~Nair, ``Capacity regions of two new classes of two-receiver broadcast
  channels,'' \emph{Information Theory, IEEE Transactions on}, vol.~56, no.~9,
  pp. 4207--4214, 2010.

\bibitem{cover2006elements}
T.~Cover and J.~Thomas, \emph{Elements of information theory (2nd Ed)}.\hskip
  1em plus 0.5em minus 0.4em\relax Wiley-Interscience, 2006.

\bibitem{csiszar1982information}
I.~Csisz{\'{a}}r and J.~K{\"{o}r}ner, \emph{Information theory: coding theorems
  for discrete memoryless systems}.\hskip 1em plus 0.5em minus 0.4em\relax
  Academic, New York, 1981.

\end{thebibliography}


	\end{document}